\newtheorem{theorem}{Theorem}[section]
\newtheorem{lemma}[theorem]{Lemma}
\newtheorem{proposition}[theorem]{Proposition}
\newtheorem{corollary}[theorem]{Corollary}
\newtheorem{definition}[theorem]{Definition}
\newtheorem{example}[theorem]{Example}
\newtheorem*{proof}{Proof}
\newcommand{\Paths}{\ensuremath{\mathit{Paths}}}
\newcommand{\LabPaths}{\ensuremath{\mathit{LabPaths}}}
\newcommand{\shuffle}{\mathbin{\|}}
\renewcommand{\shuffle}{\mathbin{|\hspace{-0.1em}|}}
\newcommand{\dblslash}{\ensuremath{/\!/}}
\newcommand{\wc}{\mathord{\star}}
\newcommand{\lab}{\mathit{lab}}
\renewcommand{\root}{\mathit{root}}
\newcommand{\child}{\mathit{child}}
\newcommand{\ch}{\mathit{ch}}
\newcommand{\desc}{\mathit{desc}}
\newcommand{\universal}{\mathit{non\_nullable}}
\newcommand{\existential}{\mathit{nullable}}
\newcommand{\minnb}{\mathit{min\_nb}}
\newcommand{\Twig}{\ensuremath{\mathit{Twig}}}
\newcommand{\Tree}{\ensuremath{\mathit{Tree}}}
\newcommand{\CNF}{\mathit{3CNF}}
\newcommand{\sat}{\mathrm{3SAT}}
\newcommand{\CNT}{\mathrm{CNT}}
\newcommand{\SAT}{\mathrm{SAT}}
\newcommand{\IMPL}{\mathrm{IMPL}}
\newcommand{\MEMB}{\mathrm{MEMB}}
\newcommand{\DTD}{\mathit{DTD}}
\newcommand{\ms}{\mathit{MS}}
\newcommand{\dms}{\mathit{DMS}}
\newcommand{\cardinality}{\mathit{cardinality}}
\newcommand{\conflict}{\mathit{conflict}}
\newcommand{\required}{\mathit{required}}
\newcommand{\mcount}{\mathit{count}}
\newcommand{\pconflict}{\mathit{present\_conflict}}
\newcommand{\prequired}{\mathit{present\_required}}
\newcommand{\open}{\mathit{open\_tag}}
\newcommand{\close}{\mathit{close\_tag}}
\newenvironment{changemargin}[2]{%
  \begin{list}{}{%
    \setlength{\topsep}{0pt}%
    \setlength{\leftmargin}{#1}%
    \setlength{\rightmargin}{#2}%
    \setlength{\listparindent}{\parindent}%
    \setlength{\itemindent}{\parindent}%
    \setlength{\parsep}{\parskip}%
  }%
  \item[]}{\end{list}}
\def\qed {{                
   \parfillskip=0pt        
   \widowpenalty=10000     
   \displaywidowpenalty=10000  
   \finalhyphendemerits=0  
                           %
   \leavevmode             
   \unskip                 
   \nobreak                
   \hfil                   
   \penalty50              
   \hskip.2em              
   \null                   
   \hfill                  
   $\square$
                           %
   \par}}                  
\newcounter{LineCounter@algorithm} 
\newenvironment{BasicCommands@algorithm}{%
  \newcommand{\TAB}{\makebox[4ex][r]{}}%
  \newcommand{\ALGORITHM}{\textbf{algorithm}\xspace}%
  \newcommand{\INPUT}{\textbf{Input}\xspace}%
  \newcommand{\OUTPUT}{\textbf{Output}\xspace}%
  \newcommand{\IF}{\textbf{if}\xspace}%
  \newcommand{\THEN}{\textbf{then}\xspace}%
  \newcommand{\AND}{\textbf{and}\xspace}%
  \newcommand{\REJECT}{\textbf{reject}\xspace}%
  \newcommand{\ACCEPT}{\textbf{accept}\xspace}%
}{%
}
\newenvironment{algorithm*}%
{%
\begin{BasicCommands@algorithm}%
\list{}{\itemindent 0em%
        \listparindent\itemindent
        \rightmargin  \leftmargin}%
\item\relax
}{%
\endlist
\end{BasicCommands@algorithm}%
}
\newenvironment{algorithm}%
{%
  \newcommand{\ResetLineCounter}{%
    \setcounter{LineCounter@algorithm}{0}%
  }%
  \newcommand{\LN}{%
    \makebox[0pt][l]{%
      \makebox[0pt][l]{%
        \addtocounter{LineCounter@algorithm}{1}%
      }%
      \makebox[10.75pt][r]{
        \fontsize{7}{10}%
        \selectfont%
        \arabic{LineCounter@algorithm}:%
      }%
    }%
    \TAB%
  }%
  \begin{BasicCommands@algorithm}%
  \begin{float@algorithm}%
    \ResetLineCounter%
}%
{%
  \end{float@algorithm}%
  \end{BasicCommands@algorithm}%
}
\begin{document}
\title{Simple Schemas for Unordered XML}
\author{Iovka Boneva \and Radu Ciucanu \and S\l awek Staworko}
\date{University of Lille \& INRIA, France}
\maketitle
\thispagestyle{empty}
\begin{abstract}
  We consider unordered XML, where the relative order among siblings
  is ignored, and propose two simple yet practical schema formalisms: 
  \emph{disjunctive multiplicity schemas} (DMS),
  and its restriction, \emph{disjunction-free multiplicity schemas}
  (MS). We investigate their computational properties and characterize
  the complexity of the following static analysis problems: schema
  satisfiability, membership of a tree to the language of a schema,
  schema containment, twig query satisfiability, implication, and
  containment in the presence of schema. Our research indicates that
  the proposed formalisms retain much of the expressiveness of DTDs
  without an increase in computational complexity.
\end{abstract}

\section{Introduction}
When XML is used for \emph{document-centric} applications, the
relative order among the elements is typically important e.g., the
relative order of paragraphs and chapters in a book. On the other
hand, in case of \emph{data-centric} XML applications, the order among
the elements may be unimportant~\cite{AbBoVi12}. In this paper we
focus on the latter use case. As an example, take a trivialized
fragment of an XML document containing the DBLP repository in
Figure~\ref{fig:dblp}. While the order of the elements {\sl title},
{\sl author}, and {\sl year} may differ from one publication to
another, it has no impact on the semantics of the data stored in this
semi-structured database.

A \emph{schema} for XML is a description of the type of admissible
documents, typically defining for every node its \emph{content model}
i.e., the children nodes it must, may, or cannot contain. For instance,
in the DBLP example, we shall require every \textsl{article} to have
exactly one \textsl{title}, one \textsl{year}, and one or more
\textsl{author}'s. A \textsl{book} may additionally contain one
\textsl{publisher} and may also have one or more \textsl{editor}'s
instead of \textsl{author}'s. A schema has numerous important uses.
For instance, it allows to validate a document against a schema and
identify potential errors. A schema also serves as a reference for any
user who does not know yet the structure of the XML document and
attempts to query or modify its contents.

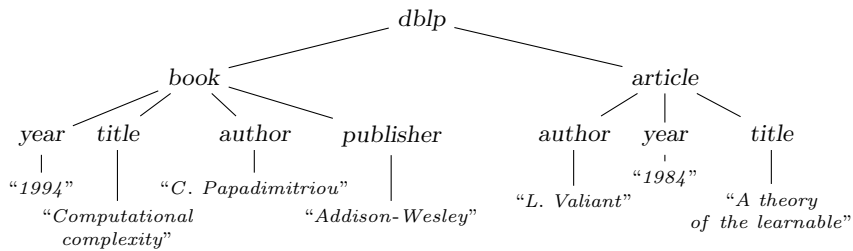
\begin{figure}
  \centering
  \begin{tikzpicture}[yscale=0.75,xscale=2]\small
		\node at (0,0) (n0) {\sl dblp};
		\node at (-1.5,-1) (n10) {\sl book}edge[-] (n0);
		\node at (-2.5, -2.1) (m111) {\sl year}edge[-] (n10);
		\node at (-2, -2) (m112) {\sl title}edge[-] (n10);
		\node at (-1.1, -2) (m113) {\sl author}edge[-] (n10);
		\node at (-0.2, -2.05) (m114) {\sl publisher}edge[-] (n10);
		\node at (-2.5,-3) {\scriptsize ``$\mathit{1994}$''} edge[-](m111);
		\node at (-2,-3.5) {\scriptsize ``$\mathit{Computational}$} edge[-] (m112);
		\node at (-2,-3.9) {\scriptsize $\mathit{complexity}$''};
		\node at (-1.1,-3) {\scriptsize ``$\mathit{C.\  Papadimitriou}$''} edge[-] (m113);
		\node at(-0.2, -3.5) {\scriptsize ``$\mathit{Addison}$-$\mathit{Wesley}$''} edge[-] (m114);
		\node at (1.6,-1) (n11) {\sl article}edge[-] (n0);
		\node at (1, -2) (m111) {\sl author}edge[-] (n11);
		\node at (1.6, -2.1) (m112) {\sl year}edge[-] (n11);
		\node at (2.3, -2) (m113) {\sl title}edge[-] (n11);
		\node at (1,-3.2) {\scriptsize``$\mathit{L.\  Valiant}$''} edge[-](m111);
		\node at (1.6,-2.8) {\scriptsize``$\mathit{1984}$''} edge[-] (m112);
		\node at (2.3,-3.2) {\scriptsize``$\mathit{A\  theory}$} edge[-] (m113);
		\node at (2.3,-3.6) {\scriptsize $\mathit{of\ the\ learnable}$''};
  \end{tikzpicture}
  \caption{\label{fig:dblp}A trivialized DBLP repository.}
\end{figure}

The \emph{Document Type Definition (DTD)}, the most widespread XML
schema formalism for (ordered) XML \cite{BeNeVa04,GrMa11}, is
essentially a set of rules associating with each label a regular
expression that defines the admissible sequences of children. The DTDs are
best fitted towards ordered content because they use regular
expressions, a formalism that defines sequences of labels. However,
when unordered content model needs to be defined, there is a tendency
to use \emph{over-permissive} regular expressions. For instance, the
DTD below corresponds to the one used in practice for the DBLP
repository:
\[
\begin{tabular}{rcl}
{\sl dblp} & $\rightarrow$ & ({\sl article $\mid$ book})$^*$\\
{\sl article} & $\rightarrow$ & ({\sl title $\mid$ year $\mid$ author})$^*$ \\
{\sl book} & $\rightarrow$ & ({\sl title $\mid$ year $\mid$ author $\mid$ editor $\mid$ publisher})$^*$
\end{tabular}
\]
This DTD allows an {\sl article} to contain any number of {\sl title},
{\sl year}, and {\sl author} elements. A {\sl book} may also have any
number of {\sl title}, {\sl year}, {\sl author}, {\sl editor}, and
{\sl publisher} elements. These regular expressions are clearly
over-permissive because they allow documents that do not follow the
intuitive guidelines set out earlier e.g., a document containing an
{\sl article} with two {\sl title}'s and no {\sl author} should not be
admissible.

While it is possible to capture unordered content models with regular
expressions, a simple pumping argument shows that their size may need
to be exponential in the number of possible labels of the children. In
case of the DBLP repository, this number reaches values up to 12,
which basically precludes any practical use of such regular
expressions. This suggests that over-permissive regular expressions
may be employed for the reasons of conciseness and readability.


The use of over-permissive regular expressions, apart from allowing
documents that do not follow the guidelines, has other negative
consequences e.g., in static analysis tasks that involve the
schema. Take for example the following two twig
queries~\cite{ACLS02,XPath1}:
\begin{align*}
&/\mbox{\sl dblp}/\mbox{\sl book}[\mbox{\sl author}=\mbox{``$\mathit{C.\ Papadimitriou}$''}]\\
&/\mbox{\sl dblp}/\mbox{\sl book}[\mbox{\sl author}=\mbox{``$\mathit{C.\ Papadimitriou}$''}][\mbox{\sl title}]
\end{align*}
The first query selects the elements labeled {\sl book}, children of
{\sl dblp} and having an {\sl author} containing the text
``\emph{C.\ Papadimitriou.}'' The second query additionally requires
that {\sl book} has a {\sl title}. Naturally, these two queries
should be equivalent because every {\sl book} element should have a
{\sl title} child. However, the DTD above does not capture
properly this requirement, and, consequently, the two queries are not
equivalent w.r.t.\ this DTD.

In this paper, we study two new schema formalisms: the
\emph{disjunctive multiplicity schema (DMS)} and its restriction, the
\emph{disjunction-free multiplicity schema (MS)}. While they use a
user-friendly syntax inspired by DTDs, they define unordered content
model only, and, therefore, they are better suited for unordered XML. A DMS is a
set of rules associating with each label the possible number of
occurrences for all the allowed children labels by using
\emph{multiplicities}: ``$*$'' (0 or more occurrences), ``$+$'' (1 or
more), ``$?$'' (0 or 1), ``$1$'' (exactly 1 occurrence; often omitted
for brevity).  Additionally, alternatives can be specified using
restricted \emph{disjunction} (``$\mid$'') and all the conditions are
gathered with \emph{unordered concatenation} (``$\shuffle$''). For
instance, the following DMS captures precisely the intuitive
requirements for the DBLP repository:
\[
\begin{tabular}{rcl}
  {\sl dblp} & $\rightarrow$ & {\sl article}$^* \shuffle$ {\sl book}$^*$\\
  {\sl article} & $\rightarrow$& {\sl title} $\shuffle$ {\sl year} $\shuffle$ {\sl author}$^+$\\
  {\sl book} & $\rightarrow$ & {\sl title} $\shuffle$ {\sl year} $\shuffle$ {\sl publisher}$^? \shuffle$ ({\sl author}$^+\mid$ {\sl editor}$^+$)
\end{tabular}
\]
In particular, an {\sl article} must have exactly one {\sl title},
exactly one {\sl year}, and at least one {\sl author}. A {\sl book}
may additionally have a {\sl publisher} and may have one or more {\sl
  editor}'s instead of {\sl author}'s. Note that, unlike the DTD
defined earlier, this DMS does not allow documents having an {\sl
  article} with several {\sl title}'s or without any {\sl author}.

There has been an attempt to use DTD-like rule based schemas to define
unordered content models by interpreting the regular expressions under
\emph{commutative closure}~\cite{BeMi99}: essentially, an unordered
collection of children matches a regular expression if there exists an
ordering that matches the regular expression in the standard
way. However, testing whether there exists a permutation of a word
that matches a regular expression is NP-complete~\cite{KoTo10}, which
implies a significant increase in computational complexity of the
membership problem i.e., validating an XML document against the
schema. The schema formalisms proposed in this paper, DMS and MS, can
be seen as DTDs interpreted under commutative closure using restricted
classes of regular expressions. Two natural questions arise: do these
restrictions allow us to avoid the increase in computational
complexity, and how much of the expressiveness of DTDs is retained. The
answers are generally positive. There is no increase in computational
complexity but also no decrease
(cf.~Table~\ref{tab:complexity_summary}). Furthermore, the proposed
schema formalisms seem to capture a significant part of the
expressiveness of DTDs used in practice
(Section~\ref{sec:expressiveness}).

\begin{table*}
\begin{changemargin}{-1.75cm}{0cm} 
\begin{center}
\begin{small}
\begin{tabular}{|l|l|l|l|l|}
  \hline
  \emph{Problem of interest} & $\DTD$ & $\dms$& disjunction-free $\DTD$ & $\ms$\\\hline
  Schema satisfiability & PTIME \cite{BuWo98,Schwentick04b} & PTIME (Prop. \ref{sat-memb-ptime}) & PTIME \cite{BuWo98,Schwentick04b} & PTIME (Prop. \ref{sat-memb-ptime})  \\\hline

  Membership & PTIME \cite{BuWo98,Schwentick04b} & PTIME (Prop. \ref{sat-memb-ptime}) & PTIME \cite{BuWo98,Schwentick04b} & PTIME (Prop. \ref{sat-memb-ptime}) \\\hline
  Schema containment& PSPACE-c${}^\dag$/PTIME \cite{BuWo98,Schwentick04b} & PTIME (Th. \ref{cnt-ptime}) & coNP-h${}^\dag$/PTIME \cite{BuWo98,MaNeSc09} & PTIME (Th. \ref{cnt-ptime}) \\\hline

Query satisfiability${}^\ddag$ & NP-c \cite{BeFaGe08} & NP-c (Prop. \ref{satq-nphard}) & PTIME \cite{BeFaGe08} & PTIME (Th. \ref{sat-impl-ptime}) \\\hline

Query implication${}^\ddag$ & EXPTIME-c \cite{NeSc06}& EXPTIME-c (Prop. \ref{exptime-c})&PTIME (Cor. \ref{cor})& PTIME (Th. \ref{sat-impl-ptime})\\\hline
Query containment${}^\ddag$ & EXPTIME-c \cite{NeSc06} &EXPTIME-c (Prop. \ref{exptime-c}) &coNP-c (Cor. \ref{cor})&coNP-c (Th. \ref{th:cnt-ms-hard})  \\\hline
\multicolumn{5}{l}{{${}^\dag$ when non-deterministic regular expressions are used. ${}^\ddag$ for twig queries.}}
\end{tabular}
\end{small}
\end{center}
\vspace{-1pt}
\caption{\label{tab:complexity_summary}Summary of complexity results.}
\end{changemargin}
\end{table*}

We study the complexity of several basic decision problems: schema
satisfiability, membership of a tree to the language of a schema,
containment of two schemas, twig query satisfiability, implication,
and containment in the presence of
schema. Table~\ref{tab:complexity_summary} contains the summary of
complexity results compared with general DTDs and disjunction-free
DTDs. The lower bounds for the decision problems for DMS and MS are
generally obtained with easy adaptations of their counterparts for
general DTDs and disjunction-free DTDs. To obtain upper bounds we
develop several new tools. \emph{Dependency graphs} for MS and a
generalized definition of an \emph{embedding} of a query help us to
reason about query satisfiability, query implication, and query
containment in the presence of MS. An alternative characterization of
DMS with \emph{characterizing triples} is used to reduce the containment
of DMS to the containment of their characterizing triples, which can be
tested in PTIME. We add that our constructions and results for MS
extend easily to disjunction-free DTDs and allow to solve the problems
of query implication and query containment, which, to the best of our
knowledge, have not been previously studied for disjunction-free DTDs.


\noindent {\bf Related work.} Languages of unordered trees can 
be expressed by \emph{logic formalisms} or by \emph{tree automata}. 
Boneva et al.\ \cite{BT05, BTT05} make a survey on such formalisms and 
compare their expressiveness. The fundamental difference resides in
the  kind of constraints that can be expressed
for the allowed collections of children for some node. We mention here
only formalisms introduced in the context of XML.
{\em Presburger automata} \cite{SeScMu03}, 
{\em sheaves automata} \cite{DaLu03}, and the {\em TQL logic} \cite{CG04} 
allow to express {\em Presburger constraints} on the 
numbers of occurrences of the different symbols among the children 
of some node. This is also equivalent to considering DTDs under 
commutative closure, similarly to \cite{BeMi99}. The consequence of the high 
expressive power is that the
membership problem is NP-complete for an unbounded alphabet \cite{KoTo10}. 
Therefore, these formalisms were not extensively used in practice. 
Suitable restrictions on Presburger automata
and on the TQL logic allow to obtain the same expressiveness as
the MSO logic on unordered trees \cite{BT05, BTT05}. 
DMS are strictly less expressive than these MSO-equivalent languages. Static analysis problems involving
twig queries were not studied for these languages. Additionally, we believe that DMS are more appropriate to be used as schema languages, as they were designed as such, in particular regarding the more user-friendly DTD-like syntax.
As mentioned earlier, unordered content model can also be defined by DTDs defining 
commutatively-closed sets of ordered trees. An (ordered) 
tree matches such a DTD iff all tree obtained by reordering of sibling
nodes also matches the DTD. This also turns out to be equally expressive as MSO on unordered trees  \cite{BT05, BTT05}. 
However, such a DTD may be of exponential size w.r.t.\ the size of the alphabet and, moreover, it is
PSPACE-complete to test whether a DTD defines a commutatively-closed set of trees~\cite{NeSc99}, which makes such 
DTDs unusable in practice.
XML Schema allow for a bounded number of symbols to appear in
arbitrary order, and RELAX NG allows to interleave sequences of
symbols of bounded length.  In contrast, the Kleene star in DMS allows
for unbounded unordered collections of children.  Schematron allows to
specify very general constraints on the number of occurrences of
symbols among the children of a node, in particular Presburger
constraints are expressible.  Schema languages using regular
expressions with unbounded interleaving were studied in
\cite{GeMaNe09}. These are more expressive than DMS but exhibit high
computational complexity of inclusion \cite{GeMaNe09} and membership
\cite{BeBjHo11}. To the best of our knowledge, the static analysis
problems involving queries were not studied for these languages when
unordered content is allowed.

\noindent {\bf Organization.} The paper is organized as follows. 
In Section~\ref{sec:prelim} we introduce some preliminary notions, while in Section~\ref{sec:schemas} we present our schema formalisms.
In Section~\ref{sec:static} we define the problems of interest and then we analyze them for DMS (Subsection~\ref{subsec:dms}) and for MS (Subsection~\ref{subsec:ms}).
In Section~\ref{sec:expressiveness} we discuss the expressiveness of the proposed formalisms, while in Section~\ref{sec:conclusions} we summarize our results and outline further directions.
Because of space restriction, we present only sketches of some proofs; complete proofs will be given in the full version of the paper, which is currently in preparation for journal submission.

\section{Preliminaries}\label{sec:prelim}
\noindent Throughout this paper we assume an alphabet $\Sigma$ which
is a finite set of symbols.

\noindent {\bf Trees.} We model XML documents with unordered labeled
trees.  Formally, a {\em tree} $t$ is a tuple
$(N_t,\root_t,\lab_t,\child_t)$, where $N_t$ is a finite set of nodes,
$\root_t\in N_t$ is a distinguished root node,
$\lab_t:N_t\rightarrow\Sigma$ is a labeling function, and
$\child_t\subseteq N_t\times N_t$ is the parent-child relation.  We
assume that the relation $\child_t$ is acyclic and require every
non-root node to have exactly one predecessor in this relation.  By
$\Tree$ we denote the set of all finite trees.
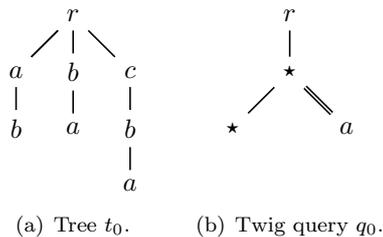
\begin{figure}[htb]
  \centering
  \subfigure[Tree $t_0$.]{\label{fig:tree}
    \centering
    \begin{tikzpicture}[yscale=0.75]
      \path[use as bounding box] (-1.25,.25) rectangle (1.25,-3.25);
      \node at (0,0) (n0) {$r$};
      \node at (-0.75,-1) (n1) {$a$};
      \node at (0,-1) (n4) {$b$};
      \node at (0,-2) (n5) {$a$};
      \node at (0.75,-1) (n2) {$c$};
      \node at (0.75,-2) (n3) {$b$};
      \node at (0.75,-3) (n6) {$a$};
      \node at (-0.75,-2) (n7) {$b$};
      \draw[-,semithick] (n1) -- (n7);      
      \draw[-,semithick] (n0) -- (n1);
      \draw[-,semithick] (n0) -- (n1);
      \draw[-,semithick] (n0) -- (n2);
      \draw[-,semithick] (n2) -- (n3);
      \draw[-,semithick] (n3) -- (n6);
      \draw[-,semithick] (n0) -- (n4);
      \draw[-,semithick] (n4) -- (n5);
    \end{tikzpicture}
  }
  \subfigure[Twig query $q_0$.]{\label{fig:twig0}
    \centering
    \begin{tikzpicture}[yscale=0.75]
      \path[use as bounding box] (-1.25,.25) rectangle (1.25,-3.25);
      \node at (0,0) (n0) {$r$};
      \node at (0,-1) (n2) {$\wc$};
      \node at (0.75,-2) (n3) {$a$};
      \node at (-0.75,-2) (n1) {$\wc$};
      \draw[-,semithick] (n2) -- (n1);
      \draw[-,semithick] (n0) -- (n2);
      \draw[-,double,semithick] (n2) -- (n3);
    \end{tikzpicture}
  }
  \caption{A tree and a twig query.}
  \label{fig:trees-twigs}
\end{figure}

\noindent {\bf Queries.}  We work with the class of twig queries,
which are essentially unordered trees whose nodes may be additionally
labeled with a distinguished wildcard symbol $\wc\not\in\Sigma$ and
that use two types of edges, child ($/$) and descendant ($\dblslash$),
corresponding to the standard XPath axes.  Note that the semantics of
$\dblslash$-edge is that of a proper descendant (and not that of
descendant-or-self).  Formally, a \emph{twig query} $q$ is a tuple
$(N_q,\root_q,\lab_q,\child_q,\desc_q)$, where $N_q$ is a finite set
of nodes, $\root_q\in N_q$ is the root node,
$\lab_q:N_q\rightarrow\Sigma\cup\{\wc\}$ is a labeling function,
$\child_q\subseteq N_q\times N_q$ is a set of child edges, and
$\desc_q\subseteq N_q\times N_q$ is a set of descendant edges.  We
assume that $\child_q\cap\desc_q=\emptyset$ and that the relation
$\child_q\cup\desc_q$ is acyclic and we require every non-root node to
have exactly one predecessor in this relation.  By $\Twig$ we denote
the set of all twig queries. Twig queries are often presented using
the abbreviated XPath syntax~\cite{XPath1} e.g., the query
$q_0$ in Figure~\ref{fig:twig0} can be written as
$r/\wc[\wc]\dblslash{}a$.

\noindent {\bf Embeddings.} We define the semantics of twig queries
using the notion of embedding which is essentially a mapping of nodes
of a query to the nodes of a tree that respects the semantics of the
edges of the query.  Formally, for a query $q\in \Twig$ and a tree
$t\in \Tree$, an \emph{embedding} of $q$ in $t$ is a function $\lambda
: N_q \rightarrow N_t$ such that:
\begin{enumerate}
\itemsep0pt
\item[$1$.] $\lambda(\root_q)=\root_t$,
\item[$2$.] for every $(n,n')\in \child_q$,
  $(\lambda(n),\lambda(n'))\in \child_t$,
\item[$3$.] for every $(n,n')\in \desc_q$,
  $(\lambda(n),\lambda(n'))\in( \child_t)^+$ (the transitive closure of $\child_t$),
\item[$4$.] for every $n\in N_q$, $\lab_q(n) = \wc$ or $\lab_q(n) = \lab_t(\lambda(n))$.
\end{enumerate}
If there exists an embedding from $q$ to $t$ we say that $t$
\emph{satisfies} $q$ and we write $t\models q$.  By $ L(q)$ we
denote the set of all the trees satisfying $q$.  Note that we do not
require the embedding to be injective i.e., two nodes of the query may
be mapped to the same node of the tree.  Figure~\ref{fig:embeddings}
presents all embeddings of the query $q_0$ in the tree $t_0$ from
Figure~\ref{fig:trees-twigs}.
\begin{figure}[htb]
  \centering
  \begin{tikzpicture}[yscale=0.75]
    \node at (0,0) (n0) {$r$};
    \node at (-0.75,-1) (n1) {$a$};
    \node at (0,-1) (n4) {$b$};
    \node at (0,-2) (n5) {$a$};
    \node at (0.75,-1) (n2) {$c$};
    \node at (0.75,-2) (n3) {$b$};
    \node at (0.75,-3) (n6) {$a$};
    \node at (-0.75,-2) (n7) {$b$};
    \draw[-,semithick] (n1) -- (n7);      
    \draw[-,semithick] (n0) -- (n1);
    \draw[-,semithick] (n0) -- (n2);
    \draw[-,semithick] (n2) -- (n3);
    \draw[-,semithick] (n3) -- (n6);
    \draw[-,semithick] (n0) -- (n4);
    \draw[-,semithick] (n4) -- (n5);
    \begin{scope}[xshift=-2.75cm]
      \node at (0,0) (m0) {$r$};
      \node at (0,-1) (m2) {$\wc$};
      \node at (0.75,-2) (m3) {$a$};
      \node at (-0.75,-2) (m1) {$\wc$};
      \draw[-,semithick] (m2) -- (m1);
      \draw[-,semithick] (m0) -- (m2);
      \draw[-,double,semithick] (m2) -- (m3);
    \draw (m0) edge[->,bend left,densely dotted] (n0);
    \draw (m1) edge[->,bend right,densely dotted] (n5);
    \draw (m2) edge[->,bend left,densely dotted] (n4);
    \draw (m3) edge[->,bend left,densely dotted] (n5);
    \end{scope}
    \begin{scope}[xshift=2.75cm]
      \node at (0,0) (m0) {$r$};
      \node at (0,-1) (m2) {$\wc$};
      \node at (0.75,-2) (m3) {$a$};
      \node at (-0.75,-2) (m1) {$\wc$};
      \draw[-,semithick] (m2) -- (m1);
      \draw[-,semithick] (m0) -- (m2);
      \draw[-,double,semithick] (m2) -- (m3);
    \draw (m0) edge[->,bend right,densely dotted] (n0);
    \draw (m1) edge[->,bend right,densely dotted] (n3);
    \draw (m2) edge[->,bend right,densely dotted] (n2);
    \draw (m3) edge[->,bend left,densely dotted] (n6);
    \end{scope}
  \end{tikzpicture}
  \caption{Embeddings of $q_0$ in $t_0$.}
  \label{fig:embeddings}
\end{figure}
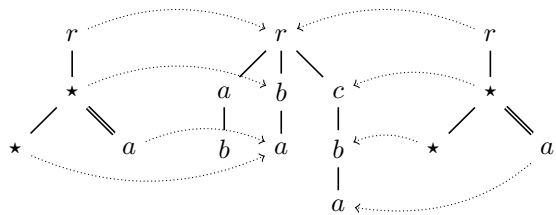

\noindent {\bf Unordered words.} An \emph{unordered word} is
essentially a multiset of symbols i.e., a function
$w:\Sigma\rightarrow\mathbb N_0$ mapping symbols from the alphabet to
natural numbers, and we call the number $w(a)$ the number of
occurrences of the symbol $a$ in $w$. We also write $a\in w$ as a
shorthand for $w(a)\neq 0$. An empty word $\varepsilon$ is an
unordered word that has $0$ occurrences of every symbol i.e.,
$\varepsilon(a)=0$ for every $a\in\Sigma$. We often use a simple
representation of unordered words, writing each symbol in the alphabet
the number of times it occurs in the unordered word. For example, when
the alphabet is $\Sigma=\{a,b,c\}$, $w_0=aaacc$ stands for the
function $w_0(a) = 3$, $w_0(b) = 0$, and $w_0(c) = 2$. 

The (unordered) concatenation of two unordered words $w_1$ and $w_2$
is defined as the multiset union $w_1\uplus w_2$ i.e., the function
defined as $(w_1\uplus w_2)(a) = w_1(a)+w_2(a)$ for all
$a\in\Sigma$. For instance, $aaacc\uplus{}abbc=aaaabbccc$. Note that
$\varepsilon$ is the identity element of the unordered concatenation
$\varepsilon\uplus w = w\uplus \varepsilon = w$ for all unordered word
$w$. Also, given an unordered word $w$, by $w^i$ we denote the
concatenation $w\uplus\ldots\uplus w$ ($i$ times).

A \emph{language} is a set of unordered words. The unordered
concatenation of two languages $L_1$ and $L_2$ is a language
$L_1\uplus L_2 = \{w_1\uplus w_2\mid w_1\in L_1, w_2\in L_2\}$. For
instance, if $L_1 = \{a, aac\}$ and $L_2 = \{ac, b, \varepsilon\}$,
then $L_1\uplus L_2 = \{a,ab,aac,aabc,aaacc\}$.


\section{Multiplicity schemas}\label{sec:schemas}
A \emph{multiplicity} is an element from the set $\{*,+,?,0,1\}$.
\noindent We define the function $\llbracket\cdot\rrbracket$ mapping
multiplicities to sets of natural numbers.  More precisely:
\[
\begin{tabular}{ccccc}
$\llbracket*\rrbracket = \{0,1,2,\ldots\}$, & 
$\llbracket+\rrbracket = \{1,2,\ldots\}$, & 
$\llbracket?\rrbracket = \{0,1\}$, &
$\llbracket1\rrbracket = \{1\}$, & 
$\llbracket0\rrbracket = \{0\}$. 
\end{tabular}
\]
Given a symbol $a\in\Sigma$ and a multiplicity $M$, the language of
$a^M$, denoted $ L(a^M)$, is $\{a^i\mid i\in\llbracket
M\rrbracket\}$.  For example, $ L(a^+) = \{a,aa,\ldots\}$,
$ L(b^0) = \{\varepsilon\}$, and $
L(c^?)=\{\varepsilon,c\}$.

A \emph{disjunctive multiplicity expression} $E$ is:
\[
E \colonequals D_1^{M_1} \shuffle \ldots \shuffle D_n^{M_n},
\]
where for all $1\leq i \leq n$, $M_i$ is a multiplicity and each $D_i$ is:
\[
D_i \colonequals a_1^{M_1'} \mid \ldots \mid a_k^{M_k'},
\]
where for all $1\leq j \leq k$, $M_j'$ is a multiplicity and
$a_j\in\Sigma$.  Moreover, we require that every symbol $a\in \Sigma$
is present at most once in a disjunctive multiplicity expression.  For
instance, $(a\mid b)\shuffle(c\mid d)$ is a disjunctive multiplicity
expression, but $(a\mid b)\shuffle c\shuffle(a\mid d)$ is not because
$a$ appears twice. A \emph{disjunction-free multiplicity expression}
is an expression which uses no disjunction symbol ``$\mid$'' i.e., an
expression of the form $a_1^{M_1} \shuffle \ldots \shuffle a_k^{M_k}$,
where for all $1\leq i \leq k$, the $a_i$'s are pairwise distinct
symbols in the alphabet and the $M_i$'s are multiplicities.

The language of a disjunctive multiplicity expression is:
\begin{gather*}
 L(a_1^{M_1} \mid \ldots \mid a_k^{M_k}) =  L(a_1^{M_1}) \cup \ldots\cup L(a_k^{M_k}),\\
 L(D^M) = \{w_1\uplus\ldots\uplus w_i\mid w_1,\ldots,w_i\in L(D)\wedge i\in\llbracket M\rrbracket\},\\
 L(D_1^{M_1} \shuffle \ldots \shuffle D_n^{M_n}) =  L(D_1^{M_1}) \uplus \ldots\uplus  L(D_n^{M_n}).  
\end{gather*}
If an unordered word $w$ belongs to the language of a disjunctive multiplicity expression $E$, we denote it $w\models E$.
When a symbol $a$ (resp. a disjunctive multiplicity expression $E$)
has multiplicity $1$, we often write $a$ (resp. $E$) instead of $a^1$
(resp. $E^1$).  Moreover, we omit writing symbols and disjunctive
multiplicity expressions with multiplicity $0$. Take for instance,
$E_0= a^+ \shuffle (b \mid c) \shuffle d^?$ and note that both the
symbols $b$ and $c$ as well as the disjunction $(b \mid c)$ have an
implicit multiplicity $1$. The language of $E_0$ is: 
\[
 L(E_0) = \{a^ib^jc^kd^\ell \mid 
i,j,k,\ell \in \mathbb{N}_0,\ 
i \geq 1,\ 
j+k = 1,\ 
\ell \leq 1
\}.
\]
Next, we formally define the proposed schema formalisms.
\begin{definition}
  A \emph{disjunctive multiplicity schema (DMS)} is a tuple
  $S=(\root_S, R_S)$, where $\root_S\in\Sigma$ is a designated root
  label and $R_S$ maps symbols in $\Sigma$ to disjunctive multiplicity
  expressions.  By $\dms$ we denote the set of all disjunctive
  multiplicity schemas.  A \emph{disjunction-free multiplicity schema
    (MS)} $S=(\root_S, R_S)$ is a restriction of the $\dms$, where
  $R_S$ maps symbols in $\Sigma$ to disjunction-free multiplicity
  expressions.  By $\ms$ we denote the set of all disjunction-free
  multiplicity schemas.
\end{definition}
To define satisfiabily of a DMS (or MS) $S$ by a tree $t$ we first
define the unordered word $ch_t^n$ of children of a node $n\in N_t$ of
$t$ i.e., $\ch_t^n(a) = |\{m\in N_t\mid(n,m)\in\child_t\wedge
\lab_t(m)=a\}|$. Now, a tree $t$ \emph{satisfies} $S$, in symbols $t
\models S$, if $\lab_t(\root_t) = \root_S$ and for any node $n\in
N_t$, $\ch_t^n\in L(R_S(\lab_t(n)))$. By $
L(S)\subseteq \Tree$ we denote the set of all the trees satisfying
$S$. 

In the sequel, we represent a schema $S=(\root_S,R_S)$ as a set of
rules of the form $a\rightarrow R_S(a)$, for any $a\in\Sigma$.  If
$ L(R_S(a)) = \varepsilon$, then we write $a\rightarrow
\epsilon$ or we simply omit writing such a rule.
\begin{example}\normalfont We present schemas $S_1,S_2,S_3,S_4$ illustrating the
  formalisms defined above.  They have the root label $r$ and the
  rules:
\begin{align*}
S_1&:~~~r\rightarrow a\shuffle b^*\shuffle c^?&a&\rightarrow b^?&b& \rightarrow a^?&c&\rightarrow b\\[-4pt]
S_2&:~~~r\rightarrow c\shuffle b\shuffle a&a&\rightarrow b^?&b& \rightarrow a&c&\rightarrow b\\[-4pt]
S_3&:~~~r\rightarrow (a\mid b)^+\shuffle c&a&\rightarrow b^?&b& \rightarrow a^?&c&\rightarrow b\\[-4pt]
S_4&:~~~r\rightarrow (a\mid b\mid c)^*&a&\rightarrow \epsilon&b& \rightarrow a^?&c&\rightarrow b
\end{align*}
$S_1$ and $S_2$ are $\ms$, while $S_3$ and $S_4$ are $\dms$.
The tree $t_0$ from Figure~\ref{fig:tree} satisfies only $S_1$ and $S_3$.
\qed\end{example}

\section{Static analysis}\label{sec:static}
We first define the problems of interest and we formally state the
corresponding decision problems parameterized by the class of schema
and, when appropriate, by a class of queries.

\noindent\textbf{Schema satisfiability} --
checking if there exists a tree satisfying the given schema:
\[
\SAT_{\mathcal{S}} = \{ S \in\mathcal{S} \mid \exists t\in
\Tree.\ t\models S \}.
\]
\noindent {\bf Membership} -- checking if the given tree satisfies the
given schema:
\[
\MEMB_{\mathcal{S}} = 
\{(S,t) \in\mathcal{S}\times\Tree\mid t\models S
\}.
\]
\noindent {\bf Schema containment} -- checking if every tree
satisfying one given schema satisfies another given schema:
\[
\CNT_{\mathcal{S}} = \{(S_1,
S_2)\in\mathcal{S}\times\mathcal{S}\mid L(S_1)\subseteq
 L(S_2)\}.
\]
\noindent {\bf Query satisfiability by schema} -- checking if
there exists a tree that satisfies the given schema and the given
query:
\[\SAT_{\mathcal{S},\mathcal{Q}} = \{(S,q)\in
\mathcal{S}\times\mathcal{Q}\mid \exists t\in L(S).\ t\models
q\}.
\]
\noindent {\bf Query implication by schema} -- checking if every tree satisfying
the given schema satisfies also the given query:
\[
\IMPL_{\mathcal{S},\mathcal{Q}} = \{(S, q)\in
\mathcal{S}\times\mathcal{Q}\mid \forall t\in L(S).\ t\models
q\}.
\]
\noindent {\bf Query containment in the presence of schema} -- checking if every
tree satisfying the given schema and one given query also satisfies
another given query:
\[
\CNT_{\mathcal{S},\mathcal{Q}} = \{(p, q, S) \in
\mathcal{Q}\times\mathcal{Q}\times\mathcal{S} \mid \forall
t\in L(S).\ t \models p\Rightarrow t\models q\}.
\]
We next study these decision problems for DMS an MS.

\subsection{Disjunctive multiplicity schema}\label{subsec:dms}
In this subsection we present the static analysis for DMS. 
We first introduce the notion of normalized disjunctive multiplicity expressions and an alternative definition with characterizing triples.
Finally, we state the complexity results for DMS.
\subsubsection{Normalized disjunctive multiplicity expressions}
Recall that a disjunctive multiplicity expression has the form $E=D_1^{M_1}\shuffle\ldots\shuffle D_m^{M_m}$.
Intuitively, in a \emph{normalized} disjunctive multiplicity expression, every disjunction $D_i^{M_i}$ has one of the following three forms:
\begin{enumerate}
\item $(a_1\mid\ldots\mid a_n)^+$,
\item $(a_1^{M_1}\mid\ldots\mid a_n^{M_n})$, where $\forall j.\ 1\leq j\leq n.\ 0\notin\llbracket M_j\rrbracket$,
\item $(a_1^{M_1}\mid\ldots\mid a_n^{M_n})$, where $\forall j.\ 1\leq j\leq n.\ 0\in\llbracket M_j\rrbracket$.
\end{enumerate}
Given a disjunctive multiplicity expression $E=D_1^{M_1}\shuffle\ldots\shuffle D_m^{M_m}$, we denote by $\Sigma_{D_i}$ the set of symbols used in the disjunction from $D_i$ and by $M^a$ the multiplicity corresponding to a symbol $a$.
Formally, we say that $E$ is \emph{normalized} if the following two conditions are satisfied:
\begin{flalign*}
&\forall i.\ 1\leq i\leq m.\ M_i\neq1\Rightarrow M_i=+ \wedge \forall a\in \Sigma_{D_i}.\ M^a=1,\\
&\forall i.\ 1\leq i\leq m.\ (\exists a\in \Sigma_{D_i}.\ 0\in \llbracket M^a\rrbracket) \Rightarrow (\forall a'\in\Sigma_{D_i}.\ 0\in \llbracket M^{a'}\rrbracket).
\end{flalign*}
Any $D_i^{M_i}$ can be rewritten as an equivalent normalized disjunctive multiplicity expression using the following rules:
\begin{itemize}
\item $(a_1^{M_1}\mid\ldots\mid a_n^{M_n})^*$ goes to $a_1^*\shuffle\dots\shuffle a_n^*$.
\item $(a_1^{M_1}\mid\ldots\mid a_n^{M_n})^?$ goes to $(a_1^{M_1'}\mid\ldots\mid a_n^{M_n'})$, where $\forall j.\ 1\leq j\leq n.\ \llbracket M_j' \rrbracket = \{0\}\cup\llbracket M_j \rrbracket$.
\item $(a_1^{M_1}\mid\ldots\mid a_n^{M_n})$, where $\exists j.\ 1\leq j\leq n.\ 0\in \llbracket M_j\rrbracket $ goes to $(a_1^{M_1'}\mid\ldots\mid a_n^{M_n'})$, where $\forall j.\ 1\leq j\leq n.\ \llbracket M_j' \rrbracket = \{0\}\cup\llbracket M_j \rrbracket$.
\item $(a_1^{M_1}\mid\ldots\mid a_n^{M_n})^+$, where $\exists j.\ 1\leq j\leq n.\ 0\in \llbracket M_j\rrbracket $ goes to $a_1^*\shuffle\dots\shuffle a_n^*$.
\item $(a_1^{M_1}\mid\ldots\mid a_n^{M_n})^+$, where $\forall j.\ 1\leq j\leq n.\ 0\notin \llbracket M_j\rrbracket $ goes to $(a_1\mid\ldots\mid a_n)^+$.
\item $(a_1^{M_1}\mid\ldots\mid a_n^{M_n})^0$ is removed.
\item $a^0$ occurring in some disjunction is removed.
\end{itemize}
Note that each of the rewriting steps gives an equivalent expression.
From now on, we assume w.l.o.g.\ that all the disjunctive multiplicity expressions that we manipulate are normalized.

\subsubsection{Alternative definition with characterizing triples}
We propose an alternative definition of the
language of a disjunctive multiplicity expression using a
\emph{characterizing triple}.
Moreover, we show that each element of the triple has a compact representation which is polynomial in the size of the alphabet and computable in PTIME.
Recall that the disjunctive multiplicity expressions do not allow repetitions of symbols hence they have \emph{size} linear in $|\Sigma|$.
Next, we prove that the inclusion of two disjunctive multiplicity expressions is equivalent to the inclusion of the characterizing triples.
Thus, we can view the characterizing triple as a \emph{normal form} of a disjunctive multiplicity expression.
Recall that $a\in w$ means that $w(a)\neq 0$. 

Given a disjunctive multiplicity expression $E$, we define the \emph{characterizing triple} $(C_E,N_E,P_E)$ consisting of the following sets:
\begin{itemize}
\item The \emph{conflicting pairs of siblings} $C_E$ consists of
  pairs of symbols in $\Sigma$ such that $E$ defines no word using
  both symbols simultaneously:
  \[
  C_E = \{(a_1,a_2)\in\Sigma\times\Sigma\mid \varnot\exists
  w\in L(E).\ a_1\in w\wedge a_2\in w\}.
  \]
\item The \emph{extended cardinality map} $N_E$ captures for each symbol
 in the alphabet the possible numbers of its occurrences in
  the unordered words defined by $E$:
  \[
  N_{E} = \{(a, w(a))\in \Sigma\times\mathbb N_0\mid w\in  L(E)\}.
  \]
\item The \emph{sets of required symbols} $P_E$ which captures symbols
  that must be present in every word; essentially, a set of symbols $X$
  belongs to $P_E$ if every word defined by $E$ contains at least one
  element from $X$:
  \[
  P_E=\{X\subseteq\Sigma\mid\forall w\in L(E).\ \exists a\in
  X.\ a \in w\}.
  \]
\end{itemize}
As an example we take $E_0= a^+ \shuffle (b \mid c) \shuffle d^?$.
Because $P_E$ is closed under supersets, we list only its minimal
elements:
\begin{gather*}
  C_{E_0} = \{ (b,c), (c,b) \}, \qquad 
  P_{E_0} = \{ \{a\}, \{b,c\}, \ldots\},\\
  N_{E_0} = \{ (b,0), (b,1), (c,0), (c,1), (d,0), (d,1), (a,1), (a,2), 
  \ldots\}.\\
\end{gather*}
An unordered word $w$ is consistent with the triple $(C_E, N_E, P_E)$ corresponding to a disjunctive multiplicity expression $E$, denoted $w\models (C_E, N_E, P_E)$ if $w$ is consistent with $C_E, N_E$, and $P_E$, respectively.
Formally:
\begin{flalign*}
&w\models C_E\colonequals\forall (a_1, a_2) \in C_E.\ (a_1\in w \Rightarrow a_2 \notin w) \wedge (a_2\in w \Rightarrow a_1\notin w),\\
&w\models N_E\colonequals\forall a \in \Sigma.\ (a,w(a))\in N_E,\\
&w\models P_E\colonequals\forall X\in P_E.\ \exists a\in X.\ a\in w.
\end{flalign*}
Furthermore, each element of a characterizing triple has a \emph{compact representation}, which is polynomial in the size of the alphabet and computable in PTIME.
Next, we present the construction for each compact representation:
\begin{itemize}
\item Given a disjunctive multiplicity expression $E=D_1^{M_1}
\shuffle\dots\shuffle D_m^{M_m}$, the size of $C_E$ is quadratic in $|\Sigma|$, but we can represent it linearly in $|\Sigma|$.
Thus, we obtain $C_E^*$, which consists intuitively of non-singleton sets of labels from the same disjunction from $E$ such that the multiplicity associated to the disjunction is $1$:
\[
C_E^*=\{X\subseteq\Sigma_{D_1}\cup\ldots\cup\Sigma_{D_m}\mid\forall a,a'\in X.\ a\neq a'\Rightarrow (a,a')\in C_E\}.
\]
Then, $(a,b) \in \Sigma \times \Sigma$ belongs to $C_E$ iff one of the following holds: (i) there exists $X \in C_E^*$ s.t. $\{a,b\} \subseteq X$, or (ii)  $a \not\in \Sigma_{D_1}\cup\ldots\cup\Sigma_{D_m}$  or  $b \not\in \Sigma_{D_1}\cup\ldots\cup\Sigma_{D_m}$.
\item Given a disjunctive multiplicity expression $E=D_1^{M_1}\shuffle\dots\shuffle D_m^{M_m}$, note that the set $N_E$ may be infinite, but it can be represented in a compact manner using multiplicities: for any label $a$, the set $\{x\in \mathbb N_0\mid(a,x)\in N_E\}$ is representable by a multiplicity.
Given a symbol $a\in\Sigma$, by $N_E^*(a)$ we denote the multiplicity $M$ such that $\llbracket M\rrbracket=\{x\in\mathbb N_0\mid (a,x)\in N_E\}$.
Moreover, for any $a\in\Sigma$, the multiplicity $N_E^*(a)$ can be easily obtained from $E$.
More precisely:
\[
N_E^*(a) = \begin{cases}
0,\ \textrm{if }\forall i.\ 1\leq i\leq m.\ a\notin\Sigma_{D_i},\\
M^a,\ \textrm{if }\exists i.\ 1\leq i\leq m .\ \Sigma_{D_i}=\{a\},\\
?,\ \textrm{if }\exists i.\ 1\leq i\leq m .\ a\in\Sigma_{D_i}\wedge M_i=1\wedge M^a\in\{?,1\},\\
*,\ \textrm{otherwise}.
\end{cases}
\]
Then, obviously, $(a,x) \in N_E$ iff $x \in \llbracket N^*_E(a) \rrbracket$.
\item $P_{E}$ may be exponential in $|\Sigma|$, but it can be represented with its $\subseteq$-\emph{minimal} elements:
\[
P_E^*=\{X\in P_E\mid \varnot\exists X'\in P_{E}.\ X'\subset X\}.
\]
For a disjunctive multiplicity expression $E=D_1^{M_1}\shuffle\dots\shuffle D_m^{M_m}$, $P_E^*$ consists intuitively of the disjunctions from $E$ such that the labels from the disjunction have multiplicities not accepting 0 occurrences.
Therefore, we can construct $P_E^*$ in a straightforward manner:
\[
P_E^* = \{\Sigma_{D_i}\mid 1\leq i\leq m\wedge\forall a\in\Sigma_{D_i}.\ 0\notin\llbracket M^a\rrbracket\}.
\]
Then $X \in P_E$ iff there exists $X' \in P_E^*$ s.t. $X' \subseteq X$.
\end{itemize}
For example, for the same $E_0= a^+ \shuffle (b \mid c) \shuffle d^?$, we have:
\begin{gather*}
C_{E_0}^* = \{\{b,c\}\}, \qquad P_{E_0}^* = \{\{a\}, \{b,c\}\},\\
N_{E_0}^*(a) = +, \qquad N_{E_0}^*(b) = N_{E_0}^*(c) = N_{E_0}^*(d) = ?.
\end{gather*}
We also illustrate the construction of the compact representation of the characterizing triple on a more complex disjunctive multiplicity expression:
\[
E_1 = (a\mid b)^+\shuffle (c^?\mid d^*\mid e^*)\shuffle f^+\shuffle g^?\shuffle (h^+ \mid i)
\]
over the alphabet $\Sigma=\{a,b,c,d,e,f,g,h,i,j\}$. 
We obtain:
\begin{gather*}
C_{E_1}^* = \{\{c,d,e\}, \{h,i\}\},\\
P_{E_1}^*=\{\{a,b\},\{f\},\{h,i\}\},\\
N_{E_1}^*(a) = N_{E_1}^*(b) = N_{E_1}^*(d) = N_{E_1}^*(e) = N_{E_1}^*(h) = *,\\
N_{E_1}^*(c) = N_{E_1}^*(g) = N_{E_1}^*(i) = ?, ~N_{E_1}^*(f) = +, ~N_{E_1}^*(j) = 0.
\end{gather*}
We use the characterizing triple to give an alternative characterization of the membership of an unordered word to the language of a disjunctive multiplicity expression:
\begin{lemma}\label{lemma:cons-tuple}
An unordered word $w$ belongs to the language of a disjunctive multiplicity expression $E$ iff it is consistent with the triple $(C_E, N_E, P_E)$.
\end{lemma}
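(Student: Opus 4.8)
The plan is to prove the two directions of the equivalence separately, exploiting the fact that (after normalization) a disjunctive multiplicity expression $E = D_1^{M_1} \shuffle \ldots \shuffle D_m^{M_m}$ has a very rigid shape: the symbol sets $\Sigma_{D_1}, \ldots, \Sigma_{D_m}$ are pairwise disjoint (no symbol repeats in $E$), and each $D_i^{M_i}$ falls into one of the three normal forms listed above. The ($\Rightarrow$) direction is the easy one: assume $w \in L(E)$. By definition of $L(E)$ as an unordered concatenation, $w = w_1 \uplus \ldots \uplus w_m$ with $w_i \in L(D_i^{M_i})$, and since the $\Sigma_{D_i}$ are disjoint, each $w_i$ is exactly the restriction of $w$ to $\Sigma_{D_i}$. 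I would then check the three consistency conditions in turn. For $w \models N_E$: $(a, w(a)) \in N_E$ holds by definition of $N_E$ since $w$ itself witnesses it (here I just need $w \in L(E)$, which is the hypothesis). For $w \models C_E$: if $(a_1, a_2) \in C_E$ then by definition no word of $L(E)$ contains both; since $w \in L(E)$, $w$ cannot contain both, which is exactly $w \models C_E$. For $w \models P_E$: if $X \in P_E$ then every word of $L(E)$, in particular $w$, contains some element of $X$. So this direction is almost immediate from the definitions of the three sets.

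The ($\Leftarrow$) direction is the substantive one. Assume $w \models C_E$, $w \models N_E$, and $w \models P_E$; I must produce a decomposition $w = w_1 \uplus \ldots \uplus w_m$ witnessing $w \in L(E)$, where $w_i$ is forced to be the restriction of $w$ to $\Sigma_{D_i}$ together with whatever symbols lie outside $\bigcup_i \Sigma_{D_i}$ — but $w \models N_E$ forces $w(a) = 0$ for any $a \notin \bigcup_i \Sigma_{D_i}$ (since $N_E^*(a) = 0$ there), so $w$ is supported on $\bigcup_i \Sigma_{D_i}$ and the decomposition is canonical. It then suffices to show, for each $i$, that $w_i := w\!\restriction_{\Sigma_{D_i}} \in L(D_i^{M_i})$. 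I would argue this by case analysis on the normal form of $D_i^{M_i}$, using the compact-representation formulas for $C_E^*$, $N_E^*$, $P_E^*$ derived above:
\begin{itemize}
\item If $D_i^{M_i}$ has form (2), a disjunction $(a_1^{M_1'} \mid \ldots \mid a_n^{M_n'})$ with all $0 \notin \llbracket M_j' \rrbracket$ (including the single-symbol case $\Sigma_{D_i} = \{a\}$ with multiplicity $M^a$), then $\Sigma_{D_i} \in P_E^*$, so $w \models P_E$ forces some $a_j \in w$; the pairs $(a_j, a_{j'}) \in C_E$ for $j \neq j'$ together with $w \models C_E$ force all the other symbols of $\Sigma_{D_i}$ out of $w$; and $w \models N_E$ with $N_E^*(a_j) = M_j'$ forces $w(a_j) \in \llbracket M_j' \rrbracket$. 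Hence $w_i = a_j^{\,w(a_j)} \in L(a_j^{M_j'}) \subseteq L(D_i)$.
\item If $D_i^{M_i}$ has form (3), a disjunction with all $0 \in \llbracket M_j' \rrbracket$ and multiplicity $1$, then $\Sigma_{D_i} \notin P_E^*$ but the conflict pairs still hold, so $w \models C_E$ forces at most one symbol $a_j \in \Sigma_{D_i}$ to occur in $w$; if none occurs, $w_i = \varepsilon \in L(D_i)$ (legitimate since every $M_j'$ admits $0$); if $a_j$ occurs, $w \models N_E$ with $N_E^*(a_j) = M_j'$ (which is $?$ or $*$ here, or more precisely $\llbracket M_j'\rrbracket$) gives $w(a_j) \in \llbracket M_j'\rrbracket$, so $w_i = a_j^{\,w(a_j)} \in L(D_i)$.
\item If $D_i^{M_i}$ has form (1), $(a_1 \mid \ldots \mid a_n)^+$, then there are no conflict pairs among $\Sigma_{D_i}$, $\Sigma_{D_i} \in P_E^*$, and $N_E^*(a_j) = *$ for each $j$. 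From $w \models P_E$ some $a_j$ occurs in $w$, so $w_i \neq \varepsilon$; and $w_i = \biguplus_{j} a_j^{\,w(a_j)}$ is a multiset union of at least one and at most finitely many symbols each drawn from $L(a_1 \mid \ldots \mid a_n)$, which is exactly a word in $L((a_1 \mid \ldots \mid a_n)^+)$. The only thing to verify is that no $N_E$ constraint is violated, and indeed $N_E^*(a_j) = *$ imposes nothing.
\end{itemize}
Concatenating the $w_i$ over all $i$ recovers $w$ and witnesses $w \in L(E)$.

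The main obstacle I anticipate is bookkeeping in the ($\Leftarrow$) direction rather than any deep difficulty: one must be careful that the compact representations $C_E^*$, $N_E^*$, $P_E^*$ genuinely capture $C_E$, $N_E$, $P_E$ (this is exactly the content of the ``iff'' statements following each construction, which I may invoke), and that the three normal forms are exhaustive after normalization. A subtle point worth stating explicitly is that the conditions are not redundant: $N_E$ alone does not rule out simultaneously picking two conflicting symbols (that is what $C_E$ does), and $C_E, N_E$ together do not force the presence of a symbol in a mandatory disjunction of type (1) or (2) (that is what $P_E$ does) — so all three are needed, and the case analysis above shows they are jointly sufficient. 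I would close by noting that the argument simultaneously justifies viewing $(C_E, N_E, P_E)$ as a semantic invariant of $L(E)$, which is what the subsequent reduction of DMS containment to triple containment relies on.
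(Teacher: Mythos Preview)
Your proposal is correct and matches the paper's proof in structure: for the substantive $(\Leftarrow)$ direction you perform the same canonical decomposition of $w$ along the pairwise-disjoint $\Sigma_{D_i}$ (using $w\models N_E$ to kill symbols outside $\bigcup_i\Sigma_{D_i}$) and the same case analysis on the three normal forms, invoking $P_E$ for presence, $C_E$ for uniqueness within a disjunct, and $N_E$ for the count. Your $(\Rightarrow)$ direction is in fact cleaner than the paper's---you observe that consistency with each of $C_E,N_E,P_E$ is immediate from their definitions (which quantify over $L(E)$), whereas the paper re-runs the three-case analysis there as well; one small imprecision to fix when you write it up is that in case~(2) with $|\Sigma_{D_i}|>1$ one has $\llbracket N_E^*(a_j)\rrbracket=\{0\}\cup\llbracket M_j'\rrbracket$ rather than $N_E^*(a_j)=M_j'$, but combined with $a_j\in w$ this still yields $w(a_j)\in\llbracket M_j'\rrbracket$ as you need.
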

\begin{proof}\normalfont
For the \emph{if} part, consider the triple $(C_E, N_E, P_E)$ corresponding to a normalized disjunctive multiplicity expression $E=D_1^{M_1}\shuffle\ldots\shuffle D_m^{M_m}$, and an unordered word $w$ such that $w\models (C_E, N_E, P_E)$.
Let $w=w_1\uplus\ldots\uplus w_m\uplus w'$, where, intuitively, each $w_i$ contains all the occurrences in $w$ of the symbols from $\Sigma_{D_i}$. 
Formally: 
\[
\forall i.\ 1\leq i\leq m.\ ((\forall a\in \Sigma_{D_i}.\ w_i(a) = w(a))\wedge (\forall a'\in\Sigma~\backslash~\Sigma_{D_i}.\ w_i(a') = 0)).
\]
Since $w\models N_E$, we infer that $\forall a\in\Sigma~\backslash~(\Sigma_{D_1}\cup\ldots\cup\Sigma_{D_m}).\ w(a)= 0$, which implies that $w'=\varepsilon$.
Thus, proving $w\models E$ reduces to proving that $\forall i.\ 1\leq i\leq m.\ w_i\models D_i^{M_i}$.
We prove while reasoning on each of the three possible forms of the disjunctions $D_i^{M_i}$, for every $i$ such that $1\leq i\leq m$:
\begin{enumerate}
\item $D_i^{M_i} = (a_1\mid\ldots\mid a_n)^+$, which implies that $\{a_1,\ldots,a_n\}\in P_E$.
Since $w$ is consistent with $P_E$, we infer that $\exists j.\ 1\leq j\leq n.\ a_j\in w$.
From the construction of $w_i$ we obtain $a_j\in w_i$, hence $w_i\models D_i^{M_i}$.
\item $D_i^{M_i} = (a_1^{M_1}\mid\ldots\mid a_n^{M_n})\wedge \forall j.\ 1\leq j\leq n.\ 0\notin\llbracket M_j\rrbracket$. 
The form of $D_i^{M_i}$ and the definition of $N_E$ imply that:
\[
\forall j.\ 1\leq j\leq n.\ \forall x\in \{0\}\cup \llbracket M_j\rrbracket.\ (a_j,x)\in N_E.
\]
The form of $D_i^{M_i}$ and the definition of $P_E$ imply that $\{a_1,\dots, a_n\}\in P_E$.
Since $w$ is consistent with $P_E$ and $N_E$, we infer that $\exists j.\ 1\leq j\leq n.\ a_j\in w$, and, moreover, $w(a_j)\in\llbracket M_j\rrbracket$.

The form of $D_i^{M_i}$ and the definition of $C_E$ imply that $\forall j,l\in\{1,\ldots,n\} .\ (j\neq l \Rightarrow (a_j, a_l)\in C_E)$, which implies that $\forall j,l\in\{1,\ldots,n\} .\ ((j\neq l\wedge a_j\in w) \Rightarrow a_l\notin w)$.

From the last two relations we obtain that: 
\[\exists j.\ 1\leq j\leq n.\ (w_i(a_j)\in \llbracket M_j\rrbracket\wedge \forall l.\ 1\leq l\leq n.\ (l\neq j\Rightarrow a_l \notin w_i)),
\]
in other words we have shown that $w_i\models D_i^{M_i}$.

\item $D_i^{M_i} = (a_1^{M_1}\mid\ldots\mid a_n^{M_n})\wedge \forall j.\ 1\leq j\leq n.\ 0\in\llbracket M_j\rrbracket$.
The reasoning is similar to the previous case, the only difference is that now $\{a_1,\dots, a_n\}\notin P_E$, so we obtain $w_i\models D_i^{M_i}$ even if none of the $a_j$ is present in $w_i$.
\end{enumerate}
From the three cases presented above we conclude that $w\models(C_E, N_E, P_E)\Rightarrow w\models E$.

For the \emph{only if} part, consider a normalized disjunctive multiplicity expression $E=D_1^{M_1}\shuffle\ldots\shuffle D_m^{M_m}$ and an unordered word $w$ such that $w\models E$.
This is equivalent to:
\[
\exists w_1,\dots,w_m.\ w=w_1\uplus\dots\uplus w_m \wedge \forall i.\ 1\leq i\leq m.\ w_i\models D_i^{M_i}.
\] 
We prove that $E\models(C_E,N_E,P_E)$ while reasoning on the three cases for $D_i^{M_i}$, for every $i$ such that $1\leq i\leq m$:
\begin{enumerate}
\item $D_i^{M_i} = (a_1\mid\ldots\mid a_n)^+$. In this case $w_i\models D_i^{M_i}$ implies that $\exists j.\ 1\leq j\leq n.\ a_j\in w_i$, so $\{a_1,\dots, a_n\}\in P_E$ is satisfied.
There are no conflicting pairs of symbols in $\{a_1,\dots, a_n\}$.
Since $\forall j.\ 1\leq j\leq n.\ \forall x\in\mathbb N_0.\ (a_j,x)\in N_E$, we obtain that in $w_i$ all the symbols have numbers of occurrences consistent with $N_E$.

\item $D_i^{M_i} = (a_1^{M_1}\mid\ldots\mid a_n^{M_n})\wedge \forall j.\ 1\leq j\leq n.\ 0\notin\llbracket M_j\rrbracket$. In this case $w_i\models D_i^{M_i}$ implies that:
\[
\exists j.\ 1\leq j\leq n.\ (w_i(a_j)\in \llbracket M_j\rrbracket\wedge \forall l.\ 1\leq l\leq n.\ (l\neq j\Rightarrow a_l \notin w_i)),
\]
which implies that $\{a_1,\dots, a_n\}\in P_E$ is satisfied.

The conflicting pairs of symbols are also satisfied, more precisely we know from the form of $D_i^{M_i}$ and the definition of $C_E$ that $\forall j,l\in\{1,\ldots,n\} .\ (j\neq l \Rightarrow (a_j, a_l)\in C_E)$. Moreover, $w_i\models D_i^{M_i}$ implies that $\forall j,l\in\{1,\ldots,n\} .\ (j\neq l\wedge a_j\in w_i \Rightarrow a_l\notin w_i)$, so there are no conflicts in $w_i$.

From the form of $D_i^{M_i}$ and the definition of $N_E$, we know that:
\[
\forall j.\ 1\leq j\leq n.\ \forall x\in \{0\}\cup \llbracket M_j\rrbracket.\ (a_j,x)\in N_E.
\]
We infer that $w_i$ is consistent with $N_E$ for the present symbol (since $w_i\models D_i^{M_i}$) and also for the symbols which are not present (since $0$ belongs to their extended cardinality map).

\item $D_i^{M_i} = (a_1^{M_1}\mid\ldots\mid a_n^{M_n})\wedge \forall j.\ 1\leq j\leq n.\ 0\in\llbracket M_j\rrbracket$.
In this case the reasoning for $C_E$ and $N_E$ is similar to the previous case.
The difference is that now $P_E$ is less restrictive, since $\{a_1,\dots,a_n\}\notin P_E$.
\end{enumerate}
From the three cases presented above we conclude that $w\models E\Rightarrow w\models(C_E, N_E, P_E)$
\qed\end{proof}
We also characterize the inclusion of two languages given by the characterizing triples:
\begin{lemma}\label{lemma:incl-tuple}
Given two disjunctive multiplicity expressions $E_1$ and $E_2$:
$(C_{E_1}\subseteq C_{E_2}\wedge N_{E_2}\subseteq N_{E_1}\wedge P_{E_1}\subseteq P_{E_2})$ iff $(\forall w.\ w\models (C_{E_2}, N_{E_2}, P_{E_2})\Rightarrow w\models (C_{E_1}, N_{E_1}, P_{E_1}))$.
\end{lemma}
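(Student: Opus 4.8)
The plan is to prove the two implications separately. The left-to-right direction is a routine monotonicity argument; the right-to-left direction I would handle by contraposition, extracting explicit witness words from the very definitions of the components of $E_2$'s characterizing triple.

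For the \emph{left-to-right} direction, assume $C_{E_1}\subseteq C_{E_2}$, $N_{E_2}\subseteq N_{E_1}$, and $P_{E_1}\subseteq P_{E_2}$, and let $w$ be any unordered word with $w\models(C_{E_2},N_{E_2},P_{E_2})$. I would then verify the three consistency conditions for $(C_{E_1},N_{E_1},P_{E_1})$ one by one: every pair in $C_{E_1}$ is a pair in $C_{E_2}$, so $w\models C_{E_1}$ follows from $w\models C_{E_2}$; for every $a\in\Sigma$ we have $(a,w(a))\in N_{E_2}\subseteq N_{E_1}$, so $w\models N_{E_1}$; and every $X\in P_{E_1}$ lies in $P_{E_2}$, so the existence of some $a\in X$ with $a\in w$ comes from $w\models P_{E_2}$. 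The only point worth flagging is that the inclusion for the extended cardinality maps runs in the opposite direction to the other two, which is precisely what makes this argument valid.

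For the \emph{right-to-left} direction, I would argue by contraposition: assuming at least one of the three inclusions fails, I build a word $w$ consistent with $(C_{E_2},N_{E_2},P_{E_2})$ but not with $(C_{E_1},N_{E_1},P_{E_1})$. If $C_{E_1}\not\subseteq C_{E_2}$, choose $(a_1,a_2)\in C_{E_1}\setminus C_{E_2}$; since $(a_1,a_2)\notin C_{E_2}$, the definition of $C_{E_2}$ yields some $w\in L(E_2)$ with $a_1\in w$ and $a_2\in w$, which is consistent with $(C_{E_2},N_{E_2},P_{E_2})$ by Lemma~\ref{lemma:cons-tuple} yet violates $w\models C_{E_1}$. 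If $N_{E_2}\not\subseteq N_{E_1}$, choose $(a,x)\in N_{E_2}\setminus N_{E_1}$ and take the witnessing $w\in L(E_2)$ with $w(a)=x$ furnished by the definition of $N_{E_2}$; again $w\models(C_{E_2},N_{E_2},P_{E_2})$ by Lemma~\ref{lemma:cons-tuple}, but $(a,w(a))\notin N_{E_1}$, so $w\not\models N_{E_1}$. If $P_{E_1}\not\subseteq P_{E_2}$, choose $X\in P_{E_1}\setminus P_{E_2}$; since $X\notin P_{E_2}$, the definition of $P_{E_2}$ gives $w\in L(E_2)$ with $a\notin w$ for every $a\in X$, so $w\not\models P_{E_1}$ while $w\models(C_{E_2},N_{E_2},P_{E_2})$. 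In each case $w$ witnesses the failure of the right-hand side.

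I do not expect a genuine obstacle here; the argument is essentially bookkeeping. The one place requiring care is that each failed inclusion must be turned into an honest element of $L(E_2)$ — which is exactly what the definitions of $C_{E_2}$, $N_{E_2}$, $P_{E_2}$ deliver — and then Lemma~\ref{lemma:cons-tuple} (its trivial direction: members of $L(E_2)$ are consistent with the triple of $E_2$) lets us use that word as a witness on the triple side. If one prefers to avoid citing Lemma~\ref{lemma:cons-tuple} at this point, the consistency of any $w\in L(E_2)$ with $(C_{E_2},N_{E_2},P_{E_2})$ can equally be read off directly from the definitions of the three sets.
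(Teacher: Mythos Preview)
Your proposal is correct and follows essentially the same approach as the paper: the direct monotonicity argument for the forward implication and contraposition via witness words drawn from $L(E_2)$ for the converse. If anything, you are slightly more careful than the paper in explicitly invoking Lemma~\ref{lemma:cons-tuple} in each of the three contraposition cases to ensure the witness satisfies the \emph{full} triple $(C_{E_2},N_{E_2},P_{E_2})$, not just the single failing component.
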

\begin{proof}\normalfont
For the \emph{if} part, we prove by contraposition:
\begin{itemize}
\item $C_{E_1}\varnot\subseteq C_{E_2} \Rightarrow \exists (a_1,a_2)\in C_{E_1}.\ (a_1,a_2)\notin C_{E_2}\Rightarrow \exists (a_1,a_2)\in\Sigma\times\Sigma.\  (\varnot\exists w\in L(E_1).\ a_1\in w \wedge a_2\in w)\wedge(\exists w'\in L(E_2).\ a_1\in w'\wedge a_2\in w')
\Rightarrow (\exists w'.\ w'\models C_{E_2}\wedge w'\varnot\models C_{E_1})$.
\item $N_{E_2}\varnot\subseteq N_{E_1}\Rightarrow \exists a\in\Sigma.\ \exists w\in L(E_2).\ \varnot\exists w'\in  L(E_1).\ w'(a) = w(a)\Rightarrow (\exists w.\ w\models N_{E_2}\wedge w\varnot\models N_{E_1})$.
\item $P_{E_1}\varnot\subseteq P_{E_2}
\Rightarrow \exists X\subseteq \Sigma.\ (\forall w\in L(E_1).\ \exists a\in X.\ a\in w ) \wedge(\exists w'\in L(E_2).\ \forall a\in X.\ a \notin w')\Rightarrow (\exists w'.\ w'\models L(E_2)\wedge w'\varnot\models L(E_1))$. Using the previous Lemma, we infer that $(\exists w'.\ w'\models P_{E_2}\wedge w'\varnot\models P_{E_1}).$
\end{itemize}
For the \emph{only if} part, we take an unordered word $w$ such that $w\models (C_{E_2}, N_{E_2}, P_{E_2})$ and we want to prove that $w\models (C_{E_1}, N_{E_1}, P_{E_1})$, assuming that $C_{E_1}\subseteq C_{E_2}$, $N_{E_2}\subseteq N_{E_1}$, and $P_{E_1}\subseteq P_{E_2}$.

By definition, $w \models N_{E_2}$ implies that $\forall a \in \Sigma$, $(a,w(a)) \in N_{E_2}$. By hypothesis, $N_{E_2}\subseteq N_{E_1}$, therefore $\forall a \in \Sigma. (a,w(a)) \in N_{E_1}$, which by definition gives $w \models N_{E_1}$.

By definition, $w \models C_{E_2}$ implies that for all $(a, b) \in C_{E_2}$, (i) $(a\in w \Rightarrow b \notin w) \wedge (b\in w \Rightarrow a\notin w)$. By hypothesis, $C_{E_1}\subseteq C_{E_2}$, therefore (i) holds also for all $(a,b) \in C_{E_1}$, which by definition gives $w \models C_{E_1}$.

By definition, $w \models P_{E_2}$ implies that for all $X \in P_{E_2}$, (ii) $\exists a \in X$ s.t. $a \in w$. By hypothesis, $P_{E_1}\subseteq P_{E_2}$, therefore (ii) also holds for all $X \in P_{E_1}$, which by definition gives $w \models P_{E_1}$.
\qed\end{proof}
A consequence of Lemmas \ref{lemma:cons-tuple} and \ref{lemma:incl-tuple} is that the characterizing triples allow us to capture the containment of
disjunctive multiplicity expressions:
\begin{lemma}\label{lemma:characteristic-sets}
  Given two disjunctive multiplicity expressions $E_1$ and $E_2$,
  ${L}(E_2)\subseteq {L}(E_1)$ iff $C_{E_1}\subseteq C_{E_2}$,  
  $N_{E_2}\subseteq N_{E_1}$, and $P_{E_1}\subseteq P_{E_2}$.
\end{lemma}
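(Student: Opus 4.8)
The plan is to derive this lemma as an immediate consequence of Lemmas~\ref{lemma:cons-tuple} and~\ref{lemma:incl-tuple}, which together already do all the real work. First I would use Lemma~\ref{lemma:cons-tuple} to re-express each language in terms of consistency with the characterizing triple: for $i\in\{1,2\}$ we have $w\in L(E_i)$ iff $w\models(C_{E_i},N_{E_i},P_{E_i})$. Hence the inclusion $L(E_2)\subseteq L(E_1)$ is equivalent to the statement that every unordered word consistent with $(C_{E_2},N_{E_2},P_{E_2})$ is also consistent with $(C_{E_1},N_{E_1},P_{E_1})$, that is, $\forall w.\ w\models(C_{E_2},N_{E_2},P_{E_2})\Rightarrow w\models(C_{E_1},N_{E_1},P_{E_1})$.

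Next I would invoke Lemma~\ref{lemma:incl-tuple} verbatim: it states precisely that this last condition is equivalent to the conjunction $C_{E_1}\subseteq C_{E_2}$, $N_{E_2}\subseteq N_{E_1}$, and $P_{E_1}\subseteq P_{E_2}$. Chaining the two equivalences then yields the claim, in both directions simultaneously. No case analysis on the shape of the disjunctions $D_i^{M_i}$ is required here, since that analysis has already been carried out inside the proofs of the two cited lemmas.

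The only points needing a little care are bookkeeping, not genuine obstacles: one must keep track of the contravariance of the extended cardinality map (the inclusion $N_{E_2}\subseteq N_{E_1}$ runs in the opposite direction to those for $C$ and $P$, reflecting that a smaller language admits fewer occurrence counts but must still exhibit at least the conflicting pairs and required sets of the larger one), and one should note that the normalization hypothesis under which Lemmas~\ref{lemma:cons-tuple} and~\ref{lemma:incl-tuple} were established is harmless, since by the convention adopted earlier every disjunctive multiplicity expression is assumed normalized without loss of generality.
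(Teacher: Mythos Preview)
Your proposal is correct and matches the paper's approach exactly: the paper states this lemma explicitly as a consequence of Lemmas~\ref{lemma:cons-tuple} and~\ref{lemma:incl-tuple}, with no further argument given. Your chaining of the two equivalences is precisely the intended derivation.
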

The above lemma shows that two equivalent disjunctive multiplicity expressions yield the same triples and hence the triple $(C_E,N_E,P_E)$ can be viewed as a normal form for the languages definable by a DMS. Formally:
\begin{corollary}
  Given two disjunctive multiplicity expressions $E_1,E_2$, it holds that $L(E_1) = L(E_2)$ iff $C_{E_1} = C_{E_2}$, $N_{E_1} = N_{E_2}$, and $P_{E_1} = P_{E_2}$.
\end{corollary}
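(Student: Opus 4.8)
The plan is to derive this corollary directly from Lemma~\ref{lemma:characteristic-sets}, using the elementary observation that $L(E_1) = L(E_2)$ holds exactly when both inclusions $L(E_1)\subseteq L(E_2)$ and $L(E_2)\subseteq L(E_1)$ hold. So the whole argument is a double application of the preceding lemma, plus conjoining the resulting set inclusions.

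Concretely, for the ``only if'' direction I would assume $L(E_1) = L(E_2)$, hence both inclusions. Applying Lemma~\ref{lemma:characteristic-sets} to $L(E_2)\subseteq L(E_1)$ gives $C_{E_1}\subseteq C_{E_2}$, $N_{E_2}\subseteq N_{E_1}$, and $P_{E_1}\subseteq P_{E_2}$. Applying the same lemma with the roles of $E_1$ and $E_2$ interchanged to $L(E_1)\subseteq L(E_2)$ gives the three reverse inclusions $C_{E_2}\subseteq C_{E_1}$, $N_{E_1}\subseteq N_{E_2}$, and $P_{E_2}\subseteq P_{E_1}$. Conjoining the two triples of inclusions yields $C_{E_1} = C_{E_2}$, $N_{E_1} = N_{E_2}$, and $P_{E_1} = P_{E_2}$. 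For the ``if'' direction I would simply note that the three equalities trivially imply all six of the above inclusions, so Lemma~\ref{lemma:characteristic-sets}, read in both directions, yields $L(E_1)\subseteq L(E_2)$ and $L(E_2)\subseteq L(E_1)$, i.e.\ $L(E_1) = L(E_2)$.

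There is essentially no obstacle. The only point requiring minor care is the bookkeeping of inclusion directions in Lemma~\ref{lemma:characteristic-sets}: the correspondence between languages and characterizing triples is antitone in the conflict sets $C$ and in the required-symbol sets $P$ (a larger language has fewer conflicting pairs and fewer required sets), whereas it is monotone in the extended cardinality map $N$; keeping these signs straight is all that is needed for the componentwise equalities to fall out of the two applications. Finally, I would remark, as the sentence introducing the corollary already indicates, that this statement is precisely the assertion that the triple $(C_E,N_E,P_E)$ constitutes a normal form for the languages definable by disjunctive multiplicity expressions.
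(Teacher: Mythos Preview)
Your proposal is correct and matches the paper's approach exactly: the paper does not even supply a separate proof for this corollary, treating it as an immediate consequence of Lemma~\ref{lemma:characteristic-sets} applied in both directions. Your careful tracking of which inclusions are antitone ($C$ and $P$) versus monotone ($N$) is accurate and is precisely the bookkeeping needed.
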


\subsubsection{Complexity results}
From Lemma~\ref{lemma:characteristic-sets} we know that the containment of two disjunctive multiplicity expressions is equivalent to the containment of their characterizing triples.
Next, we show that we can decide it in PTIME by using the compact representation of the characterizing triples:
\begin{lemma}\label{lemma:triple:ptime}
Given two disjunctive multiplicity expressions $E_1$ and $E_2$, deciding whether $L(E_2)\subseteq L(E_1)$ is in PTIME.
\end{lemma}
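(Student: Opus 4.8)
The plan is to reduce the problem via Lemma~\ref{lemma:characteristic-sets}, which states that $L(E_2)\subseteq L(E_1)$ holds exactly when $C_{E_1}\subseteq C_{E_2}$, $N_{E_2}\subseteq N_{E_1}$, and $P_{E_1}\subseteq P_{E_2}$. So it suffices to show that each of these three set inclusions can be decided in PTIME. Since the excerpt already establishes that the compact representations $C_E^*$, $N_E^*$, and $P_E^*$ are computable from $E$ in PTIME and have size linear in $|\Sigma|$, the first step is simply to compute the six objects $C_{E_1}^*, N_{E_1}^*, P_{E_1}^*, C_{E_2}^*, N_{E_2}^*, P_{E_2}^*$, after which all remaining work is on these polynomial-size representations.

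For the extended cardinality maps, recall $(a,x)\in N_E$ iff $x\in\llbracket N_E^*(a)\rrbracket$; hence $N_{E_2}\subseteq N_{E_1}$ iff $\llbracket N_{E_2}^*(a)\rrbracket\subseteq\llbracket N_{E_1}^*(a)\rrbracket$ for every $a\in\Sigma$, and since there are only five multiplicities the relation $\llbracket M\rrbracket\subseteq\llbracket M'\rrbracket$ is a constant-size lookup, giving an $O(|\Sigma|)$ test. For the conflict relations, using the recalled characterization a pair $(a,b)$ lies in $C_E$ iff $\{a,b\}\subseteq X$ for some $X\in C_E^*$, or at least one of $a,b$ does not occur in $E$; so deciding $C_{E_1}\subseteq C_{E_2}$ amounts to enumerating the $O(|\Sigma|^2)$ pairs, and for each pair in $C_{E_1}$ checking membership in $C_{E_2}$ in time linear in $|C_{E_2}^*|$, which is polynomial overall. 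For the required-symbol families, one first argues that $P_{E_1}\subseteq P_{E_2}$ iff every $X\in P_{E_1}^*$ lies in $P_{E_2}$: the nontrivial direction uses that any $X\in P_{E_1}$ contains some $X'\in P_{E_1}^*\subseteq P_{E_2}$ and that $P_{E_2}$ is upward closed; then, since $X\in P_{E_2}$ iff some $X'\in P_{E_2}^*$ has $X'\subseteq X$, the check becomes: for each of the at most $|\Sigma|$ sets in $P_{E_1}^*$, is it a superset of some member of $P_{E_2}^*$? This runs in $O(|\Sigma|^3)$.

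Combining the three polynomial-time checks with Lemma~\ref{lemma:characteristic-sets} yields the result. The only real subtlety, and the step I would write most carefully, is justifying that replacing each of the (quadratically large for $C$, possibly exponentially large for $P$) sets by its compact representation loses no information for the purpose of deciding inclusion — i.e.\ that the upward-closure of $P_E$ and the ``symbol-absent-from-$E$'' clause for $C_E$ exactly account for the gap between the full sets and their compact forms. Everything else is bookkeeping over objects of size linear in $|\Sigma|$.
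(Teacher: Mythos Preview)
Your proposal is correct and follows essentially the same approach as the paper: reduce via Lemma~\ref{lemma:characteristic-sets} to the three inclusions, handle $N$ by per-symbol multiplicity comparison, handle $P$ by checking that every minimal set in $P_{E_1}^*$ contains some member of $P_{E_2}^*$, and handle $C$ directly on the quadratic-size relation rather than via the compact form. Your write-up is in fact a bit more careful than the paper's in justifying why the compact representations suffice for $P$ (the upward-closure argument), but the underlying argument is the same.
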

\begin{proof}\normalfont
From Lemma~\ref{lemma:characteristic-sets} we know that, given two disjunctive multiplicity expressions $E_1$ and $E_2$,
  ${L}(E_2)\subseteq {L}(E_1)$ iff $C_{E_1}\subseteq C_{E_2}$,  
  $N_{E_2}\subseteq N_{E_1}$, and $P_{E_1}\subseteq P_{E_2}$.
Note that testing $N_{E_2}\subseteq N_{E_1}$ is equivalent to testing whether $\forall a \in\Sigma.\ N_{E_2}^*(a)\subseteq N_{E_1}^*(a)$, which is in PTIME since it reduces to manipulating multiplicities.
Moreover, note that testing $P_{E_1}\subseteq P_{E_2}$ is equivalent to testing whether $\forall X\in P_{E_1}^*.\ \exists Y\in P_{E_2}^*.\ Y\subseteq X$, which is in PTIME since it reduces to testing the inclusion of a polynomial number of polynomial sets.
On the other hand, we can decide $C_{E_2}\subseteq C_{E_1}$ in PTIME without using the compact representation because each of these sets has a number of elements quadratic in $|\Sigma|$, and can be easily computed in $O(|\Sigma|^2)$.
\qed\end{proof}
Furthermore, testing the containment of two DMS reduces to testing, for each symbol in the alphabet, the containment of the associated disjunctive multiplicity expressions.
This problem is in PTIME (from Lemma~\ref{lemma:triple:ptime}).
Hence, we obtain:
\begin{theorem}\label{cnt-ptime}
$\CNT_{\dms}$ is in PTIME.
\end{theorem}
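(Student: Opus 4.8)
The plan is to reduce schema containment to the already-solved problem of containment of disjunctive multiplicity expressions (Lemma~\ref{lemma:triple:ptime}), exploiting the fact that a DMS assigns content models to labels independently. First I would observe that the labels reachable in any tree satisfying a DMS $S=(\root_S,R_S)$ are exactly those reachable from $\root_S$ by following, step by step, the symbols that may legally occur as children; the relevant notion is a simple ``reachability'' closure that can be computed in PTIME by a fixpoint starting from $\{\root_S\}$ and, at each step, adding every symbol $a$ such that $a$ occurs in $R_S(b)$ for some already-reached $b$ whose expression is actually satisfiable. (One must be slightly careful: a symbol $a$ occurring in $R_S(b)$ is only genuinely reachable if $b$ is reachable \emph{and} $b$'s subtree can be completed, i.e.\ the schema restricted to the relevant labels is satisfiable; by Proposition~\ref{sat-memb-ptime} satisfiability is in PTIME, so this does not affect the complexity.)

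Next I would prove the key structural claim: for two DMS $S_1=(\root_1,R_1)$ and $S_2=(\root_2,R_2)$, we have $L(S_1)\subseteq L(S_2)$ if and only if either $L(S_1)=\emptyset$, or ($\root_1=\root_2$ and for every label $a$ reachable in $S_1$, $L(R_1(a))\subseteq L(R_2(a))$). The ``if'' direction is immediate: if $t\models S_1$ then $\lab_t(\root_t)=\root_1=\root_2$, and for every node $n$ its children word $\ch_t^n$ lies in $L(R_1(\lab_t(n)))\subseteq L(R_2(\lab_t(n)))$, so $t\models S_2$; note that every label appearing in $t$ is reachable in $S_1$, so the hypothesis applies to exactly the labels we need. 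For the ``only if'' direction, assuming $L(S_1)\neq\emptyset$: the root labels must agree because a tree in $L(S_1)$ has root label $\root_1$ and must also satisfy $S_2$. For the content models, given a reachable label $a$ and a word $w\in L(R_1(a))$, I would build a witness tree $t\in L(S_1)$ in which some node carries label $a$ and has exactly $w$ as its children word: follow a reachability path from $\root_1$ down to $a$, attach at the target node a layer of children realizing $w$, and then, at every node of the construction (those on the path, the newly added children, and recursively their descendants), complete the subtree into a full model of $S_1$ using satisfiability of the reachable sub-schema. Since $t\models S_2$ as well, the children word of the $a$-labelled node, namely $w$, lies in $L(R_2(a))$; as $w$ was arbitrary this gives $L(R_1(a))\subseteq L(R_2(a))$.

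Having established the claim, the algorithm is clear: test whether $L(S_1)=\emptyset$ (PTIME by Proposition~\ref{sat-memb-ptime}); if so, accept; otherwise check $\root_1=\root_2$, compute the set of labels reachable in $S_1$ (PTIME fixpoint), and for each such label $a$ test $L(R_1(a))\subseteq L(R_2(a))$ using Lemma~\ref{lemma:triple:ptime}. This is a linear number of PTIME tests, hence PTIME overall, which proves the theorem.

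I expect the main obstacle to be the witness-tree construction in the ``only if'' direction, specifically making sure the tree we build is a genuine model of $S_1$: we are forcing the children word of one designated node to be an arbitrary $w\in L(R_1(a))$, and we must still complete \emph{every} node of the tree (including the forced children, whose own content models were not chosen freely) into a satisfying subtree. This is where reachability and satisfiability of the reachable sub-schema do the work — a reachable label is by construction one from which a finite satisfying subtree exists — and the argument must handle the subtlety that a symbol may occur syntactically in a rule body yet never be realizable because its own subtree cannot be completed; restricting attention to reachable labels (defined via the satisfiability-aware closure above) circumvents this. The remaining steps are routine bookkeeping.
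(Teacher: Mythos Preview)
Your approach---reduce schema containment to per-label expression containment via Lemma~\ref{lemma:triple:ptime}---is exactly the paper's; indeed the paper's entire argument is the single sentence ``testing the containment of two DMS reduces to testing, for each symbol in the alphabet, the containment of the associated disjunctive multiplicity expressions,'' so you are supplying the details the paper leaves implicit.

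There is, however, one wrinkle you anticipate but do not quite resolve. Your biconditional asserts $L(R_1(a))\subseteq L(R_2(a))$ for every reachable $a$, but this can fail even when $L(S_1)\subseteq L(S_2)$. Take $\Sigma=\{r,a,b\}$, $S_1$ with rules $r\rightarrow a^?\shuffle b^?$, $a\rightarrow a$, $b\rightarrow\epsilon$, and $S_2$ with $r\rightarrow b^?$, $b\rightarrow\epsilon$. Since the rule for $a$ demands an $a$-child ad infinitum, no finite tree in $L(S_1)$ contains an $a$-node, whence $L(S_1)=L(S_2)$; yet $L(R_1(r))=\{\varepsilon,a,b,ab\}\not\subseteq\{\varepsilon,b\}=L(R_2(r))$, and $r$ is certainly reachable. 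Your witness-tree construction breaks precisely here: for $w=a\in L(R_1(r))$ the forced $a$-child cannot be completed. The remedy is not merely to restrict which labels you \emph{test} but to first \emph{prune} $S_1$: compute by a PTIME fixpoint the set $U$ of labels from which a finite satisfying subtree exists, delete every symbol outside $U$ from every rule body (this is well-defined on normalized expressions and preserves $L(S_1)$), and run your argument on the resulting schema $S_1'$, in which every symbol appearing in a rule body is completable. After this preprocessing your characterization and witness construction go through verbatim.
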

Next, we present the complexity results for satisfiability and
membership, and a \emph{streaming} algorithm for solving the membership.
The problem of validating a XML document with bounded memory was addressed in~\cite{SeSi07,SeVi02} and their conclusion is that constant memory validations can be performed only for some DTDs.
We propose a \emph{streaming algorithm} which processes an XML document in a single pass, using memory which depends on the height of the tree and not on its size.
%
%
For a tree $t$, $\mathit{height}(t)$ is the height of $t$
defined in the usual way. We employ the standard RAM model and assume
that subsequent natural numbers are used as labels in $\Sigma$, startig with $1$. 
\begin{proposition}\label{sat-memb-ptime} 
  Checking satisfiability of a DMS $S$ can be done in time
  $O(|\Sigma|^2)$. There exists a streaming algorithm that
  checks membership of a tree $t$ in a DMS $S$ in time
  $O(|\Sigma|\times|t| + |\Sigma|^2)$ and using space
  $O(\mathit{height}(t) \times |\Sigma| + |\Sigma|^2)$.
\end{proposition}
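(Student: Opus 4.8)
The plan is to treat the two claims separately and, in both cases, reduce the question to manipulating the compact representations of characterizing triples rather than the expressions themselves.

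\textbf{Satisfiability.} First I would observe that satisfiability of a DMS $S=(\root_S,R_S)$ can be decided by a simple fixpoint / reachability computation on $\Sigma$: call a symbol $a$ \emph{productive} if there exists a finite tree rooted at $a$ all of whose nodes satisfy their rules. A symbol $a$ is productive iff the rule $R_S(a)$ admits some unordered word $w\in L(R_S(a))$ all of whose symbols are themselves productive. Using the normalized form of $R_S(a)=D_1^{M_1}\shuffle\ldots\shuffle D_m^{M_m}$, this check is local: for each disjunct $D_i^{M_i}$ we must be able to choose, among the symbols of $\Sigma_{D_i}$ whose multiplicity does not force a positive count, at least one productive symbol — and if $0\notin\llbracket M^a\rrbracket$ for all $a\in\Sigma_{D_i}$ (equivalently $\Sigma_{D_i}\in P_{R_S(a)}^*$) then at least one symbol of $\Sigma_{D_i}$ must be productive. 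So productivity of all symbols can be computed by the standard worklist algorithm for grammar emptiness: initialize the productive set from symbols whose rule is satisfiable by $\varepsilon$ or by already-productive symbols, and iterate. Since each rule has size $O(|\Sigma|)$ and we re-examine a rule only when one of its symbols becomes productive, the total work is $O(|\Sigma|^2)$; $S$ is satisfiable iff $\root_S$ is productive. I would present this as the algorithm together with a short correctness argument (a productive symbol yields a witnessing tree by induction on the order in which symbols are added; conversely a minimal witnessing tree shows its root label productive).

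\textbf{Membership, streaming.} For the second part I would describe an algorithm that performs a single depth-first (document-order) traversal of $t$, i.e.\ processes the stream of $\open(a)$ / $\close(a)$ events. The algorithm maintains a stack whose depth equals the current depth in $t$; the frame pushed on $\open(a)$ stores the label $a$ together with a running tally of the children seen so far for the current node — concretely, a counter per symbol, but only for the $O(|\Sigma|)$ symbols appearing in $R_S(a)$, so each frame uses $O(|\Sigma|)$ space and the whole stack uses $O(\mathit{height}(t)\times|\Sigma|)$. On $\open(b)$ we increment the $b$-counter of the top frame (its parent) and push a new frame for $b$; on $\close(a)$ we pop the top frame and verify that its accumulated child multiset $\ch^n_t$ satisfies $R_S(a)$. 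The key point is that checking $\ch^n_t\models R_S(a)$ can be done in $O(|\Sigma|)$ time using the compact characterizing triple $(C^*_{R_S(a)},N^*_{R_S(a)},P^*_{R_S(a)})$, which by the earlier development is computable from $R_S(a)$ in polynomial time: by Lemma~\ref{lemma:cons-tuple} it suffices to check (i) for every symbol $b$ that the count $\ch^n_t(b)$ lies in $\llbracket N^*_{R_S(a)}(b)\rrbracket$, (ii) no two symbols occurring in $\ch^n_t$ form a conflicting pair, which since the conflict classes $C^*$ partition-like the disjunctions reduces to checking that within each disjunction $D_i$ at most one symbol has nonzero count, and (iii) for every required set $\Sigma_{D_i}\in P^*$ some symbol of it occurs — all three scanned in time linear in $|\Sigma|$. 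We also check $\lab_t(\root_t)=\root_S$ at the first event. Precomputing all the triples costs $O(|\Sigma|^2)$ (there are $|\Sigma|$ rules, each of size $O(|\Sigma|)$), each of the $|t|$ events costs $O(|\Sigma|)$, and each $\close$ does an $O(|\Sigma|)$ validation, for a total of $O(|\Sigma|\times|t|+|\Sigma|^2)$ time and $O(\mathit{height}(t)\times|\Sigma|+|\Sigma|^2)$ space.

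\textbf{Main obstacle.} The routine parts are the stack bookkeeping and the time/space accounting; the one place needing care is arguing that the per-node validation really is $O(|\Sigma|)$ and correct. This rests on two things: that $\ch^n_t$ restricted to symbols outside $\bigcup_i\Sigma_{D_i}$ must be empty (handled by the $N^*=0$ entries), and that the conflict check via $C^*$ is equivalent to the full $C$ — here I would invoke the characterization stated just before Lemma~\ref{lemma:cons-tuple} that $(a,b)\in C_E$ iff $\{a,b\}\subseteq X$ for some $X\in C_E^*$ or one of $a,b$ is outside all disjunctions, so that scanning each disjunction once detects every violated conflict. A secondary subtlety is making sure "streaming" is interpreted as a single left-to-right pass over the serialized document with the stated memory bound, which I would state explicitly at the start of the proof. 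With those points isolated, the rest is a direct verification.
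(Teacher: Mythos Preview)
Your proposal is correct and follows essentially the same approach as the paper: a fixpoint/productivity computation for satisfiability, and a stack-based single-pass algorithm that precomputes the compact characterizing triples in $O(|\Sigma|^2)$ and validates each node's children multiset against $(C^*,N^*,P^*)$ in $O(|\Sigma|)$. The only noteworthy difference is that the paper engineers its streaming procedure to be \emph{earliest}: it encodes conflicts and required sets as integer-valued maps $\conflict_a,\required_a$ and performs cardinality and conflict checks already at each opening tag (rejecting immediately on violation), deferring only the required-symbol check to the closing tag; your version defers all validation to the close event, which is simpler and yields the same asymptotic bounds but not the earliest-rejection property the paper additionally claims.
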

The algorithm first checks satisfiability of the schema, by performing a preprocessing in time $O(|\Sigma|^2)$, and then a simple process based on dynamic programming. If the schema is not satisfiable, the algorithm rejects the tree w/o reading anything on the stream.
Then the algorithm checks whether the schema is universal. A schema $S$ is \emph{universal} if the $L(R_S(root_S))$ is the set of all unordered words over $\Sigma$. This can be performed in time $O(|\Sigma|^2)$. 
If we assume that $\Sigma=\{a_1,\ldots,a_n\}$, a simple algorithm has to check whether each normalized disjunctive multiplicity expression from the rules of the schema has the form $a_1^*\shuffle\ldots\shuffle a_n^*$.

For checking in streaming the membership of a tree $t$ to the
language of a DMS $S$, the input tree $t$ is given in XML format. The algorithm
works for any arbitrary ordering of sibling nodes.  If the schema is universal, then the algorithm only reads the opening tag of the root of the tree. The tree is accepted if the label of the root is $root_S$, and rejected otherwise. Otherwise,
given a DMS $S$, in a preprocessing stage the algorithm constructs compact representations of the characterizing triples of the expressions used by $S$.
Remark that, as DMS forbids repetition of symbols, the size of the representation of any expression is linear in $|\Sigma|$.
Therefore, encoding the schema requires $O(|\Sigma|^2)$ space.
For each symbol $a\in\Sigma$, we encode its corresponding rule using three \emph{global dictionaries}, that we define as functions:
\begin{itemize}
\item $\cardinality_a:\Sigma\rightarrow\{0,1,?,+,*\}$ which represents the extended cardinality map of the disjunctive multiplicity expression $R_S(a)$.
\item $\conflict_a:\Sigma \rightarrow \{0,1,\ldots,|\Sigma|\}$ which encodes the conflicts from $R_S(a)$ and has the following properties:
\begin{itemize}
\item For any disjunction of the form $(a_1^{M_1}\mid\ldots\mid a_n^{M_n}) \textrm{ from } R_S(a).\ \conflict_a(a_1) = \ldots =\conflict_a(a_n) \wedge \forall a'\in \Sigma.\ a'\notin\{a_1,\ldots,a_n\}.\ \conflict_a(a')\neq \conflict_a(a_1),$
\item $\forall a'\in\Sigma.\ \forall X\in C_{R_S(a)}^*.\ a'\notin X.\ \conflict_a(a') = 0$.
\end{itemize}

Let $\mathcal C_a = \{x\in \mathbb \{0,1,\ldots,|\Sigma|\}\mid \exists a'\in\Sigma.\ \conflict_a(a')=x\}$.
\item $\required_a:\Sigma \rightarrow \{0,1,\ldots,|\Sigma|\}$ which encodes the sets of required symbols from $R_S(a)$ and has the following properties:
\begin{itemize}

\item For any disjunction of the form $(a_1\mid\ldots\mid a_n)^+$ or $(a_1^{M_1}\mid\ldots\mid a_n^{M_n}).\ 0\notin\llbracket M_1 \rrbracket$ from $R_S(a).\ \required_a(a_1)=\ldots=\required_a(a_n)\wedge \forall a'\in \Sigma.\ a'\notin\{a_1,\ldots,a_n\}.\ \required_a(a')\neq \required_a(a_1),$
\item $\forall a'\in\Sigma.\ \forall X\in P_{R_S(a)}^*.\ a'\notin X.\ \required_a(a') = 0$.
\end{itemize}

Let $\mathcal P_a = \{x\in \mathbb \{0,1,\ldots,|\Sigma|\}\mid \exists a'\in\Sigma.\ \required_a(a')=x\}$.
\end{itemize}
For example, assume the rule
$r \rightarrow (a\mid b)^+\shuffle (c^?\mid d^*\mid e^*)\shuffle f^+\shuffle g^?\shuffle (i\mid j^+)$
over the alphabet $\Sigma=\{a,b,c,d,e,f,g,h,i,j\}$. 
A possible encoding is the following:
\begin{gather*}
\cardinality_r(a) = \cardinality_r(b) = \cardinality_r(d) =  \cardinality_r(e) = \cardinality_r(j) =*,\\
\cardinality_r(c) = \cardinality_r(g) = \cardinality_r(i) = ?, \\
\cardinality_r(f) = +, \qquad \cardinality_r(h) = 0,\\
\conflict_r(a) = \conflict_r(b) = \conflict_r(f) =\conflict_r(g) =\conflict_r(h) = 0,\\
\conflict_r(c) = \conflict_r(d) =\conflict_r(e) = 1,\\
\conflict_r(i) = \conflict_r(j) =2,\\
\required_r(c) = \required_r(d) = \required_r(e) = \required_r(g) =\required_r(h) = 0,\\
\required_r(a) = \required_r(b) = 1, \qquad \required_r(f) = 2,\\
\required_r(i) = \required_r(j) = 3.
\end{gather*}
During the execution, the algorithm maintains a stack whose height is the depth of the currently visited node. 
The bound on space required for stack operations is $O(\mathit{height}(t)\times|\Sigma|)$.
We describe the \emph{local variables} for each node $n\in N_t$, more precisely three dictionaries (with size linear in $|\Sigma|$) that we define as functions:
\begin{itemize}
\item $\mcount :\Sigma\rightarrow \{0,1,2\}$ (initial value = $0$),
\item $\pconflict:\mathcal C_{\lab_t(n)}~\backslash~\{0\}\rightarrow \Sigma\cup\{0\}$ (initial value = $0$),
\item $\prequired:\mathcal P_{\lab_t(n)}~\backslash~\{0\}\rightarrow \{0, 1\}$ (initial value = $0$).
\end{itemize}
Next, we present Algorithms~\ref{open} and \ref{close}, which are executed when we encounter an opening or a closing tag, respectively.
The streaming algorithm rejects a tree as soon as the opening tag is read for nodes that violate either some conflicting pair (Algorithm~\ref{open}, lines 8-9) or the allowed cardinality (Algorithm~\ref{open}, lines 4-7).
The algorithm also rejects a tree if at the closing tag of a node, there are children symbols required by the corresponding rule of the node's label and not present in its children list (Algorithm~\ref{close}, lines 1-2).
Unless the schema is universal (i.e., accepts any tree), the acceptance of a tree can be decided only after the closing tag of the root.
\begin{algorithm}
\caption{\label{open}Procedure to execute when we are in a node $n\in N_t$ and we encounter an \emph{open tag} of a node $n_a$ labeled by $a$.}
\ALGORITHM $\open(n_a)$\\
\INPUT: Open tag of a node $n_a\in N_t$ labeled by $a$\\
\OUTPUT: Reject the tree or update the local variables\\
\LN push on the stack the local variables for $n_a$\\
\LN \IF $\mcount(a)\neq 2$ \THEN \\
\LN \TAB $\mcount(a)\colonequals\mcount(a)+1$ \\
\LN \IF $\mcount(a) = 2$ \AND $\cardinality_{\lab_t(n)}(a)\notin\{+,*\}$ \THEN\\ 
\LN \TAB \REJECT\\
\LN \IF $\mcount(a) = 1$ \AND $\cardinality_{\lab_t(n)}(a)=0$ \THEN\\
\LN \TAB \REJECT\\
\LN \IF $\conflict_{\lab_t(n)}(a)\neq 0$ \AND $\pconflict(\conflict_{\lab_t(n)}(a))\notin\{0,a\}$ \THEN\\ 
\LN \TAB \REJECT\\
\LN \IF $\conflict_{\lab_t(n)}(a)\neq 0$ \THEN\\ 
\LN \TAB $\pconflict(\conflict_{\lab_t(n)}(a)) \colonequals a$\\
\LN \IF $\required_{\lab_t(n)}(a)\neq 0$ \THEN\\ 
\LN \TAB $\prequired(\required_{\lab_t(n)}(a)) \colonequals 1$\\
\end{algorithm}
\begin{algorithm}
\caption{\label{close}Procedure to execute when we encounter the \emph{close tag} of a node $n\in N_t$.}
\ALGORITHM $\close(n)$\\
\INPUT: Close tag of a node $n\in N_t$\\
\OUTPUT: Accept or reject the tree, or continue\\
\LN \IF $\exists p\in\mathcal P_{\lab_t(n)}~\backslash~\{0\}.\ \prequired(p) = 0$ \THEN\\
\LN\TAB \REJECT\\
\LN pop the local variables for $n$ from the stack\\
\LN\IF $n=\root_t$ \THEN\\
\LN\TAB\ACCEPT
\end{algorithm}

A streaming algorithm is called {\em earliest} if it produces its result at the earliest point. More precisely, consider the algorithm processing an XML stream of tree $t$ for checking membership of $t$ to a schema $S$. At each position of the stream (i.e. each opening or closing tag), the algorithm has seen a part of the tree $t$, and another part of $t$ remains unknown at that position. Let $p$ be some position of the stream. If the tree $t$ would be accepted (resp. rejected) whatever the part of $t$ unknown at position $p$, then an earliest streaming algorithm has to accept (resp. reject) the tree at position $p$. For example, if the language of the schema is universal, then an earliest algorithm would accept or reject the tree as soon as the opening tag of the root is read.
It can be shown that the algorithm presented here is earliest.

\noindent We continue with complexity results that follow from known facts.
Query satisfiability for DTDs is known to be NP-complete~\cite{BeFaGe08} and we adapt the result for DMS:
\begin{proposition}\label{satq-nphard}
$\SAT_{\dms,\Twig}$ is NP-complete.
\end{proposition}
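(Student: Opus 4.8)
The plan is to establish NP-completeness of $\SAT_{\dms,\Twig}$ by proving membership in NP and NP-hardness separately. For membership in NP, I would show that if a DMS $S$ and a twig query $q$ are jointly satisfiable, then there is a witness tree of polynomial size. The key observation is that a twig query has a bounded number of nodes, so an embedding uses only polynomially many tree nodes; one can take a minimal tree that contains the image of an embedding of $q$ and then, at each node, add only the children forced by the requirements of the DMS rule. Since DMS expressions forbid repetition of symbols, each rule requires at most $|\Sigma|$ additional children per node, and the required content can be satisfied with multiplicity at most $1$ for each needed symbol (choosing, for each disjunction in $P_E^*$, a single disjunct). Thus the witness tree has polynomial size, and checking $t \models S$ (via Proposition~\ref{sat-memb-ptime}) and $t \models q$ (which is polynomial, since twig embedding can be tested in PTIME) gives an NP algorithm: guess the tree, verify in PTIME. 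The main care needed here is bounding the depth: a na\"ive expansion of required children could cascade indefinitely if the DMS rules are mutually recursive, so I would first check $\SAT_\dms$ (PTIME) to identify which symbols can be roots of finite satisfying subtrees, and only expand required children using such symbols; the query contributes at most $\mathit{height}(q)$ extra depth on top of that.

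For NP-hardness, I would adapt the known reduction showing $\SAT_{\DTD,\Twig}$ is NP-hard from~\cite{BeFaGe08}, which typically encodes $\sat$ (3SAT). The idea is to build, from a 3CNF formula $\varphi$ with variables $x_1,\dots,x_k$ and clauses $c_1,\dots,c_m$, a DMS and a twig query such that they are jointly satisfiable iff $\varphi$ is satisfiable. A truth assignment is encoded by a path or shallow tree where, for each variable $x_i$, a node labeled $X_i$ has exactly one child, either $T_i$ or $F_i$ (captured by a DMS rule $X_i \rightarrow (T_i \mid F_i)$ — this is precisely where disjunction is essential and why the reduction works for DMS but not for disjunction-free MS). The twig query then uses descendant edges to check that each clause is satisfied: for clause $c_j$, the query branches to verify the presence of at least one literal-node matching a true literal of $c_j$. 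Consistency of the assignment is enforced by the DMS (each $X_i$ chooses exactly one of $T_i, F_i$), and the query's branches, joined at a common ancestor, force every clause to be witnessed.

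The main obstacle I anticipate is the NP-hardness reduction: the challenge is encoding the global consistency of a truth assignment while using only the restricted disjunction allowed in DMS (no symbol repeated in an expression, disjunction only at the "top level" of each $D_i$). One must arrange the construction so that the choice $(T_i \mid F_i)$ is made once per variable and then "broadcast" to all clauses via the tree structure and descendant axes in the query, rather than trying to re-assert the choice in multiple rules (which the syntax forbids). A clean way is to have a single spine $r \rightarrow X_1,\ X_1 \rightarrow X_2 \shuffle (T_1 \mid F_1),\ \dots$, so each variable's chosen literal sits at a known depth, and the query $q$ is a tree that, from $r$, descends the spine and at each clause level uses $\dblslash$-edges with wildcards to locate a satisfying literal. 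Verifying that this reduction is both correct (joint satisfiability $\iff$ $\varphi$ satisfiable) and polynomial-time computable is the crux; membership in NP is comparatively routine once the polynomial witness-tree bound is argued carefully.
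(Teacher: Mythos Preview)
Your NP upper bound argument (guess a polynomial-size witness tree, verify in PTIME) is correct and in fact more explicit than the paper, which simply invokes Theorem~4.4 of~\cite{BeFaGe08} for $\SAT_{\DTD,\Twig}$ to import the NP upper bound to the DMS case.

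For NP-hardness, you have the right high-level idea---encode a truth assignment via the DMS disjunctions $(T_i\mid F_i)$ and let the query check the clauses---but the reduction as you state it has a genuine gap. You write that ``for clause $c_j$, the query branches to verify the presence of at least one literal-node matching a true literal of $c_j$.'' Twig queries are purely conjunctive: there is no way for a single twig to ask whether \emph{one of} $T_1,F_2,T_3$ appears as a descendant. Your spine construction makes the chosen literal nodes available in the tree, but gives the query no conjunctive handle on clause satisfaction.

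The paper's fix (adapting~\cite{BeFaGe08}) is to add clause labels $C_1,\ldots,C_n$ to the alphabet and push the per-clause disjunction into the schema rather than the query: the rule for $t_j$ (resp.\ $f_j$) forces as children exactly $C_{j_1}\shuffle\ldots\shuffle C_{j_k}$, the clauses that the literal $x_j$ (resp.\ $\neg x_j$) satisfies. The query is then simply $r[\dblslash C_1]\ldots[\dblslash C_n]$, a conjunction of descendant conditions, and a tree in $L(S)$ satisfies it iff the corresponding assignment satisfies every clause. Note also that the spine is unnecessary: since a DMS rule may contain several parallel disjunctions over disjoint symbol sets, the paper uses the flat rule $r\rightarrow(t_1\mid f_1)\shuffle\ldots\shuffle(t_m\mid f_m)$, making all variable choices at depth~$1$.
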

\begin{proof}\normalfont[sketch]
Proposition 4.2.1 from~\cite{BeFaGe08} implies that satisfiability of twig queries in the presence of DTDs is NP-hard.
We adapt the proof and we obtain the following reduction from $\sat$ to $\SAT_{\dms, \Twig}$: we take a $\CNF$ formula $\varphi=\bigwedge_{i=1}^nC_i$ over the variables $x_1,\ldots,x_m$, where each $C_i$ is a disjunction of 3 literals. 
Consider $\Sigma=\{r, t_1,f_1,\ldots,t_m,f_m,C_1,\ldots,C_n\}$ and the corresponding tuple $(S, q)$:
\begin{itemize}
\item The schema $S$ having the root label $r$ and the rules:
\begin{itemize}
\item $r \rightarrow (t_1 \mid f_1) \shuffle \dots\shuffle(t_m\mid f_m)$
\item $t_j \rightarrow C_{j_1}\shuffle\ldots\shuffle C_{j_k},\ 1\leq j\leq m.~x_j$ appears in $C_{j_i}$
\item $f_i  \rightarrow C_{j_1}\shuffle\ldots\shuffle C_{j_k},\ 1\leq j\leq m.~\neg x_j$ appears in $C_{j_i}$
\end{itemize}
\item The query $q = r[\dblslash C_1]\dots[\dblslash C_n]$
\end{itemize}
For example, for the $\CNF$ formula over the variables $x_1,\ldots, x_4$: $\varphi_0=(x_1\vee \neg x_2 \vee x_3)\wedge(\neg x_1\vee x_3\vee \neg x_4)$ we have the schema $S$ containing the rules:
\begin{flalign*}
r \rightarrow (t_1\mid f_1) \shuffle (t_2\mid f_2)& \shuffle (t_3\mid f_3) \shuffle (t_4\mid f_4) \\
t_1 \rightarrow C_1 ~~~& t_3\rightarrow C_1\shuffle C_2 \\
f_1\rightarrow C_2 ~~~& f_3 \rightarrow \epsilon\\
t_2\rightarrow \epsilon ~~~& t_4 \rightarrow \epsilon \\
f_2\rightarrow C_1 ~~~& f_4 \rightarrow C_2
\end{flalign*}
and the query:
\[
q = /r[\dblslash C_1][\dblslash C_2]
\]
The formula $\varphi$ is satisfiable iff $(S, q)\in \SAT_{\dms, \Twig}$.
The described reduction works in polynomial time in the size of the input formula $\varphi$.
Moreover, Theorem 4.4 from~\cite{BeFaGe08} implies that satisfiability of twig queries in the presence of DTDs is in NP, which yields the NP upper bound for $\SAT_{\dms, \Twig}$.\qed
\end{proof}
The complexity results for query implication and query containment in
the presence of DMS follow from the EXPTIME-completeness proof from
\cite{NeSc06} for twig query containment in the presence of DTDs.
\begin{proposition}\label{exptime-c}
$\IMPL_{\dms,\Twig}$ and $\CNT_{\dms,\Twig}$ are EXPTIME-complete.
\end{proposition}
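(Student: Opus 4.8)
The plan is to establish matching upper and lower bounds. For the EXPTIME upper bound, I would reduce both problems to twig query containment in the presence of a DTD, which is EXPTIME-complete by~\cite{NeSc06}. The key observation is that a DMS is nothing more than a DTD interpreted under commutative closure, and the commutative closure of a regular language can itself be described by a (possibly larger, but still at most singly exponential) regular expression: for each rule $a\rightarrow E$ of a DMS, where $E=D_1^{M_1}\shuffle\ldots\shuffle D_m^{M_m}$, one can write an ordinary regular expression $\widehat{E}$ over $\Sigma$ whose language is exactly the set of orderings of the unordered words in $L(E)$. Since every symbol occurs at most once in $E$, the naive construction (an interleaving of the $m$ blocks, each block a bounded expression) has size at most exponential in $|\Sigma|$, which is harmless for an EXPTIME bound. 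This yields a DTD $\widehat{S}$ with $L(\widehat{S})$ equal to the set of all ordered trees whose ``unordering'' lies in $L(S)$. Because twig queries are themselves order-oblivious (an embedding uses only $\child$ and $\desc$, never sibling order), we have $t\models q$ for the unordered tree $t$ iff $\widehat{t}\models q$ for any ordering $\widehat{t}$; hence $\IMPL_{\dms,\Twig}$ reduces to DTD twig implication (the special case of containment with $p$ the trivial query $\root_S$), and $\CNT_{\dms,\Twig}$ reduces directly to DTD twig containment. Both reductions run in exponential time, giving the EXPTIME membership.

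For the lower bound, I would adapt the EXPTIME-hardness construction of~\cite{NeSc06} for twig query containment in the presence of DTDs. The point to check is that the DTDs produced by that reduction — or DTDs equivalent to them for the purpose of the reduction — can be taken to be commutatively closed, so that they are (essentially) DMS. Typically such hardness proofs encode computations of an alternating PSPACE Turing machine (or tilings) and the content models used are already of the form ``one child of each of several types, in any order'' or ``a list of copies of one type,'' which are exactly what disjunctive multiplicity expressions express. So I would either verify that the DTDs in~\cite{NeSc06} are already of this shape, or re-engineer the encoding using $\shuffle$, $\mid$, and the multiplicities $*,+,?,1$ only, taking care that the restriction ``each symbol appears at most once per rule'' is respected — which is easily arranged by introducing fresh symbols where a label would otherwise need to be repeated. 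This gives EXPTIME-hardness of $\CNT_{\dms,\Twig}$, and the same instance with $p=\root_S$ gives hardness of $\IMPL_{\dms,\Twig}$.

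The main obstacle, I expect, is the lower-bound side: one must be careful that the exponential blow-up in passing from a commutatively-closed DTD to a DMS (or vice versa) does not occur in the wrong direction. The reduction from~\cite{NeSc06} must be massaged so that the resulting schema is polynomial-size \emph{as a DMS}, not merely as a DTD whose commutative closure is huge; this is where the ``at most once per symbol'' syntactic restriction of disjunctive multiplicity expressions bites, and fresh auxiliary symbols together with auxiliary schema rules are the tool to get around it. On the upper-bound side the only subtlety is confirming that the exponential-size regular expression $\widehat{E}$ can be produced in exponential time and that the resulting DTD still falls within the input conventions of the~\cite{NeSc06} algorithm; this is routine once the interleaving construction is written out.
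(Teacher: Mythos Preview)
Your lower-bound strategy is essentially the paper's: verify that the DTD produced by the Neven--Schwentick reduction (from two-player corridor tiling, via the technique of \cite{MiSu04}) can be expressed as a DMS, and observe that in that reduction the left-hand query $p$ is trivially satisfied by every tree of the schema, so the same instance shows hardness of implication. That part is fine.

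The upper-bound argument, however, has a real gap. You build a DTD $\widehat{S}$ of size exponential in $|\Sigma|$ and then invoke the EXPTIME algorithm of \cite{NeSc06} on it. An exponential-time reduction into an EXPTIME-complete problem yields only a 2EXPTIME upper bound: if the target algorithm runs in time $2^{n^{c}}$ and you feed it an instance of size $2^{m^{d}}$, the total running time is $2^{(2^{m^{d}})^{c}}$, which is doubly exponential in $m$. So ``harmless for an EXPTIME bound'' is not correct as stated.

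The fix is much simpler than what you propose and is what the paper implicitly uses. You do not need $\widehat{S}$ to generate \emph{all} orderings of the trees in $L(S)$; a single ordering per unordered tree suffices, because twig queries are order-oblivious. Concretely, replace each $\shuffle$ by ordinary concatenation in every rule of $S$, obtaining a DTD $S'$ of the \emph{same} size. Since each alphabet symbol occurs at most once in a disjunctive multiplicity expression, an unordered word $w$ satisfies $E$ iff some ordering of $w$ matches the ordered regular expression $E'$; hence an unordered tree $t$ lies in $L(S)$ iff some ordering $\widehat{t}$ lies in $L(S')$. For any twig query $q$ we have $t\models q$ iff $\widehat{t}\models q$, so $p\subseteq_{S}q$ in the DMS sense iff $p\subseteq_{S'}q$ in the DTD sense. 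This is a polynomial-time reduction, and now the EXPTIME upper bound of \cite{NeSc06} transfers directly.
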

\begin{proof}\normalfont[sketch]
Theorem 4.4 from~\cite{NeSc06} implies that twig query containment in the presence of DTDs is in EXPTIME.
This implies that the problems $\IMPL_{\DTD,\Twig}$, $\IMPL_{\dms,\Twig}$, and $\CNT_{\dms,\Twig}$ are also in EXPTIME.
The EXPTIME-hardness proof of twig containment in the presence of DTDs (Theorem 4.5 from~\cite{NeSc06}) has been done using a reduction from \emph{Two-player corridor tiling} problem and a technique introduced in \cite{MiSu04}.
In the proof from \cite{NeSc06}, when testing inclusion $p\subseteq_Sq$, $p$ is chosen such that it satisfies any tree in $S$, hence $\IMPL_{\DTD,\Twig}$ is also EXPTIME-complete.
Furthermore, Lemma 3 in \cite{MiSu04} can be adapted to twig queries and DMS: for any $S\in\dms$ and twig queries $q_0,q_1,\ldots,q_m$ there exists $S'\in\dms$ and twig queries $q$ and $q'$ such that:
\[
q_0 \subseteq_S q_1\cup\ldots\cup q_m \iff q \subseteq_{S'} q'.
\]
Because the DTD in \cite{NeSc06} can be captured with DMS, from the last two statements we conclude that $\IMPL_{\dms,\Twig}$ and $\CNT_{\dms,\Twig}$ are also EXPTIME-complete. 
\qed
\end{proof}

\subsection{Disjunction-free multiplicity schema}\label{subsec:ms}
In this subsection we present the static analysis for MS.  Although
query satisfiability and query implication are intractable for DMS,
these problems become tractable for MS because they can be reduced to
testing embedding of queries in some dependency graphs that we define
in the sequel. 
We first present some of the technical tools which help us to reason about the disjunction-free multiplicity schemas.
Next, we use these tools to prove our results.
Recall that MS use expressions of the form
$a_1^{M_1}\shuffle\ldots\shuffle a_n^{M_n}$.
\subsubsection{Dependency graphs}
\begin{definition}
  Given an MS $S=(\root_S, R_S)$, the \emph{dependency graph} of $S$
  is a directed rooted graph $G_S = (\Sigma,\root_S, E_S)$ with the
  node set $\Sigma$, where $\root_S$ is the distinguished root node,
  and $(a,b)\in E_S$ if $R_S(a) =\ldots\shuffle b^M\shuffle\ldots$ and
  $M\in\{*,+,?,1\}$. Furthermore, the edge $(a,b)$ is called
  \emph{nullable} if $0\in\llbracket M\rrbracket$ (i.e., $M$ is $*$ or
  $?$), otherwise $(a,b)$ is called \emph{non-nullable} (i.e., $M$ is
  $+$ or $1$).  The \emph{universal dependency graph} of an MS $S$ is
  the subgraph $G_{S}^{\mathrm u}$ containing only the non-nullable
  edges.
\end{definition}
In Figure~\ref{figgraphs} we present the dependency graphs
for the schema $S_5$ containing the rules $r\rightarrow a^+\shuffle b^*,~ a\rightarrow b^?,~b\rightarrow\epsilon$.
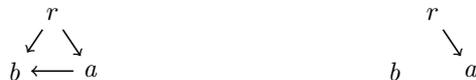
\begin{figure}[htb]
  \centering
  \begin{tikzpicture}[yscale=0.75]
    \begin{scope}
      \node at (0,0) (m99) {$r$};
      \node at (0.5,-1) (m0) {$\mathit{a}$};
      \node at (-0.5,-1) (m20) {$\mathit{b}$};
      \draw[->,semithick] (m99) -- (m0);			
      \draw[->,semithick] (m99) -- (m20);
      \draw[->,semithick] (m0) -- (m20);
    \end{scope}
    \begin{scope}[xshift=5cm]
      \node at (0,0) (m99) {$r$};
      \node at (0.5,-1) (m0) {$\mathit{a}$};
      \node at (-0.5,-1) (m20) {$\mathit{b}$};
      \draw[->,semithick] (m99) -- (m0);	
    \end{scope}
  \end{tikzpicture}
  \caption{\label{figgraphs}Dependency graph $G_{S_5}$ and universal dependency graph
    $G_{S_5}^{\mathrm u}$ for schema $S_5$.}
\end{figure}

\noindent An MS $S$ is \emph{pruned} if $G_S^{\mathrm u}$ is acyclic.  We
observe that any MS has an equivalent pruned version which can be
constructed in PTIME by removing the rules for the labels from which a
cycle can be reached in the universal dependency graph. Note that a
schema is satisfiable iff no cycle can be reached from its root in the
universal dependency graph.  From now on, we assume w.l.o.g.\ that all
the MS that we manipulate are pruned.

We generalize the notion of embedding as a mapping of the nodes of a
query $q$ to the nodes of a rooted graph $G=(\Sigma, \root,E)$,
which can be either a dependency graph or a universal dependency graph.
Formally, an \emph{embedding} of $q$ in $G$ is 
a function $\lambda : N_q \rightarrow \Sigma$ such that:
\begin{enumerate}
\itemsep0pt
\item[$1$.] $\lambda(\root_q)=\root$,
\item[$2$.] for every $(n,n')\in \child_q$,
  $(\lambda(n),\lambda(n'))\in E$,
\item[$3$.] for every $(n,n')\in \desc_q$,
  $(\lambda(n),\lambda(n'))\in E^+$ (the transitive closure of $E$),
\item[$4$.] for every $n\in N_q$, $\lab_q(n) = \wc$ or $\lab_q(n) = \lambda(n)$.
\end{enumerate}
If there exists an embedding from $q$ to $G$, we write $G\preccurlyeq q$.

\subsubsection{Graph simulation}
A \emph{simulation} of a rooted graph (either dependency graph or universal dependency graph) $G=(\Sigma,\root,E)$ in a tree $t$ is a relation $R\subseteq \Sigma\times N_t$ such that:
\begin{enumerate}
\item $(\root, \root_t)\in R$
\item for every $(a, n)\in R,\ (a, a')\in E$, there exists $n'\in N_t$
  such that $(n,n')\in\child_t$ and $(a',n')\in R$
\item for every $(a, n)\in R.\ \lab_t(n) = a$
\end{enumerate}
Note that $R$ is a total relation for the nodes of the graph reachable from the root i.e., for every $a\in \Sigma$ reachable from $\root$ in $G$, there exists a node $n\in N_t$ such that $(a,n)\in R$. 
If there exists a simulation from $G$ to $t$, we write $t\preccurlyeq G$.
The \emph{language of a graph} is $ L(G) = \{t\in\Tree\mid t\preccurlyeq G\}$.

A rooted graph $G_1=(\Sigma,\root,E_1)$ is a \emph{subgraph} of another rooted graph $G_2=(\Sigma,\root,E_2)$ if $E_1\subseteq E_2$.
For a rooted graph $G=(\Sigma,\root,E)$, we define the partial order $\leq_G$ on the subgraphs of $G$: given $G_1$ and $G_2$ two subgraphs of $G$, $G_1\leq_G G_2$ if $G_1$ is a subgraph of $G_2$.
Note that the relation $\leq_G$ is reflexive, antisymmetric, and transitive, thus being an order relation.
Moreover, it is well-founded and it has a minimal element, that we denote $G_0$ for a graph $G$. 
Let $G_0 = (\Sigma,\root, \emptyset)$ and indeed, for any $G'$ subgraph of $G$ we have $G_0 \leq_G G'$.
In the sequel, we assume w.l.o.g.\ that all the subgraphs that we use in our proofs have the property that every edge can be part of a path starting at the root.

\begin{lemma}\label{L1}
For any disjunction-free multiplicity schema $S$, its universal dependency graph can be simulated in any tree $t$ which belongs to the language of $S$:
\[
\forall S\in\ms.\ \forall t\in L(S).\ t\preccurlyeq G_S^{\mathrm u}
\]
\end{lemma}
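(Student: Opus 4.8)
The plan is to construct the simulation relation $R$ explicitly, building it layer by layer by a traversal of the universal dependency graph $G_S^{\mathrm u}$ and, at each layer, invoking the hypothesis $t\models S$ to supply the children demanded by the non-nullable edges. Throughout I use the standing assumption that $S$ is pruned, so that $G_S^{\mathrm u}$ is acyclic. Fix $t\in L(S)$, and fix once and for all a choice function $c$ that, given a node $m\in N_t$ and a symbol $b$ for which $m$ has at least one child labeled $b$, returns one such child. I would then define an increasing chain of relations $R_0\subseteq R_1\subseteq\cdots\subseteq\Sigma\times N_t$ by setting $R_0=\{(\root_S,\root_t)\}$ and letting $R_{k+1}$ be $R_k$ together with all pairs $(a',c(n,a'))$ such that $(a,n)\in R_k$ and $(a,a')$ is an edge of $G_S^{\mathrm u}$; finally put $R=\bigcup_{k\ge 0}R_k$. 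The use of $c(n,a')$ presupposes that $n$ has a child labeled $a'$, which must be verified.

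This is done by a simultaneous induction on $k$ establishing that every invocation of $c$ is defined and that $R_k$ satisfies the \emph{label invariant} $\lab_t(n)=a$ for all $(a,n)\in R_k$. For $R_0$ this holds because $t\models S$ gives $\lab_t(\root_t)=\root_S$. For the step: if $(a,n)\in R_k$ then $\lab_t(n)=a$, so $\ch_t^n\in L(R_S(a))$ since $t\models S$; now an edge $(a,a')$ of $G_S^{\mathrm u}$ means that $a'$ occurs in the disjunction-free expression $R_S(a)$ with a non-nullable multiplicity $M$, i.e.\ $0\notin\llbracket M\rrbracket$, and because $R_S(a)$ is disjunction-free every $w\in L(R_S(a))$ satisfies $w(a')\in\llbracket M\rrbracket$; applying this to $w=\ch_t^n$ yields $\ch_t^n(a')\ge 1$, so $c(n,a')$ is defined and $\lab_t(c(n,a'))=a'$. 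Acyclicity of $G_S^{\mathrm u}$ then ensures termination: every walk from $\root_S$ in $G_S^{\mathrm u}$ is simple, hence of length less than $|\Sigma|$, so $R_k=\emptyset$ for $k\ge|\Sigma|$ and $R$ is a finite relation.

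It remains to check the three conditions in the definition of a simulation of $G_S^{\mathrm u}$ in $t$. Condition~1, $(\root_S,\root_t)\in R$, holds by the definition of $R_0$. Condition~3 is the label invariant, just established. Condition~2 is immediate from the construction: if $(a,n)\in R$ then $(a,n)\in R_k$ for some $k$, and for each edge $(a,a')$ of $G_S^{\mathrm u}$ the pair $(a',c(n,a'))$ was placed in $R_{k+1}\subseteq R$, with $c(n,a')$ a child of $n$. The single load-bearing step is the observation that a non-nullable edge $(a,a')$ forces $\ch_t^n(a')\ge 1$ whenever $\lab_t(n)=a$; this is exactly the place where disjunction-freeness is indispensable — for a disjunctive expression a conjunct like $(a'\mid b)^{+}$ can be satisfied with no occurrence of $a'$ — and it is the reason the analogue of this lemma is stated only for MS. Everything else is routine bookkeeping, in particular arranging the induction so that the label invariant is available precisely when the appeal to $\ch_t^n\in L(R_S(a))$ is made.
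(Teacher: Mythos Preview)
Your proof is correct and follows essentially the same strategy as the paper: both build the simulation relation $R$ inductively, with the load-bearing step being exactly the one you isolate---a non-nullable edge $(a,a')$ in $G_S^{\mathrm u}$ forces every node labeled $a$ in a tree $t\in L(S)$ to have a child labeled $a'$. The differences are organizational (the paper inducts over subgraphs of $G_S^{\mathrm u}$, adding one edge at a time and putting \emph{all} suitably-labeled children into $R$, whereas you layer by distance from the root and use a choice function), and there is a harmless slip where you write ``$R_k=\emptyset$ for $k\ge|\Sigma|$'' when you mean the chain stabilizes, i.e.\ $R_{k+1}=R_k$; since $R\subseteq\Sigma\times N_t$ is finite this is immaterial.
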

\begin{proof}\normalfont
Consider an MS $S$ and its universal dependency graph $G_S^{\mathrm u}$. 
Let $t$ be a tree which belongs to $ L(S)$. 
We want to construct a witness relation $R\subseteq \Sigma \times N_t$ for $t\preccurlyeq G_S^{\mathrm u}$ and the proof goes by induction on the structure of $G_S^{\mathrm u}$, using the well-founded order $\leq_{G_S^{\mathrm u}}$ defined above. 
Let $P(G)$ denote the statement $t\preccurlyeq G$.
Let $G$ be a subgraph of $G_S^{\mathrm u}$. 
The \emph{induction hypothesis} is that for all $G'\leq_{G_S^{\mathrm u}} G$ and $G'\neq G$, there exists a relation $R'$ witness of the simulation $t\preccurlyeq G'$ and we are going to construct $R$ that witnesses $t\preccurlyeq G$. 

For the \emph{base case}, we take the minimal element for the relation $\leq_{G_S^{\mathrm u}}$ let it $G_0 = (\Sigma,\root,\emptyset)$, then $P(G_0)$ holds for the relation $R_0=\{(\root, \root_t)\}$, so the subgraph containing no edge can be simulated in $t$.

For the \emph{induction case}, let $G$ a subgraph of $G_S^{\mathrm u}$. 
By the induction hypothesis, we know that $P(G')$ holds, for every $G'\leq_{G_S^{\mathrm u}} G$. Consider a subgraph $G'$ of $G$ such that $G$ contains exactly one additional edge w.r.t.\ $G'$, let the additional edge $(a,a')$ and $R'$ the witness relation for $t\preccurlyeq G'$. 
Because $G'\leq_{G_S^{\mathrm u}}G$ and $(a, a')$ is the only additional edge, we know that $R'$ already contains images for $a$ in $t$ i.e., there exists a node $n$ such that $(a, n)\in R'$. 
We construct the relation $R$ as the union of $R'$ with $\{(a', n') \mid \lab_t(n') = a'\wedge (\exists n.\ (n, n') \in \child_t\wedge (a,n)\in R')\}$. 
The set of tuples that we add is not empty because the edge $(a, a')$ belongs to the universal dependency graph $G_S^{\mathrm u}$, so for any node labeled by $a$ in the tree $t$ there exists a child of it labeled with $a'$. 
The construction ensures that $R$ satisfies all the conditions of the definition of a simulation, so $t\preccurlyeq G$, so $P(G)$ is true.

We have proved that $P(G_0)$ is true and that ($\forall G'.\ G'\leq_{G_S^{\mathrm u}} G \Rightarrow P(G'))  \Rightarrow P(G)$, so $P(G)$ is true for any $G$ subgraph of $G_S^{\mathrm u}$, so also for $G_S^{\mathrm u}$, hence $G_S^{\mathrm u}$ can be simulated into any tree $t$ which belongs to the language of $S$.
\qed
\end{proof}

\subsubsection{Graph unfolding}
A \emph{path in a rooted graph} (either dependency graph or universal dependency graph) $G =(\Sigma, \root, E)$ is a non-empty sequence of vertices starting at $\root$ such that for any two consecutive vertices in the sequence, there is a directed edge between them in $G$. 
By $\Paths(G)\subseteq\Sigma^+$ we denote the set of all the paths in $G$. 
The set of paths is finite only for graphs without cycles reachable from the root. 
For instance, the paths of the graph $G_1$ in Figure~\ref{fig:unfolding} are $\Paths(G_1) = \{r, ra, rb, rc, rbd, rcd, rbde, rcde\}$. 

Similarly, a \emph{path in a tree} $t$ is a non-empty sequence of
nodes starting at $\root_t$ such that any two consecutive nodes in the
sequence are in the relation $child_t$. 
By $\Paths(t)$ we denote the set of all the paths in $t$. 
Then, we define $\LabPaths(t)$ as the set of sequences of labels of nodes from all the paths in $t$. 
For instance, for the tree $t_1$ from Figure~\ref{fig:LabPaths} we have
$\Paths(t_1)=\{n_0,n_0n_1,n_0n_1n_2,n_0n_3,n_0n_3n_4\}$ and
$\LabPaths(t_1)=\{r,ra,rab\}$. Note that $\Paths(t)\subseteq N_t^+$,
$\LabPaths(t)\subseteq \Sigma^+$ and $|\LabPaths(t)|\leq|\Paths(t)|$.
The \emph{unfolding} of a rooted graph $G=(\Sigma,\root,E)$, denoted $u_G$, is a tree $u_G = (N_{u_G}, \root_{u_G}, \lab_{u_G}, \child_{u_G})$, such that:
\begin{itemize}
\item $N_{u_G} = \Paths(G)$,
\item $\root_{u_G}\in N_{u_G}$ is the root of $u_G$,
\item $(p, p.a)\in\child_{u_G}$, for all paths $p, p.a\in \Paths(G)$ (note that ``.'' stands for concatenation),
\item $\lab_{u_G}(\root_{u_G}) = \root$, and $\lab_{u_G}(p.a) = a$, for all the paths $p.a\in\Paths(G)$. 
\end{itemize}
The unfolding of a graph is finite only when the graph has no cycle reachable from the root, because otherwise $\Paths(G)$ is infinite, so $u_G$ is infinite. 
In the sequel we use the unfolding for graphs without any cycle reachable from the root and in this case the unfolding is the \emph{smallest} tree ${u_G}$ (w.r.t.\ the number of nodes) having $\LabPaths({u_G}) = \Paths(G)$.
The idea of the unfolding is to transform the rooted graph $G$ into a tree having the $\child$ relation instead of directed edges. 
There are nodes duplicated in order to avoid nodes with more than one incoming edge.
For instance, in Figure~\ref{fig:unfolding} we take the graph $G_1$ and construct its unfolding $u_{G_1}$. We remark that the size of the unfolding may be exponential in the size of the graph, for example for the graph $G_2$ from Figure \ref{fig:exponential unfolding}.
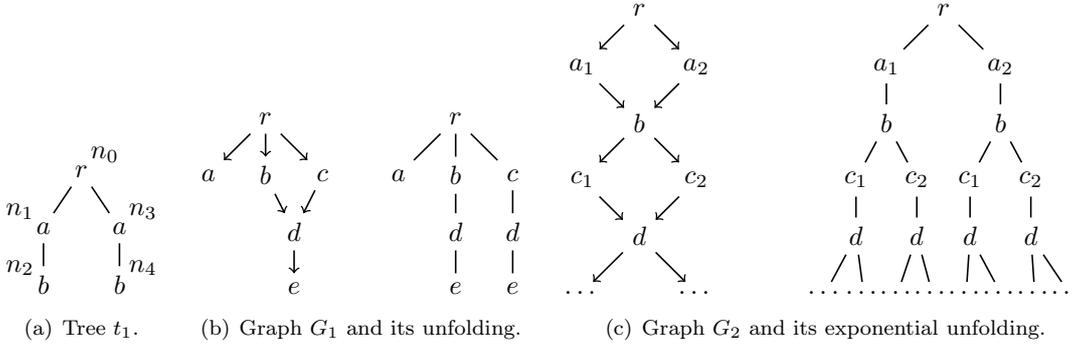
\begin{figure}[htb]
	\centering
	\subfigure[Tree $t_1$.]{\label{fig:LabPaths}
  \begin{tikzpicture}[yscale=0.75]
    \begin{scope}
      \node at (0,1) (m99) {$r$};
      \node at (0.5,0) (m0) {$a$};
      \node at (-0.5,0) (m1) {$a$};
      \node at (0.5,-1) (m2) {$b$};
      \node at (-0.5,-1) (m23) {$b$};
      \draw[-,semithick] (m99) -- (m0);	
      \draw[-,semithick] (m99) -- (m1);		
      \draw[-,semithick] (m0) -- (m2);
      \draw[-,semithick] (m1) -- (m23);
      \path (m99) node[above right] {$n_0$};
      \path (m1) node[above left] {$n_1$};
      \path (m23) node[above left] {$n_2$};
      \path (m2) node[above right] {$n_4$};
      \path (m0) node[above right] {$n_3$};
    \end{scope}
  \end{tikzpicture}
	}
\subfigure[Graph $G_1$ and its unfolding.]{\label{fig:unfolding}
  \begin{tikzpicture}[yscale=0.75]
    \node at (0,0) (n0) {$r$};
    \node at (-0.75,-1) (n1) {$a$};
    \node at (0,-1) (n2) {$b$};
    \node at (0.75,-1) (n3) {$c$};
    \node at (0.375,-2) (n4) {$d$};
    \node at (0.375,-3) (n5) {$e$};
    \draw[->,semithick] (n0) -- (n1);
    \draw[->,semithick] (n0) -- (n2);
    \draw[->,semithick] (n0) -- (n3);
    \draw[->,semithick] (n2) -- (n4);
    \draw[->,semithick] (n3) -- (n4);
    \draw[->,semithick] (n4) -- (n5);
    \begin{scope}[xshift=2.5cm]
    \node at (0,0) (n0) {$r$};
    \node at (-0.75,-1) (n1) {$a$};
    \node at (0,-1) (n2) {$b$};
    \node at (0.75,-1) (n3) {$c$};
    \node at (0,-2) (n4) {$d$};
    \node at (0.75,-2) (n5) {$d$};
    \node at (0,-3) (n6) {$e$};
    \node at (0.75,-3) (n7) {$e$};
    \draw[-,semithick] (n0) -- (n1);
    \draw[-,semithick] (n0) -- (n2);
    \draw[-,semithick] (n0) -- (n3);
    \draw[-,semithick] (n2) -- (n4);
    \draw[-,semithick] (n3) -- (n5);
    \draw[-,semithick] (n4) -- (n6);
    \draw[-,semithick] (n5) -- (n7);
    \end{scope}
  \end{tikzpicture}
}
\subfigure[Graph $G_2$ and its exponential unfolding.]{  \label{fig:exponential unfolding}
  \begin{tikzpicture}[yscale=0.75]
    \node at (0,0) (n0) {$r$};
    \node at (-0.75,-1) (n1) {$a_1$};
    \node at (0.75,-1) (n2) {$a_2$};
    \node at (0,-2) (n3) {$b$};
    \node at (-0.75,-3) (n4) {$c_1$};
    \node at (0.75,-3) (n5) {$c_2$};
    \node at (0,-4) (n6) {$d$};
    \node at (-0.75,-5) (n7) {$\dots$};
    \node at (0.75,-5) (n8) {$\dots$};
    \draw[->,semithick] (n0) -- (n1);
    \draw[->,semithick] (n0) -- (n2);
    \draw[->,semithick] (n1) -- (n3);
    \draw[->,semithick] (n2) -- (n3);
    \draw[->,semithick] (n3) -- (n4);
    \draw[->,semithick] (n3) -- (n5);
    \draw[->,semithick] (n4) -- (n6);
    \draw[->,semithick] (n5) -- (n6);
    \draw[->,semithick] (n6) -- (n7);
    \draw[->,semithick] (n6) -- (n8);
    \begin{scope}[xshift=4cm]
    \node at (0,0) (n0) {$r$};
    \node at (-0.75,-1) (n1) {$a_1$};
    \node at (0.75,-1) (n2) {$a_2$};
    \node at (-0.75,-2) (n3) {$b$};
    \node at (0.75,-2) (n30) {$b$};
    \node at (-1.15,-3) (n4) {$c_1$};
    \node at (-0.35,-3) (n5) {$c_2$};
    \node at (1.15,-3) (n40) {$c_2$};
    \node at (0.35,-3) (n50) {$c_1$};
    \node at (-1.15,-4) (n6) {$d$};
    \node at (-0.35,-4) (n60) {$d$};
    \node at (0.35,-4) (n61) {$d$};
    \node at (1.15,-4) (n62) {$d$};
    \node at (-1.55,-5) (n7) {$\dots$};    
   \node at (-1.05,-5) (n70) {$\dots$};
    \node at (-0.6,-5) (n71) {$\dots$};
    \node at (-0.15,-5) (n72) {$\dots$};
    \node at (0.3,-5) (n73) {$\dots$};
    \node at (0.75,-5) (n74) {$\dots$};
    \node at (1.2,-5) (n75) {$\dots$};
    \node at (1.65,-5) (n76) {$\dots$};
    \draw[-,semithick] (n0) -- (n1);
    \draw[-,semithick] (n0) -- (n2);
    \draw[-,semithick] (n1) -- (n3);
    \draw[-,semithick] (n2) -- (n30);
    \draw[-,semithick] (n3) -- (n4);
    \draw[-,semithick] (n3) -- (n5);
    \draw[-,semithick] (n30) -- (n40);
    \draw[-,semithick] (n30) -- (n50);
    \draw[-,semithick] (n60) -- (n5);
    \draw[-,semithick] (n6) -- (n4);
    \draw[-,semithick] (n61) -- (n50);
    \draw[-,semithick] (n62) -- (n40);
    \draw[-,semithick] (n6) -- (n7);
    \draw[-,semithick] (n6) -- (n70);
    \draw[-,semithick] (n60) -- (n71);
    \draw[-,semithick] (n60) -- (n72);
    \draw[-,semithick] (n61) -- (n73);
    \draw[-,semithick] (n61) -- (n74);
    \draw[-,semithick] (n62) -- (n75);
    \draw[-,semithick] (n62) -- (n76);
    \end{scope}
  \end{tikzpicture}
}
\caption{A tree and two graphs with their corresponding unfoldings.}
\end{figure}

\subsubsection{Extending the definition of embedding}
If a query $q$ can be embedded in a tree $t$, we may write $t\preccurlyeq q$ instead of $t\models q$.
We also extend the definition of embedding from a query to a tree to the embedding from a tree to another tree i.e., given two trees $t$ and $t'$, we say that $t'$ can be embedded in $t$ (denoted $t\preccurlyeq t'$) if the query $(N_{t'},\root_{t'},\lab_{t'},\child_{t'},\emptyset)$ can be embedded in $t$.
Similarly, we can define the embedding from a tree to a rooted graph.
Note that two embeddings can be composed, for example:
\begin{itemize}
\item $\forall t,t'\in\Tree.\ \forall q\in\Twig.\ (t\preccurlyeq t'\wedge t'\preccurlyeq q\Rightarrow t\preccurlyeq q)$.
\item $\forall S\in\ms.\ \forall t\in\Tree.\ \forall q\in\Twig.\ (G_S^{(\mathrm u)}\preccurlyeq t\wedge t\preccurlyeq q\Rightarrow G_S^{(\mathrm u)}\preccurlyeq q)$.
\end{itemize}
\begin{lemma}\label{L2}
  A rooted graph (dependency graph or universal dependency graph) $G=(\Sigma,\root,E)$ can be simulated in a tree $t$
  iff its unfolding $u_G$ can be embedded in $t$.
\end{lemma}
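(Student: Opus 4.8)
The plan is to establish the two implications separately, in each case constructing the required object explicitly from the given one. Throughout, write $\mathrm{last}(p)$ for the last vertex of a path $p\in\Paths(G)$, and recall that $N_{u_G}=\Paths(G)$, that the parent in $u_G$ of a non-root node $p.a$ is $p$, and that $\lab_{u_G}(p)=\mathrm{last}(p)$.

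For the direction ``$u_G$ embeds in $t$ $\Rightarrow$ $G$ simulates in $t$'', suppose $\lambda:N_{u_G}\to N_t$ is an embedding and put $R=\{(\mathrm{last}(p),\lambda(p))\mid p\in\Paths(G)\}\subseteq\Sigma\times N_t$. I would check the three conditions defining a simulation. Condition~1 holds because $(\root)\in\Paths(G)$ and $\lambda(\root)=\root_t$. For condition~3, if $(a,n)\in R$ then $n=\lambda(p)$ for some path $p$ ending in $a$, and since $\lambda$ is an embedding $\lab_t(n)=\lab_{u_G}(p)=a$. For condition~2, given $(a,n)\in R$ witnessed by $p$ and an edge $(a,a')\in E$, the sequence $p.a'$ is again a path of $G$, so $(p,p.a')\in\child_{u_G}$; hence $(\lambda(p),\lambda(p.a'))\in\child_t$ and $(a',\lambda(p.a'))\in R$, which provides the required child of $n$.

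For the converse, suppose $R$ is a simulation of $G$ in $t$. I would define $\lambda:\Paths(G)\to N_t$ by induction on the length of the path, maintaining the invariant $(\mathrm{last}(p),\lambda(p))\in R$. Set $\lambda(\root)=\root_t$, which satisfies the invariant by condition~1. For a path $p.a'$ of length $\ell+1$, its unique parent $p$ (of length $\ell$) already has $\lambda(p)$ defined with $(\mathrm{last}(p),\lambda(p))\in R$; since $(\mathrm{last}(p),a')\in E$, condition~2 supplies a node $n'$ with $(\lambda(p),n')\in\child_t$ and $(a',n')\in R$, and I set $\lambda(p.a')=n'$, preserving the invariant. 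Since each node of $u_G$ has a unique parent and path lengths are finite, this recursion assigns $\lambda$ a well-defined value at every node. By construction $\lambda(\root_{u_G})=\root_t$; it respects $\child_{u_G}$ by the choice of $n'$; and it respects labels because $\lab_{u_G}(p)=\mathrm{last}(p)$ while the invariant together with condition~3 gives $\lab_t(\lambda(p))=\mathrm{last}(p)$. Hence $\lambda$ is an embedding of $u_G$ in $t$.

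The only delicate point is in this converse direction: the map $\lambda$ must be built coherently so that the image of each node of $u_G$ is a child of the image of its parent, which is why the construction proceeds top-down along paths and why the invariant $(\mathrm{last}(p),\lambda(p))\in R$ is threaded through the induction. When $G$ has a cycle reachable from $\root$ the set $\Paths(G)$, and hence $u_G$, is infinite; the induction is still on finite path lengths and every condition checked is local, so the argument goes through unchanged, and in the intended applications $G$ is acyclic (the universal dependency graph of a pruned MS) so $u_G$ is finite.
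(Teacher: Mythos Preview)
Your proof is correct and follows essentially the same approach as the paper: in one direction you define $R$ from $\lambda$ as the set of pairs $(\mathrm{last}(p),\lambda(p))$, and in the other you build $\lambda$ by top-down recursion on path length while maintaining the invariant $(\mathrm{last}(p),\lambda(p))\in R$. Your presentation is in fact a bit tidier than the paper's (the invariant is stated more precisely, and the $\mathrm{last}(\cdot)$ notation makes the correspondence explicit), but the underlying argument is the same.
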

\begin{proof}\normalfont
For the \emph{if} part, we know that $t\preccurlyeq u_G$ so there exists a function $\lambda : N_{u_G}\rightarrow N_t$ which witnesses the embedding of $u_G$ in $t$. We construct a relation $R\subseteq \Sigma\times N_t$ such that:
\begin{gather*}
R = \{(\root, \root_t)\} \cup \{(a,n)\mid\exists p \in N_{u_G}.\ p.a\in N_{u_G}  \wedge \lambda(p.a) = n\}
\end{gather*}
This construction ensures that for every $(a, n) \in R$ and for every $(a, a')\in E$, there exists $n'\in N_t$ such that $(n,n')\in child_t$ and $(a', n')\in R$ because the function $\lambda$ is a witness for $t\preccurlyeq u_G$ so the $\child$ relation is simply translated from $u_G$ to $G$. The construction of $R$ also guarantees that for every $(a, n)\in R$ we have $\lab_t(n) = a$ because $\lambda$ is the witness for $t\preccurlyeq u_G$ and $\lambda(p.a) = n$. Thus we obtain that $R$ satisfies all the conditions to be a simulation of $G$ in $t$.

For the \emph{only if} case, we take a relation $R$ which witnesses the simulation of $G$ in $t$.
We construct the function $\lambda : N_{u_G}\rightarrow N_t$, witness of $t\preccurlyeq u_G$, by recursion on the paths of $G$, because $\Paths(G) = N_{u_G}$.
First of all, $\lambda(\root_{u_G}) = \root_t$.
We assume that we have a recursive procedure which takes as input a path $p$, a label $a$, and the values of the function $\lambda$ computed before the procedure call, and it outputs $\lambda(p.a)$.
The invariant of the procedure is that while defining $\lambda$ for $p.a$, $\lambda$ satisfies the conditions from the definition of embedding for all the nodes $\root_{u_G}, \dots, p$ on the path to $p$. 
Furthermore, the values of $\lambda$ were obtained using the information given by $R$, so $\lambda(p) = n'$ iff $R(\lab_t(n'), n')$.
Let $\lambda(p) = n'$ and we construct $\lambda(p.a) = n$, where $R(a, n)$ and $\child_t(n', n)$.
There exists such a node $n$ because of the recursive construction of $\lambda$ using $R$ and the invariant $\lambda(p.a) = n$ iff $R(a, n)$ is true. 
The construction of $\lambda$ ensures that $\lambda$ is root-preserving, child-preserving and label-preserving, so it satisfies all the conditions to be an embedding from $u_G$ to $t$, so we have found a correct witness for $t\preccurlyeq u_G$.
\qed\end{proof}

\begin{lemma}\label{L3}
  A query $q$ can be embedded in a rooted graph (dependency graph or universal dependency graph) $G$ iff $q$ can be embedded
  in the unfolding tree of $G$.
\end{lemma}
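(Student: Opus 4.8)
The plan is to move embeddings back and forth across the identification between the node set of $u_G$ and the set $\Paths(G)$: recall that $N_{u_G}=\Paths(G)$ and that $\lab_{u_G}(p)$ is the last symbol of the path $p$ (with $\lab_{u_G}(\root_{u_G})=\root$). In spirit this mirrors Lemma~\ref{L2}, but now for embeddings of a query rather than simulations of a graph.

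For the \emph{if} direction, suppose $q$ embeds in $u_G$ via $\lambda:N_q\to N_{u_G}$, and define $\mu:N_q\to\Sigma$ by $\mu(n)=\lab_{u_G}(\lambda(n))$, i.e.\ $\mu(n)$ is the last symbol of the path $\lambda(n)$. I would verify the four conditions for $\mu$ to be an embedding of $q$ in $G$. Condition~1 holds because $\lambda(\root_q)=\root_{u_G}$ is the one-element path $\root$, so $\mu(\root_q)=\root$. For condition~2, a query child edge $(n,n')\in\child_q$ forces $\lambda(n')=\lambda(n).a$ for some $a\in\Sigma$ by the definition of $\child_{u_G}$, and then $\lambda(n')\in\Paths(G)$ means precisely that $a$ is an $E$-successor of the last symbol of $\lambda(n)$; since $a=\mu(n')$ this gives $(\mu(n),\mu(n'))\in E$. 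For condition~3, a descendant edge $(n,n')\in\desc_q$ yields a chain $\lambda(n)=p_0,p_1,\dots,p_k=\lambda(n')$ with $k\geq 1$ and each $(p_i,p_{i+1})\in\child_{u_G}$; each step appends one symbol which, again by membership in $\Paths(G)$, is an $E$-successor of the last symbol of $p_i$, so the sequence of last symbols of $p_0,\dots,p_k$ is a walk of length $k\geq1$ in $G$, giving $(\mu(n),\mu(n'))\in E^+$. Condition~4 is immediate since $\lab_q(n)\in\{\wc,\lab_{u_G}(\lambda(n))\}=\{\wc,\mu(n)\}$.

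For the \emph{only if} direction, suppose $q$ embeds in $G$ via $\mu:N_q\to\Sigma$; I would build an embedding $\lambda:N_q\to\Paths(G)$ of $q$ in $u_G$ by recursion on the tree $(N_q,\child_q\cup\desc_q)$, top-down from the root, maintaining the invariant that $\lambda(n)$ is a path of $G$ whose last symbol is $\mu(n)$. Set $\lambda(\root_q)=\root$. If $n'$ is the child of an already-processed node $n$ with $(n,n')\in\child_q$, then $(\mu(n),\mu(n'))\in E$, so $\lambda(n).\mu(n')\in\Paths(G)$; set $\lambda(n')=\lambda(n).\mu(n')$, which is a $\child_{u_G}$-child of $\lambda(n)$. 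If instead $(n,n')\in\desc_q$, then $(\mu(n),\mu(n'))\in E^+$, so there is a walk $\mu(n)=c_0,c_1,\dots,c_k=\mu(n')$ in $G$ with $k\geq1$; set $\lambda(n')=\lambda(n).c_1.\cdots.c_k\in\Paths(G)$, which is a proper $\child_{u_G}$-descendant of $\lambda(n)$. Conditions~1--3 of the embedding are then satisfied by construction, and condition~4 holds because $\lab_{u_G}(\lambda(n))$ is the last symbol of $\lambda(n)$, i.e.\ $\mu(n)$, while $\lab_q(n)$ is $\wc$ or $\mu(n)$.

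The recursion is well-defined because every non-root node of $q$ has a unique predecessor in $\child_q\cup\desc_q$ and this relation is acyclic, so a top-down traversal assigns $\lambda$ to each node exactly once, after its parent; note also that no injectivity is required, so distinct query nodes may legitimately be sent to distinct paths sharing a last symbol. The step I expect to need the most care is the descendant case in both directions, namely the exact correspondence between a $(\child_{u_G})^+$-connection in the tree $u_G$ and an $E^+$-connection in $G$; the key observation making this routine is that consecutive nodes on a root-to-node path of $u_G$ differ by appending a single symbol, and membership of the extended sequence in $\Paths(G)$ is by definition equivalent to that symbol being an $E$-successor of the previous last symbol.
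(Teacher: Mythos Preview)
Your proof is correct and follows essentially the same approach as the paper: in both directions you pass between an embedding into $G$ and an embedding into $u_G$ via the identification $N_{u_G}=\Paths(G)$, composing with $\lab_{u_G}$ in one direction and extending paths by the witnessing $E$-walk in the other. Your write-up is in fact more explicit than the paper's in checking the four embedding conditions and in handling the descendant case, but the underlying argument is the same.
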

\begin{proof}\normalfont
For the \emph{if} part, we know that $u_G\preccurlyeq q$, so there exists a function $\lambda : N_q\rightarrow N_{u_G}$ witness of this embedding. 
We construct a function $\lambda': N_q\rightarrow \Sigma$, such that $\lambda'(n) = \lab_{u_G}(\lambda(n))$ for each node $n$ from $N_q$. 
Since $\lambda$ is the witness of the embedding $u_G\preccurlyeq q$, the constructed $\lambda'$ satisfies all the conditions of the definition of an embedding from $q$ to $G$.

For the \emph{only if} part, we know that $G\preccurlyeq q$, so there exists a function $\lambda : N_q\rightarrow \Sigma$ witness of this embedding. 
We want to construct a function $\lambda': N_q\rightarrow N_{u_G}$ to prove $u_G\preccurlyeq q$. 
We construct $\lambda'$ by recursion on the tree structure of $q$. First of all, $\lambda'(\root_q) = \root_{u_G}$. 
Then, the \emph{recursion hypothesis} says that $G\preccurlyeq q'$ for any connected subtree $q'$ obtained from $q$ by deleting some edges, $u_G\preccurlyeq q'$, which is witnessed by the function $\lambda'$. 
Thus, for any node $n$ of $q$, $\lambda'(n) = p$, where $p\in N_{u_G}$ because $N_{u_G} = \Paths(G)$ so any node in the unfolding can be identified by a unique sequence of labels among the paths of $G$.
For the \emph{inductive case} consider that $q$ is obtained from $q'$ by adding one more edge, let it $(n, n')$. 
If it is a child edge and $\lambda'(n) =p$, we construct $\lambda'(n') = p.\lambda(n')$, which is a path in $G$ by the definition of the unfolding. 
Otherwise, if it is a descendant edge and $\lambda'(n) =p$, we construct $\lambda'(n') = p.p'.\lambda(n')$, where $p'$ is a randomly chosen path in $G$ from $\lambda(n)$ to $\lambda(n')$. 
We know by definition of $\lambda$ that such path exists.
The construction ensures that $u_G\preccurlyeq q$, for any $q$ satisfying the conditions of the recursion, so we can construct a function $\lambda'$ which is a correct witness for $u_G\preccurlyeq q$.
\qed\end{proof}

\subsubsection{Fuse and add operations}
In Figure \ref{fuseadd} we present the operations \emph{fuse} and \emph{add}. 
We say that $t\lhd_0 t'$ if $t'$ is obtained from $t$ by applying one of the operations from Figure \ref{fuseadd}.
The \emph{fuse} operation takes two siblings with the same label and creates only one node having below it the subtrees corresponding to each of the siblings.
The \emph{add} operation consists simply in adding a subtree at any place in the tree.
By $\unlhd$ we denote the transitive and reflexive closure of $\lhd_0$.

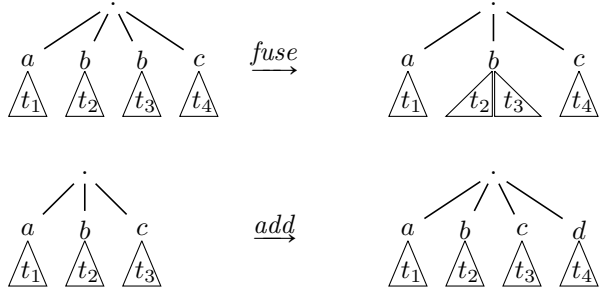
\begin{figure}[htb] 
  \centering
  \begin{tikzpicture}[yscale=0.75]
   \node at (0.375,0) (n0) {$.$};
   \node at (-0.75,-1) (n1) {$a$};
   \node at (0,-1) (n2) {$b$};
   \node at (0.75,-1) (n3) {$b$};
   \node at (1.5,-1) (n4) {$c$};
   \draw[-,semithick] (n0) -- (n1);
   \draw[-,semithick] (n0) -- (n2);
   \draw[-,semithick] (n0) -- (n3);
   \draw[-,semithick] (n0) -- (n4);
   \draw (-0.75,-1.2) -- (-1,-2) -- (-0.5,-2)-- (-0.75,-1.2);
   \node at (-0.7,-1.75) (t1) {$t_1$};
   \draw (0,-1.2) -- (-0.25,-2) -- (0.25,-2)-- (0,-1.2);
   \node at (0.05,-1.75) (t2) {$t_2$};
   \draw (0.75,-1.2) -- (0.5,-2) -- (1,-2)-- (0.75,-1.2);
   \node at (0.8,-1.75) (t3) {$t_3$};
   \draw (1.5,-1.2) -- (1.25,-2) -- (1.75,-2)-- (1.5,-1.2);
   \node at (1.55,-1.75) (t3) {$t_4$};
  \begin{scope}[xshift=2.5cm]
   \node at (0,-1) (n0) {$\underrightarrow{\mathit{fuse}}$};
  \end{scope}
  \begin{scope}[xshift=5cm]
   \node at (0.375,0) (n0) {$.$};
   \node at (-0.75,-1) (n1) {$a$};
   \node at (0.375,-1) (n2) {$b$};
   \node at (1.5,-1) (n4) {$c$};
   \draw[-,semithick] (n0) -- (n1);
   \draw[-,semithick] (n0) -- (n2);
   \draw[-,semithick] (n0) -- (n4);
   \draw (-0.75,-1.2) -- (-1,-2) -- (-0.5,-2)-- (-0.75,-1.2);
   \node at (-0.7,-1.75) (t1) {$t_1$};
   \draw (0.36,-1.2) -- (-0.25,-2) -- (0.36,-2)-- (0.36,-1.2);
   \node at (0.2,-1.75) (t2) {$t_2$};
   \draw (0.39,-1.2) -- (0.39,-2) -- (1,-2)-- (0.39,-1.2);
   \node at (0.65,-1.75) (t3) {$t_3$};
   \draw (1.5,-1.2) -- (1.25,-2) -- (1.75,-2)-- (1.5,-1.2);
   \node at (1.55,-1.75) (t3) {$t_4$};
  \end{scope}
  \begin{scope}[yshift=-3cm]
    \node at (0,0) (n0) {$.$};
    \node at (-0.75,-1) (n1) {$a$};
    \node at (0,-1) (n2) {$b$};
    \node at (0.75,-1) (n3) {$c$};
    \draw[-,semithick] (n0) -- (n1);
    \draw[-,semithick] (n0) -- (n2);
    \draw[-,semithick] (n0) -- (n3);
    \draw (-0.75,-1.2) -- (-1,-2) -- (-0.5,-2)-- (-0.75,-1.2);
    \node at (-0.7,-1.75) (t1) {$t_1$};
    \draw (0,-1.2) -- (-0.25,-2) -- (0.25,-2)-- (0,-1.2);
    \node at (0.05,-1.75) (t2) {$t_2$};
    \draw (0.75,-1.2) -- (0.5,-2) -- (1,-2)-- (0.75,-1.2);
    \node at (0.8,-1.75) (t3) {$t_3$};
  \end{scope}
  \begin{scope}[xshift=2.5cm,yshift=-3cm]
   \node at (0,-1) (n0) {$\underrightarrow{\mathit{add}}$};
  \end{scope}
  \begin{scope}[xshift=5cm,yshift=-3cm]
    \node at (0.375,0) (n0) {$.$};
    \node at (-0.75,-1) (n1) {$a$};
    \node at (0,-1) (n2) {$b$};
    \node at (0.75,-1) (n3) {$c$};
    \node at (1.5,-1) (n4) {$d$};
    \draw[-,semithick] (n0) -- (n1);
    \draw[-,semithick] (n0) -- (n2);
    \draw[-,semithick] (n0) -- (n3);
    \draw[-,semithick] (n0) -- (n4);
   \draw (-0.75,-1.2) -- (-1,-2) -- (-0.5,-2)-- (-0.75,-1.2);
   \node at (-0.7,-1.75) (t1) {$t_1$};
   \draw (0,-1.2) -- (-0.25,-2) -- (0.25,-2)-- (0,-1.2);
   \node at (0.05,-1.75) (t2) {$t_2$};
   \draw (0.75,-1.2) -- (0.5,-2) -- (1,-2)-- (0.75,-1.2);
   \node at (0.8,-1.75) (t3) {$t_3$};
   \draw (1.5,-1.2) -- (1.25,-2) -- (1.75,-2)-- (1.5,-1.2);
   \node at (1.55,-1.75) (t3) {$t_4$};
  \end{scope}
 \end{tikzpicture}
 \caption{\label{fuseadd}Operations \emph{fuse} and \emph{add}.}
\end{figure}

Note that the fuse and add operations preserve the embedding i.e., given a twig query $q$ and two trees $t$ and $t'$, if $t\preccurlyeq q$ and $t\unlhd t'$, then $t'\preccurlyeq q$.
Furthermore, if we can embed a query $q$ in a tree $t$ which can be embedded in the dependency graph of an MS $S$, we can perform a sequence of operations such that $t$ is transformed into another tree $t'$ satisfying $S$ and $q$ at the same time. Formally:
\begin{proposition}\label{L6}
Given an MS $S$, a query $q$ and a tree $t$, if $G_S\preccurlyeq t$ and $t\preccurlyeq q$, then there exists a tree $t'\in L (S)\cap L(q)$.
The tree $t'$ can be constructed after a sequence of fuse and add operations (consistently with the schema $S$) from the tree $t$ and we denote $t\unlhd_S t'$.
\end{proposition}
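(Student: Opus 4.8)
The plan is to exhibit an explicit sequence of \emph{fuse} and \emph{add} operations carrying $t$ to a tree $t'$ that lies in $L(S)\cap L(q)$, and to maintain two invariants along the way. The first, $t'\preccurlyeq q$, comes for free: the remark preceding the proposition says fuse and add preserve embeddings, so $t\unlhd t'$ together with $t\preccurlyeq q$ gives $t'\preccurlyeq q$. The second, $G_S\preccurlyeq t'$, serves as a bookkeeping substitute for $S$-validity. Since trees carry no wildcards, an embedding of a tree into the graph $G_S$ is forced to be the labelling function, so $G_S\preccurlyeq t$ merely asserts that $\lab_t(\root_t)=\root_S$ and that, for every $(n,m)\in\child_t$, the label $\lab_t(m)$ occurs with nonzero multiplicity in $R_S(\lab_t(n))$; that is, $t$ uses only ``legal'' child labels, but possibly with wrong counts.

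\emph{Phase 1: fuse.} First I would repeatedly fuse any node that has two children with the same label. Each step strictly decreases the number of nodes, so the procedure terminates at a tree $t_1$ with $t\unlhd t_1$ in which every node has at most one child of each label. Fuse introduces no new parent--child label pair (the fused node keeps the common label and its children were children of nodes with that label), so $G_S\preccurlyeq t_1$ still holds; and $t_1\preccurlyeq q$ by the preservation remark. In $t_1$, a node $n$ labelled $a$ can violate its rule $R_S(a)=c_1^{M_1}\shuffle\ldots\shuffle c_k^{M_k}$ only by \emph{missing} some $c_i$ whose multiplicity $M_i$ is $+$ or $1$: every child label of $n$ is among $c_1,\dots,c_k$ by $G_S\preccurlyeq t_1$, each occurs at most once, and a single occurrence belongs to $\llbracket M\rrbracket$ for every nonzero multiplicity $M$.

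\emph{Phase 2: add.} Next I would build, for each symbol $b$, a finite tree $\tau_b$ rooted at $b$ in which a node labelled $c$ has exactly one child labelled $c'$ for each $c'$ of multiplicity $+$ or $1$ in $R_S(c)$ and nothing else. This is well defined and finite precisely because $S$ is pruned: the ``required-child'' relation has edge set exactly $G_S^{\mathrm u}$, which is acyclic, so $\tau_b$ is just the (finite) unfolding of the part of $G_S^{\mathrm u}$ reachable from $b$, and every node of $\tau_b$ satisfies its rule since each required symbol occurs once and every other symbol zero times. Then, for each node $n$ of $t_1$ and each symbol $b$ required by $R_S(\lab_{t_1}(n))$ but absent from the children of $n$, I would add a fresh copy of $\tau_b$ below $n$. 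There are finitely many such pairs $(n,b)$ and each step only enlarges the tree, so this terminates at $t'$ with $t_1\unlhd t'$; moreover $G_S\preccurlyeq t'$ (the new edges are all of the ``required'' kind, hence in $G_S$) and $t'\preccurlyeq q$.

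Finally I would check $t'\models S$ directly: its root label is $\root_S$; every node inside an appended $\tau_b$ satisfies its rule by construction; and every node $n$ coming from $t_1$, labelled $a$, now has exactly one child of each required symbol of $R_S(a)$, at most one of each optional symbol, and none of any other symbol, hence satisfies $R_S(a)$. Thus $t'\in L(S)\cap L(q)$, obtained from $t$ by a sequence of fuse and add steps whose additions are $S$-valid subtrees, so $t\unlhd_S t'$. I expect the only genuinely delicate point to be Phase 2's dependence on the pruned assumption, which is exactly what keeps the repair gadgets $\tau_b$ finite (an arbitrary MS might demand an infinite one); the rest is bookkeeping, the crucial observation being that both operations preserve $G_S\preccurlyeq\cdot$ and that, combined with the ``at most one child per label'' shape produced by Phase 1, this pins down precisely which repairs remain.
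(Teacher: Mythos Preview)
The paper states this proposition without proof (it is one of the results whose full argument is deferred to the forthcoming full version), so there is no approach to compare against. Your two-phase argument is correct and is the natural one: fusing until no two siblings share a label reduces every possible violation of $R_S(a)$ to a \emph{missing} required child, and then grafting the finite unfolding of $G_S^{\mathrm u}$ below the missing symbol repairs each such violation without creating new ones. Your reading of $G_S\preccurlyeq t$ as ``$\lambda=\lab_t$ and all parent--child label pairs are edges of $G_S$'' is exactly right, and your observation that the pruned assumption is what makes the repair gadgets $\tau_b$ finite is the key point the paper is implicitly relying on when it assumes all MS are pruned.
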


\subsubsection{Family of characteristic graphs}
Given a query $q$ and a schema $S$, if $q$ can be embedded in $G_S$ then we can capture all the trees satisfying $S$ and $q$ at the same time with a potentially infinite family of graphs.
First, we explain the construction of the characteristic graphs.
A \emph{characteristic graph $G$ for a schema $S$ and a query $q$} is a tuple $(V_G,\root_G,\lab_G,E_G)$, where $V_G$ is a finite set of vertices, $\root_G\in V_G$ is the root of the graph, $\lab_G:V_G\rightarrow\Sigma$ is a labeling function (with $\lab_G(\root_G) = \root_S$), and $E_G\subseteq V_G\times V_G$ represents the set of edges.
Note that for two $x,y\in\Sigma\cup\{\wc\}$ we say that $x$ \emph{matches} $y$ if $y\neq\wc$ implies $x=y$.
We construct $G$ with the three steps described below:
\begin{enumerate}
\item For any $(n_1,n_2)\in \child_q$, add $n_1', n_2'$ to $V_G$ and $(n_1', n_2')$ to $E_G$, where $\lab_G(n_1')$ matches $\lab_q(n_1)$ and $\lab_G(n_2')$ matches $\lab_q(n_2)$.
\item For any $(n_1, n_2)\in \desc_q$, choose an acyclic path $n_1',\ldots,n_k'$ from $G_S$, such that $n_1'$ matches $\lab_q(n_1)$ and $n_k'$ matches $\lab_q(n_2)$. 
We add to $G$ the corresponding vertices and edges for this path, as shown for the previous case.
\item For any $n\in V_G$, take the subgraph from $G_S^\mathrm{u}$ starting at $\lab_G(n)$ and fuse it in the node $n$ in the graph $G$.
\end{enumerate}
In Figure \ref{fig:graph} we present an example of graph obtained from the embedding from Figure \ref{fig:lambda}.
We denote by $\mathcal G(q,S)$ the set of all the graphs obtained from a query $q$ and a disjunction-free multiplicity schema $S$ using the three steps above, using all the embeddings from $q$ into $S$.
We extend the previous definition of the unfolding to the characteristic graphs.
Since a graph $G\in\mathcal G(q,S)$ is acyclic, it has a finite unfolding.
From the definition it also follows that the size of $G$ is polynomially bounded by $|q|\times |S|$ and $G\preccurlyeq q$.

If we allow cyclic paths in step 2, then we obtain similarly the set $\mathcal G^*(q,S)$.
Note that $|\mathcal G(q,S)|$ is finite and may be exponential, while $|\mathcal G^*(q,S)|$ may be infinite.
All the trees $t\in L(S)\cap  L(q)$ can be obtained by fuse and add operations (consistently with $S$) from the unfolding trees of the graphs in $\mathcal G^* (q,S)$:
\[
\forall t\in L(S)\cap L(q).\ \exists G\in \mathcal G^*(q,S).\ u_G \unlhd_S t
\]
Furthermore, by using a pumping argument, we have:
\[
\forall q\in\Twig.\ \forall G\in\mathcal G^*(q,S).\ (G\not\preccurlyeq q\Rightarrow  \exists G'\in\mathcal G(q,S).\ G'\not\preccurlyeq q).
\]

\begin{figure}[htb]
\centering
\subfigure[Embedding $\lambda:N_q\rightarrow G_S$.]{\label{fig:lambda}
\centering
\begin{tikzpicture}
\node at (-2,0) (s0) {$r$};
\node at (-2,-1) (s1) {$c$}edge[<-] (s0);
\node at (-3,-2) (s2) {$a_1$}edge[<-] (s1);
\node at (-1,-2) (s3) {$a_2$}edge[<-] (s1);
\node at (-2,-3) (s4) {$b$}edge[<-] (s2)edge[<-] (s3)edge[->,densely dotted] (s1);;
\node at (1,0) (n0) {$r$}edge[->,bend right, densely dotted] (s0);
\node at (0.5,-1) (n1) {$\wc$}edge[-,double] (n0)edge[->,bend right, densely dotted] (s3);
\node at (0.5,-2) (n2) {$\wc$}edge[-,double] (n1)edge[->,bend right, densely dotted] (s4);
\node at (0.5,-3) (n3) {$\wc$}edge[-,double] (n2)edge[->,bend left, densely dotted] (s3);
\node at (1.5,-1) (n4) {$b$}edge[-,double] (n0)edge[->,bend left, densely dotted] (s4);
\node at (1.5,-2) (n5) {$c$}edge[-] (n4)edge[->,bend right, densely dotted] (s1);
\end{tikzpicture}
}
\subfigure[Graph $G\in\mathcal G(q,S)$]{\label{fig:graph}~~~~~~~~~~~~~
\centering
\begin{tikzpicture}
\node at (0,0) (g0) {$\framebox{$r$}$};
\node at (1,0) (g00) {$c$} edge[<-] (g0);
\node at (1.9,0.25) (g01) {$a_1$} edge [<-] (g00);
\node at (1.9,-0.25) (g02) {$a_2$} edge [<-] (g00);
\node at (2.8,0) (g03) {$b$} edge [<-] (g01)edge [<-] (g02);
\node at (-2,-1) (g1) {$c$} edge[<-] (g0);
\node at (-1,-0.75) (g10) {$a_1$} edge [<-] (g1);
\node at (-1,-1.25) (g11) {$a_2$} edge [<-] (g1);\node at (-0.1,-1) (g12) {$b$} edge [<-] (g10)edge [<-] (g11);
\node at (-2,-2) (g2) {$\framebox{$a_2$}$} edge[<-] (g1);
\node at (-1.1,-2) (g20) {$b$} edge[<-] (g2);
\node at (-2,-3) (g3) {$\framebox{$b$}$} edge[<-] (g2);
\node at (-2,-4) (g4) {$c$} edge[<-] (g3);
\node at (-1,-3.75) (g40) {$a_1$} edge [<-] (g4);
\node at (-1,-4.25) (g41) {$a_2$} edge [<-] (g4);
\node at (-0.1,-4) (g42) {$b$} edge [<-] (g40)edge [<-] (g41);
\node at (-2,-5) (g5) {$\framebox{$a_2$}$} edge[<-] (g4);
\node at (-1,-5) (g50) {$b$} edge[<-] (g5);
\node at (1,-1) (g6) {$c$} edge[<-] (g0);
\node at (2,-0.75) (g60) {$a_1$} edge [<-] (g6);
\node at (2,-1.25) (g61) {$a_2$} edge [<-] (g6);
\node at (2.9,-1) (g62) {$b$} edge [<-] (g60)edge [<-] (g61);
\node at (1,-2) (g7) {$a_1$} edge[<-] (g6);
\node at (1.9,-2) (g70) {$b$} edge[<-] (g7);
\node at (1,-3) (g8) {$\framebox{$b$}$} edge[<-] (g7);
\node at (1,-4) (g9) {$\framebox{$c$}$} edge[<-] (g8);
\node at (2,-3.75) (g90) {$a_1$} edge [<-] (g9);
\node at (2,-4.25) (g91) {$a_2$} edge [<-] (g9);
\node at (2.9,-4) (g92) {$b$} edge [<-] (g90)edge [<-] (g91);
\end{tikzpicture}
}
\caption{\label{conp-alg}An embedding from a query $q$ to a dependency graph $G_S$ and a graph $G\in \mathcal G(q,S)$.
In $G_S$, the non-nullable edges are drawn with a full line and the nullable edges with a dotted line.}
\end{figure}
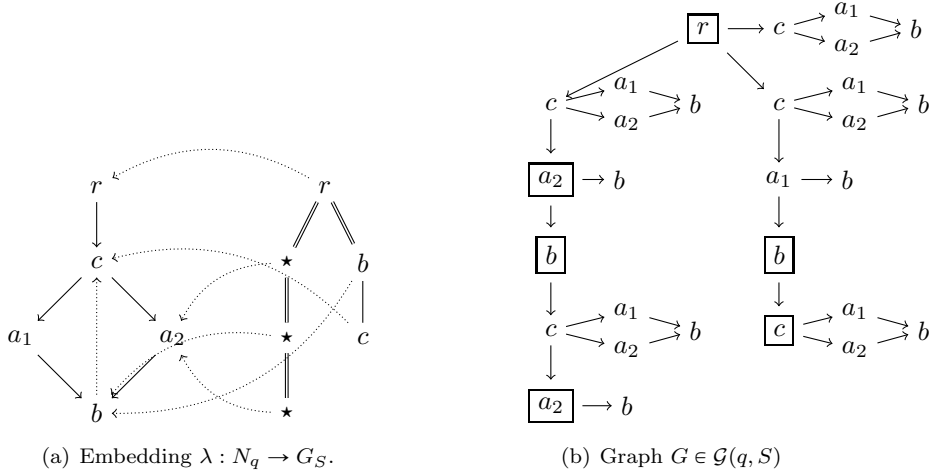
\subsubsection{Complexity results}
The dependency graphs and embeddings capture satisfiability and
implication of queries by MS.
\begin{lemma}\label{lemma:graph}
  For a twig query $q$ and an MS $S$ we have: 1) $q$ is satisfiable by
  $S$ iff $G_S\preccurlyeq q$, 2) $q$ is implied by $S$ iff
  $G_S^\mathrm{u}\preccurlyeq q$.
\end{lemma}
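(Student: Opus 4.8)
The plan is to derive each of the two biconditionals by composing the simulation, unfolding and embedding lemmas (Lemmas~\ref{L1}--\ref{L3}) with Proposition~\ref{L6}. Three of the four implications are short rearrangements; the real content is the ``$\Leftarrow$'' direction of part~1.

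For part~1, ``$\Rightarrow$'': suppose some $t\in L(S)$ satisfies $q$, i.e.\ $t\preccurlyeq q$. I would first observe that the labelling function $\lab_t$, viewed as a map from $N_t$ to $\Sigma$, is an embedding of the tree $t$ into the rooted graph $G_S$: it sends $\root_t$ to $\root_S$ because $t\models S$, it is label-preserving by definition, and it carries every edge of $\child_t$ to an edge of $G_S$, since $t\models S$ forces the label $b$ of any child of an $a$-labelled node to occur in $R_S(a)$ with a non-zero multiplicity, which is exactly the condition that puts $(a,b)$ in $E_S$. Hence $G_S\preccurlyeq t$, and since embeddings compose (as noted just before Lemma~\ref{L2}), $G_S\preccurlyeq t$ together with $t\preccurlyeq q$ yields $G_S\preccurlyeq q$.

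Part~2 runs along the same lines but with the universal graph. For ``$\Leftarrow$'': assume $G_S^{\mathrm u}\preccurlyeq q$ and take any $t\in L(S)$. By Lemma~\ref{L1} we have $t\preccurlyeq G_S^{\mathrm u}$; by Lemma~\ref{L2} this is the same as $t\preccurlyeq u_{G_S^{\mathrm u}}$, i.e.\ the unfolding tree embeds into $t$; and by Lemma~\ref{L3} the hypothesis yields $u_{G_S^{\mathrm u}}\preccurlyeq q$. Composing these two tree embeddings gives $t\preccurlyeq q$, i.e.\ $t\models q$; as $t\in L(S)$ was arbitrary, $q$ is implied by $S$. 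For ``$\Rightarrow$'': I would use $u_{G_S^{\mathrm u}}$ as an explicit witness tree in $L(S)$. Since $S$ is pruned, $G_S^{\mathrm u}$ is acyclic, so $u_{G_S^{\mathrm u}}$ is finite; its root is labelled $\root_S$, and every node labelled $a$ has exactly one child for each non-nullable edge out of $a$ and no other children, so its multiset of children takes each relevant symbol $0$ or $1$ times and therefore belongs to $L(R_S(a))$ (every non-zero multiplicity class contains $1$, and $0\in\llbracket M\rrbracket$ whenever $M$ is a nullable multiplicity). Thus $u_{G_S^{\mathrm u}}\in L(S)$; if $q$ is implied by $S$ then $u_{G_S^{\mathrm u}}\models q$, i.e.\ $u_{G_S^{\mathrm u}}\preccurlyeq q$, and Lemma~\ref{L3} turns this back into $G_S^{\mathrm u}\preccurlyeq q$.

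Part~1, ``$\Leftarrow$'', is where the work is. Given an embedding of $q$ into $G_S$, the set $\mathcal G(q,S)$ is non-empty, so I would fix a characteristic graph $G\in\mathcal G(q,S)$; it is acyclic, so its unfolding $u_G$ is a finite tree, and $u_G\preccurlyeq q$ follows from $G\preccurlyeq q$ by Lemma~\ref{L3}. By step~3 of the construction, every node of $u_G$ already has below it a full copy of the non-nullable children required by $S$, so fusing together all equally-labelled siblings throughout $u_G$ (a sequence of fuse operations carried out consistently with $S$; cf.\ Proposition~\ref{L6}) produces a finite tree $t'$ in which every node has each symbol present at most once and has all its non-nullable children present, hence $t'\models S$; and $t'\preccurlyeq q$ still holds because fuse preserves embeddings. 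Therefore $t'\in L(S)\cap L(q)$, i.e.\ $q$ is satisfiable by $S$. The subtle point, and the main obstacle, is exactly this step: the naive candidate --- the unfolding of $G_S$ itself --- does not work, because it may be infinite (when $G_S$ contains a cycle through nullable edges) and, even when finite, may place two equally-labelled children under one node and thereby violate a multiplicity; routing the argument through the finite characteristic graphs and the fuse/add machinery of Proposition~\ref{L6}, and checking that the required operations always terminate consistently with $S$ (which relies on $S$ being pruned), is what carries the proof.
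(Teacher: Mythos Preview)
Your proof is correct and, for part~2 and for the ``$\Leftarrow$'' direction of part~1, follows the paper's argument almost verbatim: both route through Lemmas~\ref{L1}--\ref{L3}, use $u_{G_S^{\mathrm u}}$ as the witness tree in $L(S)$, and appeal to the characteristic-graph family $\mathcal G(q,S)$ together with fuse operations to turn an embedding $G_S\preccurlyeq q$ into a concrete tree in $L(S)\cap L(q)$. Your treatment of this last step is in fact more careful than the paper's sketch, which simply asserts that the unfolding of any $G\in\mathcal G(q,S)$ already satisfies $S$ and $q$ without mentioning the fusing needed to collapse duplicated labels.

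The one genuine divergence is the ``$\Rightarrow$'' direction of part~1. The paper argues via $\mathcal G^*(q,S)$: any $t\in L(S)\cap L(q)$ arises by fuse/add from some $u_G$ with $G\in\mathcal G^*(q,S)$, and from $t\preccurlyeq q$ it backs out $u_G\preccurlyeq q$, hence $G\preccurlyeq q$, hence $G_S\preccurlyeq q$. Your route is shorter and more transparent: observe directly that $\lab_t\colon N_t\to\Sigma$ is an embedding of $t$ into $G_S$ (membership of $t$ in $L(S)$ forces every child edge to land on an edge of $G_S$), so $G_S\preccurlyeq t$, and compose with $t\preccurlyeq q$. This avoids the characteristic-graph machinery entirely for that direction and sidesteps a slightly delicate point in the paper's sketch (fuse/add preserve embeddings in the forward direction $u_G\unlhd t$, not obviously backward). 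Both approaches are valid; yours is the cleaner one here.
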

\begin{proof}\normalfont[sketch]
\noindent{\bf (1)} For the \emph{if} part, we know that $G_S\preccurlyeq q$, so the family of graphs $\mathcal G(q, S)$ is not empty.
The unfolding of any graph from $\mathcal G(q, S)$ satisfies $S$ and $q$ at the same time, hence $q$ is satisfiable by $S$.

For the \emph{only if} part, we know that there exists a tree $t\in L(S)\cap L(q)$, which  can for example be obtained after fuse operations (since one occurrence is consistent to all the multiplicities except 0) on the unfolding of a graph $G$ from $\mathcal G^*(q,S)$.
Since $t\preccurlyeq q$, we obtain $u_{G}\preccurlyeq q$, so $G\preccurlyeq q$, which, from the construction of $G$, implies that $G_S\preccurlyeq q$.

\noindent{\bf (2)} For the \emph{if} part, we know that $G_S^{\mathrm u}\preccurlyeq q$, which implies by Lemma~\ref{L3} that $u_{G_S^{\mathrm u}}\preccurlyeq q$.
On the other hand, take a tree $t\in L(S)$.
By Lemma~\ref{L1} we have $t\preccurlyeq G_{S}^{\mathrm u}$, which implies by Lemma~\ref{L2} that $t\preccurlyeq u_{G_S^{\mathrm u}}$.
From the last embedding and $u_{G_S^{\mathrm u}}\preccurlyeq q$ we infer that $t\preccurlyeq q$.
Since $t$ can be any tree in the language of $S$, we conclude that $q$ is implied by $S$.

For the \emph{only if} part, we know that for any $t\in L(S)$, $t\preccurlyeq q$.
Naturally, $u_{G_S^{\mathrm u}}$ is in the language of $S$ (since one occurrence is consistent to all
the multiplicities except 0), so $u_{G_S^{\mathrm u}}\preccurlyeq q$.
From the definition of the unfolding, we can infer that $G_S^{\mathrm u}\preccurlyeq u_{G_S^{\mathrm u}}$, which implies that $G_S^{\mathrm u}\preccurlyeq q$.\qed
\end{proof}
Furthermore, testing the embedding of a query in a graph can be done
in polynomial time with a simple bottom-up algorithm. From this observation and Lemma~\ref{lemma:graph}, we obtain:
\begin{theorem}\label{sat-impl-ptime}
$\SAT_{\ms,\Twig}$ and $\IMPL_{\ms,\Twig}$ are in PTIME.
\end{theorem}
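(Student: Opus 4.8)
The plan is to combine Lemma~\ref{lemma:graph} with a polynomial-time procedure for deciding the embedding relation $G\preccurlyeq q$ between a twig query $q$ and a rooted graph $G$. By Lemma~\ref{lemma:graph}, $(S,q)\in\SAT_{\ms,\Twig}$ iff $G_S\preccurlyeq q$, and $(S,q)\in\IMPL_{\ms,\Twig}$ iff $G_S^{\mathrm u}\preccurlyeq q$. Both $G_S$ and $G_S^{\mathrm u}$ have vertex set $\Sigma$ and can be read off the rules of $S$ in time linear in $|S|$; their transitive closures can be computed in time $O(|\Sigma|^3)$. So it remains to show that, given a rooted graph $G=(\Sigma,\root,E)$ together with $E^+$, and a twig query $q$, one can decide $G\preccurlyeq q$ in polynomial time.

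First I would exploit the fact that a twig query is a rooted tree in which every non-root node has exactly one predecessor, hence admits a bottom-up traversal. For each query node $n$ I compute the set $\mathrm{Sat}(n)\subseteq\Sigma$ of graph vertices $v$ such that the subquery of $q$ rooted at $n$ can be embedded into $G$ by a map sending $n$ to $v$. The base case is a leaf $n$: $\mathrm{Sat}(n)$ is the set of $v\in\Sigma$ matching $\lab_q(n)$, i.e.\ all of $\Sigma$ if $\lab_q(n)=\wc$, and $\{\lab_q(n)\}$ otherwise. For an internal node $n$ with children $n_1,\ldots,n_k$ (already processed), $v\in\mathrm{Sat}(n)$ iff $v$ matches $\lab_q(n)$ and for every child $n_i$ there is some $v_i\in\mathrm{Sat}(n_i)$ with $(v,v_i)\in E$ when $(n,n_i)\in\child_q$, or $(v,v_i)\in E^+$ when $(n,n_i)\in\desc_q$. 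Finally, $G\preccurlyeq q$ holds iff $\root\in\mathrm{Sat}(\root_q)$, since an embedding must map $\root_q$ to $\root$.

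Correctness follows by a routine induction on the structure of $q$, using that embeddings of twig queries into graphs need not be injective: the image chosen for one child subquery places no constraint on the image chosen for a sibling, so the existential choices at a node decompose independently over its children, which is exactly what the recurrence expresses. For the running time, there are $|q|$ query nodes; at each node one tests each of the $|\Sigma|$ candidate vertices, and each test scans, per child edge, at most $|\Sigma|$ candidate images in the relevant relation, giving $O(|q|\cdot|\Sigma|^3)$ overall, plus $O(|\Sigma|^3)$ for the transitive closure. This is polynomial in $|S|+|q|$, which yields the theorem.

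I do not expect a genuine obstacle here: Lemma~\ref{lemma:graph} already carries the conceptual weight, and the only point requiring care is soundness of the bottom-up recurrence despite the fact that the graph $G_S$ used for satisfiability may contain cycles. This is handled by recursing on the query, which is acyclic, rather than on the graph, and by precomputing $E^+$ so that descendant edges are treated uniformly regardless of cycles.
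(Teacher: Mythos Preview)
Your proposal is correct and follows exactly the paper's approach: invoke Lemma~\ref{lemma:graph} to reduce both problems to testing $G\preccurlyeq q$ for the appropriate dependency graph, and then observe that this embedding test is polynomial via a bottom-up computation over the (acyclic) query. The paper itself states only that ``testing the embedding of a query in a graph can be done in polynomial time with a simple bottom-up algorithm,'' so your write-up in fact supplies the details the paper leaves implicit, including the complexity bound and the remark that cycles in $G_S$ are harmless because the recursion is on $q$.
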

The intractability of the containment of twig queries~\cite{MiSu04}
implies the coNP-hardness of the containment of twig queries in the
presence of MS.  Proving the membership of the problem to coNP is,
however, not trivial. Given an instance $(p,q,S)$, the set of all the
trees satisfying $p$ and $S$ can be characterized with a set
$\mathcal{G}({p,S})$ containing an exponential number of
polynomially-sized graphs and $p$ is contained in $q$ in the presence
of $S$ iff the query $q$ can be embedded into all the graphs in
$\mathcal{G}({p,S})$. This condition is easily checked by a
non-deterministic Turing machine.
\begin{theorem}\label{th:cnt-ms-hard}
$\CNT_{\ms,\Twig}$ is coNP-complete.
\end{theorem}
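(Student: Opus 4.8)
The statement has two halves: coNP-hardness, which I would get from a reduction from containment of twig queries without any schema, and membership in coNP, which I would get through the family $\mathcal{G}(p,S)$ of characteristic graphs developed above.

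For the lower bound, the plan is to reduce from the problem of deciding $L(p)\subseteq L(q)$ for twig queries $p$ and $q$ built from $/$, $\dblslash$, branching and the wildcard $\wc$, which is coNP-hard~\cite{MiSu04} as already noted above; we may assume $\root_p=\root_q$, since when the two root labels differ and neither is $\wc$ the containment question is trivial. Given such an instance, I would let $\Sigma$ consist of all symbols occurring in $p$ or $q$ together with the common root label $r$, and take $S$ to be the MS with $\root_S=r$ and $R_S(a)=a_1^*\shuffle\ldots\shuffle a_k^*$ for every $a\in\Sigma=\{a_1,\ldots,a_k\}$. Then $S$ is pruned and $L(S)$ is exactly the set of trees whose root is labeled $r$; since every tree satisfying $p$ already has root label $r$, we get $(p,q,S)\in\CNT_{\ms,\Twig}$ iff $L(p)\subseteq L(q)$. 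The reduction is plainly polynomial, so $\CNT_{\ms,\Twig}$ is coNP-hard.

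For the upper bound I would show that the complement of $\CNT_{\ms,\Twig}$ is in NP. Fix an instance $(p,q,S)$. The central claim is that $(p,q,S)\notin\CNT_{\ms,\Twig}$ iff there exists $G\in\mathcal{G}(p,S)$ with $G\not\preccurlyeq q$. For the forward direction, pick $t\in L(S)\cap L(p)$ with $t\not\preccurlyeq q$; by the property stated above that every tree of $L(S)\cap L(p)$ is of the form $u_G\unlhd_S t$ for some $G\in\mathcal{G}^*(p,S)$, fix such a $G$; if $q$ could be embedded into $G$ then by Lemma~\ref{L3} it could be embedded into $u_G$, and since fuse and add preserve embeddings it could be embedded into $t$, contradicting $t\not\preccurlyeq q$; hence $G\not\preccurlyeq q$, and the pumping argument stated above supplies some $G'\in\mathcal{G}(p,S)$ with $G'\not\preccurlyeq q$. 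For the backward direction, given $G\in\mathcal{G}(p,S)\subseteq\mathcal{G}^*(p,S)$ with $G\not\preccurlyeq q$, the unfolding $u_G$ lies in $L(S)\cap L(p)$ — it satisfies $p$ since $p$ embeds into $G$ and hence into $u_G$ by Lemma~\ref{L3}, and it satisfies $S$ because the construction of $\mathcal{G}(p,S)$ fuses, rather than merely appends, the subgraphs of $G_S^{\mathrm u}$ into the skeleton coming from $p$, so no upper bound of a rule of $S$ is violated — and $u_G\not\preccurlyeq q$ by Lemma~\ref{L3}, so $(p,q,S)\notin\CNT_{\ms,\Twig}$.

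It remains to observe that the right-hand side of the claim is decidable in NP. A graph $G\in\mathcal{G}(p,S)$ is determined by an embedding of $p$ into $G_S$ together with, for each $\dblslash$-edge of $p$, an acyclic path of $G_S$; this data is of size polynomial in $|p|+|S|$ and can be checked for well-formedness in polynomial time, and from it the three construction steps produce $G$, of size $O(|p|\times|S|)$, deterministically. Testing $G\not\preccurlyeq q$ is then polynomial by the bottom-up embedding algorithm. Hence a nondeterministic machine can guess $G$ and verify $G\not\preccurlyeq q$ in polynomial time, placing the complement of $\CNT_{\ms,\Twig}$ in NP; with the hardness part this gives coNP-completeness. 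I expect the genuinely delicate points to be the two facts this argument leans on, both belonging to the machinery developed earlier: the pumping argument that lets us replace the possibly infinite family $\mathcal{G}^*(p,S)$ by the finite family $\mathcal{G}(p,S)$, and the claim that the graphs in $\mathcal{G}(p,S)$ remain polynomially sized while still faithfully capturing $L(S)\cap L(p)$ up to fuse and add.
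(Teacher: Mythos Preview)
Your approach matches the paper's almost exactly: hardness is taken from~\cite{MiSu04}, and membership in coNP goes through the same equivalence $p\not\subseteq_S q \iff \exists G\in\mathcal{G}(p,S).\ G\not\preccurlyeq q$, with the forward direction using the covering of $L(S)\cap L(p)$ by $\mathcal{G}^*(p,S)$ together with the pumping step, and the backward direction producing a concrete counterexample tree from $G$.

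There is one concrete gap in your backward direction. You assert that $u_G\in L(S)$ on the grounds that step~3 of the construction ``fuses rather than appends'' the $G_S^{\mathrm u}$-subgraphs. But the skeleton coming from $p$ itself can already create several children with the same label at a single node of $G$ (see the paper's Figure~\ref{fig:graph}, where the root of $G$ has three $c$-children), so $u_G$ may violate an upper multiplicity such as $1$ or $?$ and hence fail to lie in $L(S)$. The paper does not claim $u_G\in L(S)$; instead it passes from $u_G$ to a tree $t\in L(S)$ by a sequence of \emph{fuse} operations (one occurrence of each label is always admissible) and then argues, by contradiction, that $t\not\preccurlyeq q$ since otherwise $u_G\preccurlyeq q$ and thus $G\preccurlyeq q$. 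You should route your argument the same way rather than asserting $u_G\in L(S)$ outright.
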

\begin{proof}\normalfont[sketch]
Theorem 4 from~\cite{MiSu04} implies that $\CNT_{\ms,\Twig}$ is coNP-hard.
Next, we prove the membership of the problem to coNP.
Given an instance $(p,q,S)$, a witness is a function $\lambda:N_p\rightarrow \Sigma$.
Testing whether $\lambda$ is an embedding from $p$ to $G_S$ requires polynomial time.
If $\lambda$ is an embedding, a non-deterministic polynomial algorithm chooses a graph $G$ from $\mathcal G(p,S)$ and checks whether $q$ can be embedded in $G$.
We claim that:
\[
p\varnot\subseteq_S q\iff \exists G\in\mathcal G(p,S).\ G\not\preccurlyeq q
\]
For the \emph{if} case, we assume that there exists a graph $G\in\mathcal G(p,S)$ such that $G\not\preccurlyeq q$.
We know that $G\preccurlyeq p$, so $u_G\preccurlyeq p$, so there exists a tree $t\in L(S)$ such that $t\preccurlyeq p$ and $u_G\unlhd_S t$ (using only fusions since one occurrence is consistent to all
the multiplicities except 0).
If we assume by absurd that $t\preccurlyeq q$, we have $u_G\preccurlyeq q$, so $G\preccurlyeq q$, which is a contradiction.
We infer thus that there exists a tree $t\in L(S)\cap  L(p)$, such that $t\notin L(q)$, so $p\varnot\subseteq_S q$.

For the \emph{only if} case, we assume that $p\varnot\subseteq_S q$, so there exists a tree $t\in L(S)\cap  L(p)$ such that $t\notin L(q)$.
Because $t\in L(S)\cap  L(p)$, we know that there exists a graph $G\in\mathcal G^* (p,S)$, such that $u_G\unlhd_S t$.
We know that $t\not\preccurlyeq q$, so $u_G\not\preccurlyeq q$, so $G\varnot\preccurlyeq q$.
Moreover, we know using the pumping argument that in this case there exists a graph $G'\in\mathcal G(p,S)$ such that $G'\not\preccurlyeq q$.
\qed
\end{proof}

\subsubsection{Extending the complexity results to disjunction-free DTDs}
We also point out that the complexity results for implication and containment of twig queries in the presence of MS can be adapted to disjunction-free DTDs.
This allows us to state results which, to the best of our knowledge, are novel.

Similarly to the MS, we represent a \emph{disjunction-free DTD} as a tuple $S=(\root_S, R_S)$, where $\root_S$ is a designed root label and $R_S$ maps symbols to regular expressions using no disjunction i.e., regular expressions of the form:
\[
E ::= \varepsilon\mid a\mid E^*\mid E^?\mid E^+\mid E_1\cdot E_2,
\]
where $a\in\Sigma$. Given such an expression $E$, consider the set $\universal(E)$ which contains the set of labels present in all the words from $ L(E)$. 
Formally,
\[
\universal(E) = \{a\in\Sigma\mid \forall w\in L(E).\ \exists w_1,w_2.\ w = w_1\cdot a\cdot w_2\}
\]
We can compute $\universal(E)$ recursively:
\begin{flalign*}
&\universal(\varepsilon) = \universal(E^*) = \universal(E^?) = \emptyset&\\
&\universal(a) = \{a\}\\
&\universal(E_1\cdot E_2) = \universal(E_1)\cup \universal(E_2)\\
&\universal(E^+) = \universal(E)
\end{flalign*}
Similarly, let $\existential(E)$ the set containing  labels which appear in at least one word from $ L(E)$. 
Formally,
\[
\existential(E) = \{a\in\Sigma\mid \exists w\in L(E).\ \exists w_1,w_2.\ w = w_1\cdot a\cdot w_2\}
\]
We can compute $\existential(E)$ recursively:
\begin{flalign*}
&\existential(\varepsilon) = \emptyset&\\
&\existential(a) = \{a\}\\
&\existential(E^{+/*/?})= \existential(E)\\
&\existential(E_1\cdot E_2) = \existential(E_1)\cup \existential(E_2)
\end{flalign*}
Next, we adapt the notions of dependency graph and universal dependency graph for disjunction-free DTDs.
The \emph{dependency graph} of a disjunction-free DTD $S$ is a rooted graph $G_S=(\Sigma, \root_S, E_S)$, where
\[
E_S = \{(a,a')\mid a'\in\existential(R_S(a))\}.
\]
Similarly, the \emph{universal dependency graph} of a disjunction-free DTD $S$ is a rooted graph $G_S^{\mathrm u}=(\Sigma, \root_S, E_S^{\mathrm u})$, where
\[
E_S^{\mathrm u} = \{(a,a')\mid a'\in\universal(R_S(a))\}.
\]
We assume w.l.o.g.\ that from now on we manipulate only disjunction-free DTDs having no cycle in the universal dependency graph.
Otherwise, if there is a cycle in the universal dependency graph, this means that there does not exist any tree consistent with the schema and containing any of the labels implied in that cycle.

For a symbol $a\in\Sigma$ and a disjunction-free regular expression $E$, by $\minnb(E,a)$ we denote the minimum number of occurrences of the symbol $a$ in any word consistent with $E$.
\begin{flalign*}
&\minnb(\varepsilon,a) = \minnb(E^*,a) = \minnb(E^?,a) = 0&\\
&\minnb(a,a) = 1\\
&\minnb(E_1\cdot E_2,a) = \minnb(E_1,a)+\minnb(E_2,a)\\
&\minnb(E^+,a) = \minnb(E,a)
\end{flalign*}
We adapt the definition of \emph{unfolding} for the (universal) dependency graph of a disjunction-free DTD.
For a disjunction-free multiplicity schema, the unfolding of the universal dependency graph belongs to its language since one occurrence is consistent with all the multiplicities except 0.
On the other hand, for a disjunction-free DTD $S$ this property does not hold, so we extend the construction of the unfolding with one more step:
\begin{itemize}
\item Let $u_{G_S^{\mathrm u}}$ be the unfolding of $G_S^{\mathrm u}$ obtained as it is defined for the MS.
\item Update $u_{G_S^{\mathrm u}}$ such that for any $n\in N_{u_{G_S^{\mathrm u}}}$, for any $a\in\Sigma$, 
let $t_a$ the subtree having as root the child of $n$ labeled by $a$.
Next, add copies of $t_a$ as children of $n$ until $n$ has $\minnb(R_S(\lab_{u_{G_S^{\mathrm u}}}(n)), a)$ children labeled with $a$.
\end{itemize}
Note that a consequence of this new definition is that the unfolding of the universal dependency graph of a disjunction-free DTD belongs to its language (modulo the order of the elements).
The order imposed by the DTD on the elements is not important because in the sequel we work with twig queries, which ignore this order.

\begin{corollary}\label{cor}
  $\IMPL_{\mathit{disj\text{-}free}\text{-}\DTD,\Twig}$ is in PTIME and
  $\CNT_{\mathit{disj\text{-}free}\text{-}\DTD,\Twig}$ is coNP-complete.
\end{corollary}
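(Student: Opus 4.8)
The plan is to replay the development of Subsection~\ref{subsec:ms} almost verbatim, using the versions of dependency graph, universal dependency graph, and (extended) unfolding for disjunction-free DTDs set up above. The one place where the argument for MS genuinely uses that $S$ is an MS is the recurring remark ``one occurrence is consistent with all the multiplicities except $0$'', which is what makes $u_{G_S^{\mathrm u}}$ land in $L(S)$ and makes \emph{fuse} schema-preserving. For disjunction-free DTDs this fails (a rule $b\cdot b\cdot c^*$ forces exactly two $b$'s), and the purpose of $\minnb$ and the extra copying step in the extended unfolding is precisely to repair it: by construction the extended $u_{G_S^{\mathrm u}}$ has, under every node labelled $a$, exactly $\minnb(R_S(a),b)$ children labelled $b$ for each $b\in\universal(R_S(a))$, so it belongs to $L(S)$ modulo sibling order, which is all that matters since twig queries ignore order.

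Concretely, I would first prove the analogue of Lemma~\ref{L1}: for every disjunction-free DTD $S$ and every $t\in L(S)$, $t\preccurlyeq G_S^{\mathrm u}$. The proof is the same induction on subgraphs as for MS, now using that $(a,a')\in E_S^{\mathrm u}$ means $a'\in\universal(R_S(a))$, i.e.\ every word of $L(R_S(a))$ contains $a'$, hence every $a$-labelled node of $t$ has an $a'$-labelled child. Lemmas~\ref{L2} and~\ref{L3} are purely graph-theoretic and carry over unchanged. In Proposition~\ref{L6} and in the construction of $\mathcal G(q,S)$ and $\mathcal G^*(q,S)$, step~3 (``fuse $G_S^{\mathrm u}$ into $n$'') is replaced by ``attach the extended unfolding of the universal subgraph rooted at $\lab_G(n)$'', so that the unfolding of every characteristic graph already lies in $L(S)$; the closure $\unlhd_S$ is then the reflexive–transitive closure of \emph{fuse} and \emph{add} restricted to steps that keep the tree in $L(S)$. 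With these in place, the analogue of Lemma~\ref{lemma:graph} holds exactly as before — $q$ is satisfiable by $S$ iff $G_S\preccurlyeq q$, and $q$ is implied by $S$ iff $G_S^{\mathrm u}\preccurlyeq q$ — the only modified step being ``$u_{G_S^{\mathrm u}}\in L(S)$'', now guaranteed by the $\minnb$-copying. Since $\universal(E)$, $\existential(E)$, hence $G_S^{\mathrm u}$ and $G_S$, are computable in PTIME and query-into-graph embedding is testable in PTIME by the bottom-up algorithm of Theorem~\ref{sat-impl-ptime}, this gives $\IMPL_{\mathit{disj\text{-}free}\text{-}\DTD,\Twig}\in\mathrm{PTIME}$.

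For $\CNT_{\mathit{disj\text{-}free}\text{-}\DTD,\Twig}$, coNP-hardness is inherited from Theorem~\ref{th:cnt-ms-hard}: every MS rule $a_1^{M_1}\shuffle\dots\shuffle a_k^{M_k}$ is captured, modulo sibling order, by the disjunction-free DTD rule $a_1^{M_1}\cdot\ldots\cdot a_k^{M_k}$ (each count ranges over $\llbracket M_i\rrbracket$ independently, and twig queries do not see order), so the reduction behind Theorem~\ref{th:cnt-ms-hard} (Theorem~4 of~\cite{MiSu04}) applies directly. For membership in coNP I would mirror the proof of Theorem~\ref{th:cnt-ms-hard}: on input $(p,q,S)$, guess $\lambda:N_p\to\Sigma$, verify in PTIME that it is an embedding of $p$ into $G_S$, then guess a graph $G\in\mathcal G(p,S)$ (polynomially sized) and verify $G\not\preccurlyeq q$; by the adapted statements $\forall t\in L(S)\cap L(p).\ \exists G\in\mathcal G^*(p,S).\ u_G\unlhd_S t$ and the pumping argument $G\not\preccurlyeq q$ for some $G\in\mathcal G^*(p,S)$ implies the same for some $G\in\mathcal G(p,S)$, one gets $p\not\subseteq_S q$ iff such a $G$ exists, so the nondeterministic check decides the complement.

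The step I expect to be the real obstacle is checking that Proposition~\ref{L6}, the identity $\forall t\in L(S)\cap L(q).\ \exists G\in\mathcal G^*(q,S).\ u_G\unlhd_S t$, and the pumping argument survive when $R_S(a)$ may \emph{repeat} a symbol, so that the admissible occurrence counts of a fixed symbol need not be upward closed and \emph{fuse} can no longer be applied freely. The resolution is that for a disjunction-free regular expression the set of feasible counts of a symbol $b$ is always of the shape $\{\minnb(E,b)\}$ or $\{\minnb(E,b),\minnb(E,b)+1,\dots\}$ (depending on whether $b$ occurs under a star), that the extended unfolding already realizes the value $\minnb$, and that \emph{add} only moves a count upward within that set; one then verifies, by a careful but routine induction on the structure of disjunction-free expressions using $\minnb$, $\universal$, and $\existential$, that the restricted \emph{fuse}/\emph{add} operations still suffice to reach every tree of $L(S)\cap L(p)$ from the unfolding of some characteristic graph.
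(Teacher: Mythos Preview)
Your proposal follows the paper's route exactly: adapt Lemma~\ref{lemma:graph}(2) via the extended unfolding and $\minnb$ for the PTIME claim, take coNP-hardness from~\cite{MiSu04}, and mirror the proof of Theorem~\ref{th:cnt-ms-hard} for the coNP upper bound. Your diagnosis of why the MS argument needs patching (the ``one occurrence is consistent with every multiplicity except $0$'' step fails for repeated symbols) and of the role of the $\minnb$-copying in restoring $u_{G_S^{\mathrm u}}\in L(S)$ is correct, and the implication half is fine as stated.

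There is, however, a concrete error in your final paragraph. The claim that for a disjunction-free regular expression $E$ the feasible occurrence counts of a symbol $b$ always form either $\{\minnb(E,b)\}$ or $\{\minnb(E,b),\minnb(E,b)+1,\dots\}$ is false: take $E=b\cdot(b\cdot b)^*$, where the feasible counts of $b$ are the odd numbers $\{1,3,5,\dots\}$, or $E=(b\cdot b\cdot c)^*$, where they are $\{0,2,4,\dots\}$. Hence a single \emph{add} of a $b$-child need not keep the tree in $L(S)$, so the ``careful but routine induction'' you promise cannot work as stated; the same obstruction hits your claim that \emph{add} ``only moves a count upward within that set.'' The paper's own sketch does not descend to this level of detail either. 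A workable repair is to let the schema-preserving step in $\unlhd_S$ insert, in one move, an entire copy of a starred or plussed subexpression (more precisely, the extended unfolding attached below it), so that every symbol's count jumps by exactly the amount contributed by one iteration; with that granularity the reachability and pumping arguments you need do go through.
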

\begin{proof}\normalfont[sketch]
We claim that a query $q$ is implied by a disjunction-free DTD $S$ iff $G_S^{\mathrm u}\preccurlyeq q$ and since the embedding of a query in a graph can be computed in polynomial time, this implies that $\IMPL_{\mathit{disj\text{-}free}\text{-}\DTD,\Twig}$ is in PTIME.
The proof follows immediately from the proof of Lemma~\ref{lemma:graph}(2), taking into account the new definition of the unfolding.
Theorem 4 from~\cite{MiSu04} implies that $\CNT_{\mathit{disj\text{-}free}\text{-}\DTD,\Twig}$ is coNP-hard.
The membership of $\CNT_{\mathit{disj\text{-}free}\text{-}\DTD,\Twig}$ to coNP follows from the proof of Theorem~\ref{th:cnt-ms-hard}, while taking into account the new definition of the unfolding.
\qed
\end{proof}

\section{Expressiveness of DMS}
\label{sec:expressiveness}
\noindent We compare the expressive power of DMS and DTDs with focus
on schemas used in real-life applications. First, we introduce a simple
tool for comparing regular expressions with disjunctive multiplicity
expressions, and by extension, DTDs with DMS. For a regular expression
$R$, the language $L(R)$ of unordered words is obtained by removing
the relative order of symbols from every ordered word defined by
$R$. A disjunctive multiplicity expression $E$ \emph{captures} $R$ if
$L(E)=L(R)$. A DMS $S$ \emph{captures} a DTD $D$ if for every symbol
the disjunctive multiplicity expression on the rhs of a rule in $S$
captures the regular expression on the rhs of the corresponding rule
in $D$. We believe that this simple comparison is adequate because if
a DTD is to be used in a data-centric application, then supposedly the
order between siblings is not important. Therefore, a DMS that
captures a given DTD defines basically the same type of admissible
documents, without imposing any order among siblings. Naturally, if we
use the above notion to compare the expressive powers of DTDs and DMS,
DTDs are strictly more expressive than DMS.


We use the comparison on the XMark~\cite{SWKCMB02} benchmark and the
University of Amsterdam XML Web Collection~\cite{GrMa11}. We find that
all 77 regular expressions of the XMark benchmark are captured by DMS
rules, and among them 76 by MS rules. As for the DTDs found in the
University of Amsterdam XML Web Collection, $84\%$ of regular
expressions (with repetitions discarded) are captured by DMS rules and
among them $74.6\%$ by MS rules. Moreover, $55.5\%$ of full DTDs in
the collection are captured by DMS and among them $45.8\%$ by MS. Note
that these figures should be interpreted with caution, as we do not
know which of the considered DTDs were indeed intended for
data-centric applications. We believe, however, that these numbers
give a generally positive answer to the question of how much of the
expressive power of DTDs the proposed schema formalisms, DMS and MS,
retain.

\section{Conclusions and future work}\label{sec:conclusions}


We have studied the computational properties and the expressive power
of new schema formalisms, designed for unordered XML: the disjunctive
multiplicity schema (DMS) and its restriction, the disjunction-free
multiplicity schema (MS). DMS and MS can be seen as DTDs using
restricted classes of regular expressions and interpreted under
commutative closure to define unordered content models. These
restrictions allow on the one hand to maintain a relatively low
computational complexity of basic static analysis problems while
retaining a significant part of expressive power of DTDs. 

An interesting question remains open: are these the most general
restrictions that allow to maintain a low complexity profile? We
believe that the answer to this question is negative and intend to identify
new practical features that could be added to DMS and MS. One such
feature are \emph{numeric occurrences}~\cite{KiTu07} of the form
$a^{[n,m]}$ that generalize multiplicities by requiring the presence of
at least $n$ and no more than $m$ elements $a$. It would also be
interesting to see to what extent our results can be used to propose
hybrid schemas that allow to define ordered content for some elements
and unordered model for others.




 \bibliographystyle{plain}
 \bibliography{schema}

\end{document}